\newtheorem{theorem}{Theorem}[section]
\newtheorem*{theorem*}{Theorem}
\newtheorem*{problem*}{Question}
\newtheorem*{CRT}{Chinese Remainder Theorem}
\newtheorem{lemma}[theorem]{Lemma}
\newtheorem{proposition}[theorem]{Proposition}
\newtheorem{corollary}[theorem]{Corollary}
\theoremstyle{definition}
\theoremstyle{nonumberplain}
\newtheorem{claim}{Claim}
\newcommand{\Con}{\mathop{\mathrm{Con}}}
\newcommand{\Conn}{\mathop{\mathbf{Con}}}
\newcommand{\Clo}{\mathop{\mathrm{Clo}}}
\newcommand{\dom}{\mathop{\mathrm{dom}}\nolimits}
\newcommand{\var}{\mathop{\mathrm{var}}}
\newcommand{\varsets}{\mathop{\mathrm{varsets}}}
\DeclareMathOperator{\con}{\mathrm{Con}}
\newcommand{\prob}[1]{\textup{\textsf{#1}}}
\newcommand{\path}{\rightsquigarrow}
\newcommand{\cross}[4]{{{{#2} \, {#4}} \choose {{#1} \, {#3}}}}
\renewcommand{\mod}[1]{\mathrel{(\text{\rm mod } #1)}}
\begin{document}

\title{The complexity of the Chinese Remainder Theorem}

\author{Miguel Campercholi, Diego Castaño, Gonzalo Zigarán}

\maketitle

\begin{abstract}
The Chinese Remainder Theorem for the integers says that every system of congruence equations is solvable as long as the system satisfies an obvious necessary condition. This statement can be generalized in a natural way to arbitrary algebraic structures using the language of Universal Algebra. In this context, an algebra is a structure of a first-order language with no relation symbols, and a congruence on an algebra is an equivalence relation on its base set compatible with its fundamental operations. A tuple of congruences of an algebra is called a Chinese Remainder tuple if every system involving them is solvable. In this article we study the complexity of deciding whether a tuple of congruences of a finite algebra is a Chinese Remainder tuple. This problem, which we denote \prob{CRT}, is easily seen to lie in \textup{coNP}. We prove that it is actually \textup{coNP}-complete and also show that it is tractable when restricted to several well-known classes of algebras, such as vector spaces and distributive lattices.
The polynomial algorithms we exhibit are made possible by purely algebraic characterizations of Chinese Remainder tuples for algebras in these classes, which constitute interesting results in their own right. Among these, an elegant characterization of Chinese Remainder tuples of finite distributive lattices stands out. Finally, we address the restriction of \prob{CRT} to an arbitrary equational class $\mathcal{V}$ generated by a two-element algebra. Here we establish an (almost) dichotomy by showing that, unless $\mathcal{V}$ is the class of semilattices, the problem is either \text{coNP}-complete or tractable.
\end{abstract}

\section{Introduction}

A fundamental property of the integers is the so-called Chinese Remainder Theorem, which can be stated as follows.

\begin{CRT}
Let $m_1,\ldots,m_k$ be positive integers, and let $a_1,\ldots,a_k$ be arbitrary integers such that
\[
a_i \equiv a_j \mod{g_{ij}} \text{ for all } i,j \in \{1,\ldots,k\},
\]
where $g_{ij} := \gcd(m_i,m_j)$. Then, there is an integer $x$ such that
\begin{align*}
x & \equiv a_1 \mod{m_1}, \\
x & \equiv a_2 \mod{m_2}, \\
& \hspace{1.7mm} \vdots \\
x & \equiv a_k \mod{m_k}.
\end{align*}
\end{CRT}

This statement can be recast in the language of Universal Algebra. For this, recall that an {\em algebra} is just a structure of a first-order language with no relation symbols, and a {\em congruence} on an algebra is an equivalence relation on its base set compatible with its fundamental operations. Since the set of all congruences of an algebra $\mathbf{A}$ is closed under arbitrary intersections, it can be endowed with a complete lattice structure; the resulting lattice is usually denoted by $\Conn \mathbf{A}$. The join $\theta \vee \delta$ of two congruences in $\Conn \mathbf{A}$ is the transitive closure of $\theta \cup \delta$. In this setting, let us say that a tuple of congruences $\theta_1,\ldots,\theta_k$ of an algebra $\mathbf{A}$ is a {\em Chinese Remainder tuple} (CR tuple for short) of $\mathbf{A}$ provided that for all $a_1,\ldots,a_k$ from $\mathbf{A}$ such that $\langle a_i,a_j\rangle \in \theta_i \vee \theta_j$ for $i,j \in \{1,\ldots,k\}$ there is $a$ from $\mathbf{A}$ with $\langle a,a_i\rangle \in \theta_i$ for $i \in \{1,\ldots,k\}$. Thus, considering the integers as an algebra $\mathbf{Z}$ in the language $\{+,\cdot,-,0\}$, its congruences are exactly the equivalence relations modulo positive integers, and the original Chinese Remainder Theorem says that every tuple of congruences of $\mathbf{Z}$ is a CR tuple. It turns out that this property of the ring of integers is actually captured by two particular conditions on its congruences: they form a distributive lattice, and they permute, that is, the join of any pair of congruences is their relational composition. Any algebra whose congruences enjoy these two properties is called {\em arithmetic}, and Pixley showed that an algebra $\mathbf{A}$ is arithmetic if and only if every tuple of congruences of $\mathbf{A}$ is a CR tuple (see \cite{Pixley72-CompletenessArithmeticalAlg}). Since every ring has permutable congruences, it follows that any congruence-distributive ring is arithmetic, thus, Pixley's theorem yields the Chinese Remainder Theorem for $\mathbf{Z}$. Other examples of arithmetic algebras are Boolean algebras, or more generally any algebra in a discriminator variety \cite{BurSan81-Book-CourseUnivAlg}, and residuated lattices \cite{GalatosEtAl07-Book-RL} (e.g., Heyting algebras). On the other hand, arithmeticity is a very strong condition on an algebra, and there are many well-known structures that fail to be arithmetic and, thus, have tuples of congruences that are not CR tuples. For instance, in the three-element lattice with $0 < \frac{1}{2} < 1$ the pair of congruences given by the partitions $\{\{0\}, \{\frac{1}{2},1\}\}$ and $\{\{0,\frac{1}{2}\},\{1\}\}$ is not a CR tuple.

A natural computational problem that arises from this discussion is the following.

\begin{center}
\fbox{
\begin{minipage}{10.6cm}
\framebox{\prob{CRT}} 

\medskip

{\em Instance}: A finite algebra $\mathbf{A}$ and $\theta_1,\ldots,\theta_k$ congruences of $\mathbf{A}$.

\medskip

{\em Question}: Is $\langle \theta_1,\ldots,\theta_k\rangle$ a CR tuple of $\mathbf{A}$?
\end{minipage}
}
\end{center}

In this paper we study the computational complexity of this problem and its restrictions to several familiar classes of algebras. We prove that \prob{CRT} is \textup{coNP}-complete (see Section \ref{SEC: Hardness}). We also show that there are polynomial-time algorithms that decide \prob{CRT} for inputs in the following three classes of algebras: vector spaces (see Section \ref{SUBSEC: Vector spaces}), distributive nearlattices (see Section \ref{SUBSEC: Distributive nearlattices}) and algebras belonging to some dual discriminator variety (see Section \ref{SUBSEC: Dual discriminator varieties}). In each of the latter two cases the algorithm is powered by a purely algebraic characterization of CR tuples, which constitutes an interesting result in its own right. In particular, we obtain elegant characterizations of CR tuples in finite distributive lattices  and Tarski algebras (Corollaries \ref{CORO: caract CR tuple for DIST LAT} and \ref{CORO: caract CR tuple for TARSKI}). The article concludes (Section \ref{SEC: Almost Dichotomy}) with the study of all varieties (i.e., classes defined by identities) generated by a two-element algebra. We were able to show that the restriction of \prob{CRT} to each of these varieties is either \textup{coNP}-complete or in \textup{P} for all cases but for the variety of semilattices.

\section{Preliminaries}
\label{SEC: Preliminaries}

In this section we fix some notation and terminology. We assume the reader is familiar with the basic concepts of Universal Algebra and Complexity Theory. For undefined notions see, e.g., \cite{BurSan81-Book-CourseUnivAlg,Goldreich08-BookComplexity}.

An {\em algebra} is a model of a first-order language with no relation symbols. Algebras are usually denoted by boldface capital letters such as $\mathbf{A}, \mathbf{B}, \mathbf{C}$, etc., whereas their underlying sets are denoted by the corresponding non-bold letters, e.g., $A$ is the universe of $\mathbf{A}$.

Let $\mathbf{A}$ be an algebra. A {\em congruence} of $\mathbf{A}$ is an equivalence relation $\theta$ on $A$ compatible with the operations of $\mathbf{A}$, that is, for each positive integer $n$, each $n$-ary function symbol $f$ and elements $a_1,\ldots,a_n,b_1,\ldots,b_n \in A$ such that $\langle a_1,b_1\rangle, \ldots, \langle a_n,b_n\rangle \in \theta$ we have that $\langle f^\mathbf{A}(a_1,\ldots,a_n),f^\mathbf{A}(b_1,\ldots,b_n)\rangle \in \theta$. Since the set $\Con \mathbf{A}$ of all congruences of $\mathbf{A}$ is closed under arbitrary intersections, it has an associated complete lattice denoted by $\Conn \mathbf{A}$. The least element in $\Conn \mathbf{A}$, is the equality relation and is denoted by $\Delta_A$ (or just $\Delta$ if $A$ is clear from the context). The join $\theta \vee \delta$ of two congruences in $\Conn \mathbf{A}$ is the transitive closure of $\theta \cup \delta$.

A {\em congruence system} on $\mathbf{A}$ is a tuple $\langle \theta_1,\ldots,\theta_k,a_1,\ldots,a_k\rangle$ where $\theta_1,\ldots,\theta_k \in \Con \mathbf{A}$, $a_1,\ldots,a_k \in A$ and $\langle a_i,a_j\rangle \in \theta_i \vee \theta_j$ for all $i,j \in \{1,\ldots,k\}$. A {\em solution} to the congruence system $\langle \theta_1,\ldots,\theta_k,a_1,\ldots,a_k\rangle$ is an element $a \in A$ satisfying $\langle a,a_i\rangle \in \theta_i$ for all $i \in \{1,\ldots,k\}$. A tuple of congruences $\langle \theta_1,\ldots, \theta_k\rangle$ is said to be a {\em Chinese Remainder tuple} (CR tuple for short) of $\mathbf{A}$ provided that every system $\langle \theta_1,\ldots,\theta_k,a_1,\ldots,a_k\rangle$ has a solution.

In this article we investigate the complexity of the following computational problem.
\begin{center}
\fbox{
\begin{minipage}{10.6cm}
\framebox{\prob{CRT}} 

\medskip

{\em Instance}: A finite algebra $\mathbf{A}$ and $\theta_1,\ldots,\theta_k$ congruences of $\mathbf{A}$.

\medskip

{\em Question}: Is $\langle \theta_1,\ldots,\theta_k\rangle$ a CR tuple of $\mathbf{A}$?
\end{minipage}
}
\end{center}

Throughout this work we assume a fixed encoding of finite algebras and congruences as strings of a finite alphabet. Since any reasonable choice of such an encoding makes no difference in the arguments below, we do not take the trouble to go into the low-level details, and the exposition is entirely presented at an abstract level.

All polynomial reductions in this article are Karp reductions, that is, many-one poly-time computable reductions.

Here is an easy observation.

\begin{lemma}
\prob{CRT} is in \textup{coNP}.
\end{lemma}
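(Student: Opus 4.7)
The plan is to exhibit, for every negative instance, a short certificate that a polynomial-time verifier can check. By definition, $\langle\theta_1,\ldots,\theta_k\rangle$ fails to be a CR tuple precisely when there exist elements $a_1,\ldots,a_k\in A$ with $\langle a_i,a_j\rangle\in\theta_i\vee\theta_j$ for all $i,j$ such that no $a\in A$ satisfies $\langle a,a_i\rangle\in\theta_i$ for all $i$. So my nondeterministic machine would guess a tuple $\langle a_1,\ldots,a_k\rangle\in A^k$, which has size polynomial in the input.

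The verification has two parts. First, check compatibility: for each pair $i,j$, test whether $\langle a_i,a_j\rangle$ lies in $\theta_i\vee\theta_j$. Since $\theta_i\vee\theta_j$ is the transitive closure of $\theta_i\cup\theta_j$ as a relation on the finite set $A$, this can be computed in polynomial time (e.g., by a standard reachability algorithm on the graph whose edges are the pairs in $\theta_i\cup\theta_j$). There are $O(k^2)$ such checks. Second, check the absence of a solution: iterate over all $a\in A$ and verify that some $\langle a,a_i\rangle\notin\theta_i$. This is $O(|A|\cdot k)$ membership checks in the given congruences, each of which is trivial given the encoding.

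Both parts run in time polynomial in the length of the input encoding of $\mathbf{A}$ and $\theta_1,\ldots,\theta_k$, so the complement of \prob{CRT} lies in \textup{NP}, i.e., \prob{CRT} is in \textup{coNP}. There is no real obstacle here; the only thing to be careful about is that the encoding of the $\theta_i$ is assumed to be reasonable, which was stipulated earlier in the Preliminaries, so that membership tests and the transitive closure computation are genuinely polynomial in the input length.
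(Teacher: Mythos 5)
Your proof is correct and follows essentially the same approach as the paper's: guess a tuple $\langle a_1,\ldots,a_k\rangle\in A^k$ as the certificate, then verify in polynomial time that it forms a system and that $a_1/\theta_1\cap\ldots\cap a_k/\theta_k=\emptyset$. You simply spell out in more detail how the transitive-closure and emptiness checks are carried out, which the paper leaves implicit.
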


\begin{proof}
Note that the fact that $\langle \theta_1, \ldots, \theta_k\rangle$ is not a CR tuple of an algebra $\mathbf{A}$ is witnessed by a tuple $\langle a_1,\ldots,a_k\rangle \in A^k$ such that $\langle \theta_1, \ldots, \theta_k, a_1, \ldots, a_k\rangle$ is a system and $a_1/\theta_1 \cap \ldots \cap a_k/\theta_k = \emptyset$. Since both these tests can be carried out in polynomial time with respect to the input, we are done.
\end{proof}

As we shall see in the next section, \prob{CRT} is actually complete for \text{coNP}. Much of this article is concerned with identifying classes of algebras for which \prob{CRT} is tractable. To this end it is helpful to introduce the following notation. Given a class of algebras $\mathcal{K}$, we write \prob{CRT$|_\mathcal{K}$} to denote the restriction of \prob{CRT} to instances where the algebra $\mathbf{A}$ is in $\mathcal{K}$. A particular type of class is prominent in the sequel. Recall that a {\em variety} is a class of algebras in the same language axiomatized by identities.

We close this section with a basic fact about CR tuples whose easy proof is left to the reader. Given congruences $\delta \subseteq \theta$ of an algebra $\mathbf{A}$, recall that $\theta/\delta := \{\langle a/\delta,b/\delta\rangle: \langle a,b\rangle \in \theta\}$ is a congruence of $\mathbf{A}/\delta$.

\begin{lemma} \label{LEMA: CRT <-> CRT diagonal}
Let $\mathbf{A}$ be an algebra, and let $\theta_1,\ldots, \theta_k$ be congruences of $\mathbf{A}$. Put $\delta := \theta_1 \cap \ldots \cap \theta_k$. The following are equivalent: 
\begin{enumerate}[\rm (a)]
\item $\langle \theta_1, \ldots, \theta_k\rangle$  is a CR tuple of $\mathbf{A}$;
\item $\langle \theta_1/\delta, \ldots, \theta_k/\delta\rangle$ is a CR tuple of $\mathbf{A}/\delta$.
\end{enumerate}
\end{lemma}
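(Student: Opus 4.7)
The plan is to exploit the standard correspondence between congruences of $\mathbf{A}/\delta$ and congruences of $\mathbf{A}$ containing $\delta$, checking that it behaves well with both the join operation and the notion of a system.

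First I would record two preliminary facts. Since $\delta \subseteq \theta_i$ for every $i$, each $\theta_i/\delta$ is a congruence of $\mathbf{A}/\delta$, and the equivalence
\[
\langle a,b\rangle \in \theta_i \iff \langle a/\delta,\,b/\delta\rangle \in \theta_i/\delta
\]
holds (the left-to-right direction is immediate, and the right-to-left direction uses $\delta \subseteq \theta_i$). Second, joins are preserved: $(\theta_i \vee \theta_j)/\delta = \theta_i/\delta \vee \theta_j/\delta$. This follows from the description of joins as transitive closures of unions, together with the previous equivalence applied to $\theta_i \vee \theta_j$ (which also contains $\delta$).

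For (a) $\Rightarrow$ (b), I would take an arbitrary system $\langle \theta_1/\delta,\ldots,\theta_k/\delta,\bar a_1,\ldots,\bar a_k\rangle$ on $\mathbf{A}/\delta$, choose representatives $a_i \in A$ with $\bar a_i = a_i/\delta$, and use the two preliminary facts to see that $\langle \theta_1,\ldots,\theta_k,a_1,\ldots,a_k\rangle$ is a system on $\mathbf{A}$. A solution $a$ then yields $a/\delta$ as a solution of the original system in $\mathbf{A}/\delta$. For (b) $\Rightarrow$ (a), I would go the other way: from a system $\langle \theta_1,\ldots,\theta_k,a_1,\ldots,a_k\rangle$ on $\mathbf{A}$, project to obtain a system on $\mathbf{A}/\delta$, use (b) to get a solution $b/\delta \in A/\delta$, and lift it to a solution $b \in A$ of the original system via the preliminary equivalence.

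There is no real obstacle here; the whole argument is bookkeeping around the correspondence theorem, with the only mildly nontrivial ingredient being the preservation of joins under the quotient. I would therefore keep the write-up short, emphasizing the two preliminary facts and letting the two implications fall out.
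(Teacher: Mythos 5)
The paper omits the proof of this lemma, describing it as ``a basic fact about CR tuples whose easy proof is left to the reader,'' and your argument is exactly the intended one: it correctly records the two needed facts (the biconditional $\langle a,b\rangle \in \theta_i \iff \langle a/\delta,b/\delta\rangle \in \theta_i/\delta$ for $\theta_i \supseteq \delta$, and the preservation of joins under $\theta \mapsto \theta/\delta$) and uses them to transport systems and solutions back and forth between $\mathbf{A}$ and $\mathbf{A}/\delta$. Your write-up is correct and complete.
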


\section{\prob{CRT} is \textup{coNP}-complete}
\label{SEC: Hardness}

In this section we establish the \textup{coNP}-hardness of \prob{CRT} by providing a reduction from a special form of \prob{3SAT} to the complement of \prob{CRT}.

Given a propositional formula $\varphi$ let $\var(\varphi)$ be the set of variables occurring in $\varphi$. If $\varphi$ is in CNF, we write $\varsets(\varphi)$ for the set $\{\var(\gamma): \gamma \text{ is a clause of } \varphi\}$.

Let \prob{3SAT$'$} be the restriction of \prob{3SAT} to input formulas $\varphi$ satisfying the following: 
\begin{enumerate}[(C1)]
\item each clause of $\varphi$ has three pairwise different variables,
\item \label{C: al menos 5} $|\varsets(\varphi)| \geq 5$,
\item each variable of $\varphi$ occurs in at least three distinct sets in $\varsets(\varphi)$.
\end{enumerate}
It is an easy exercise to show that \prob{3SAT$'$} is \textup{NP}-complete.

We are now ready to present the main result of this section.

\begin{theorem} \label{TEO: hardness para sets}
\prob{CRT} is \textup{coNP}-complete.
\end{theorem}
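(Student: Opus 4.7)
The plan is to use the preceding lemma, which already places \prob{CRT} in \textup{coNP}, and establish \textup{coNP}-hardness by a Karp reduction from \prob{3SAT$'$} to the complement of \prob{CRT}. Given a formula $\varphi$ with clause-variable-sets $\varsets(\varphi) = \{S_1, \ldots, S_k\}$, I would construct in polynomial time a finite algebra $\mathbf{A}_\varphi$ and a tuple of congruences $\langle \theta_1, \ldots, \theta_k\rangle$ on it, with one congruence $\theta_i$ per set $S_i$, such that $\varphi$ is satisfiable if and only if $\langle \theta_1, \ldots, \theta_k\rangle$ \emph{fails} to be a CR tuple of $\mathbf{A}_\varphi$.

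The intended encoding is that elements of $\mathbf{A}_\varphi$ represent local satisfying truth assignments on clauses, while the $\theta_i$-class $a_i/\theta_i$ codes all global assignments whose restriction to $S_i$ agrees with $a_i$. A would-be witness $\langle a_1, \ldots, a_k\rangle$ to the failure of the CR property then has the following meaning: for every pair $i,j$, the local choices $a_i,a_j$ are compatible on the shared variables $S_i\cap S_j$---this is precisely $\langle a_i,a_j\rangle \in \theta_i\vee \theta_j$---yet no single global assignment extends them all, which is precisely $\bigcap_i a_i/\theta_i = \emptyset$. In one direction a satisfying assignment $\sigma$ of $\varphi$ would supply a witness by taking $a_i$ to be the local assignment induced by $\sigma$ on $S_i$; in the other direction a CR-violating witness with a mild additional property built into the design (each $a_i$ must code a satisfying local assignment to its clause) glues together into a total satisfying assignment via the pairwise compatibilities.

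The main technical obstacle is to specify the signature and fundamental operations of $\mathbf{A}_\varphi$ so that, simultaneously, (a) the intended equivalence relations $\theta_i$ are genuine congruences; (b) the pairwise joins $\theta_i\vee\theta_j$ are exactly coarse enough to permit merging on $S_i\cap S_j$ but no coarser; and (c) no spurious global solutions are introduced by the algebraic structure beyond those coming from satisfying assignments. Conditions (C1)--(C3) in the definition of \prob{3SAT$'$} are used precisely here: (C1) makes every $\theta_i$ correspond uniformly to three distinct variables; (C2) ensures the family of congruences is large enough that the join lattice does not degenerate; and (C3) guarantees that every variable is constrained by enough of the $\theta_i$'s to be uniquely recoverable from a witness, ruling out local-but-not-global inconsistencies that are not caused by unsatisfiability of $\varphi$.

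Once the algebra is fixed, the forward and backward directions of the equivalence ``$\varphi$ satisfiable $\Leftrightarrow$ $\langle\theta_1,\ldots,\theta_k\rangle$ is not CR'' reduce to routine clause-by-clause verifications, and the reduction is polynomial because $|A_\varphi|$ and the explicit descriptions of the $\theta_i$ are all polynomial in $|\varphi|$. Combined with the lemma, this yields the theorem.
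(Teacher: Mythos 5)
Your high-level strategy does match the paper's: reduce \prob{3SAT$'$} to the complement of \prob{CRT}, introduce one congruence per set in $\varsets(\varphi)$, and arrange matters so that witnesses to the failure of the CR property correspond to satisfying assignments. But there are two substantial gaps between your sketch and a working proof.

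First, the obstacle you flag as ``the main technical obstacle''---designing the signature and fundamental operations of $\mathbf{A}_\varphi$ so that the $\theta_i$'s are genuine congruences---is a misdirection, and the paper sidesteps it entirely with a key insight you are missing: take $\mathbf{A}_\varphi$ to be a bare set, i.e.\ an algebra with \emph{no} operations. Then every equivalence relation is trivially a congruence, so there is nothing to verify about (a). (This also immediately gives \textup{coNP}-hardness for semigroups via the left-zero product, which is the corollary that follows.) Without this realization, it is far from clear that an algebra with nontrivial operations satisfying your desiderata (a)--(c) even exists, and you never exhibit one.

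Second, the universe and the congruences are genuinely subtle, not ``routine.'' You say the elements of $\mathbf{A}_\varphi$ are local satisfying assignments; but if the universe were just the set $A = \bigcup_i A_i$ of partial assignments, there is no reasonable way to define $\theta_j$ on an element whose domain is $V_i \ne V_j$, and the resulting relations would not even be equivalence relations. The paper's universe is instead the collection of one- and two-element sets of \emph{pairwise compatible} partial assignments, with $\theta_i$ identifying $s$ and $t$ exactly when $s \cap t = \{a\}$ for some $a \in A_i$. This two-level construction is the real work. The correctness argument then rests on a dichotomy between ``coherent'' systems (always unsolvable, because a solution would have to contain $k \geq 5$ distinct local assignments, impossible for a set of size at most two---this is precisely where (C2) bites) and ``incoherent'' systems (always solvable, with (C3) used to pin down the solution element). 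Your gloss on (C1)--(C3) gestures vaguely in this direction but does not identify the actual mechanism, and as written the argument does not complete.
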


\begin{proof}
We prove that \prob{3SAT$'$} is polynomially reducible to the complement of \prob{CRT}.

Given an instance $\varphi$ of \prob{3SAT$'$} we construct a set $S$ and equivalence relations $\theta_1, \ldots, \theta_k$ on $S$ such that $\langle \theta_1,\ldots,\theta_k\rangle$ is a CR tuple if and only if $\varphi$ is not satisfiable; more precisely, such that a system $\langle \theta_1,\ldots,\theta_k,s_1,\ldots,s_k\rangle$ is not solvable if and only if $\langle s_1,\ldots,s_k\rangle$ corresponds to an assignment making $\varphi$ true.

Put $\{V_1,\ldots,V_k\} := \varsets(\varphi)$; note that $k \geq 5$ by (C\ref{C: al menos 5}). For $i \in \{1,\ldots,k\}$ define
\[
A_i := \{a \in \{0,1\}^{V_i}: \gamma(a) = 1 \text{ for all clauses $\gamma$ in $\varphi$ with $\var(\varphi) = V_i$}\},
\]
that is, $A_i$ is the set of assignments to the variables in $V_i$ that make all clauses of $\varphi$ whose variables are in $V_i$ true.   Next, put
\[
A := \bigcup_{i=1}^k A_i.
\] Given $a \in A$ we write $\dom(a)$ for the set of variables assigned by $a$; note that $\dom(a) \in \varsets(\varphi)$. For a set $V$ of variables and $a,b \in A$ we say that $a$ and $b$ are {\em $V$-compatible}, in symbols $a \uparrow_V b$, if they assign the same values to the variables in $V \cap \dom(a) \cap \dom(b)$. We simply write $a \uparrow b$ when $a$ and $b$ agree on all variables in $\dom(a) \cap \dom(b)$; in this case, we say that $a$ and $b$ are {\em compatible}. We are ready to define our base set; let
\[
S := \{\{a,b\} \subseteq A: a \uparrow b\}.
\]
Observe that $\{a\} \in S$ for every $a \in A$. In addition, if $\{a,b\} \in S$ and $a \ne b$, then $\dom(a) \ne \dom(b)$.

The final phase in the construction is to define the congruences. For each $i \in \{1,\ldots,k\}$ let
\begin{equation} \label{EQ: def thetas}
\theta_i := \{\langle s,t\rangle \in S^2: s \cap t = \{a\} \text{ for some }a \in A_i\} \cup \{\langle s,s\rangle: s \in S\}.
\end{equation}
We leave it to the reader to check that $\theta_i$ is in fact an equivalence relation. This finishes the definition of our reduction. Again we leave to the reader the straightforward verification that the reduction is many-one poly-time. 

We prove next that the reduction works. The argument is split into several claims, and the theorem follows from Claims \ref{CL: sat -> hay no soluble} and \ref{CL: no sol -> sat}.

We say that a system $\langle \theta_1,\ldots,\theta_k,s_1,\ldots,s_k\rangle$ is {\em coherent} provided that $s_i \cap A_i \neq \emptyset$ for every $i \in \{1,\ldots,k\}$; a system is {\em incoherent} if it is not coherent. 

\begin{claim} \label{CL: coherent -> no soluble}
Coherent systems are not solvable. 
\end{claim}

For the sake of contradiction suppose $\langle \theta_1,\ldots,\theta_k,s_1,\ldots,s_k\rangle$ is a coherent system with a solution $s$.  For $i \in \{1,\ldots,k\}$ take $a_i$ as the (unique) element in $s_i \cap A_i$, and observe that, since $s \mathrel{\theta_i} s_i$, it follows by \eqref{EQ: def thetas} that $a_i \in s$. But, as the $a_i$'s are all different, we have that $|s| \geq k \geq 5$, which is clearly not possible. This finishes the proof of the claim.

\begin{claim} \label{CL: sat -> hay no soluble}
If $\varphi$ is satisfiable, then there is an unsolvable system.
\end{claim}

By Claim \ref{CL: coherent -> no soluble} it suffices to check that, if $a$ is an assignment that makes $\varphi$ true, then
\begin{equation} \label{EQ: sistema}
\langle \theta_1,\ldots,\theta_k,\{a|_{V_1}\},\ldots,\{a|_{V_k}\}\rangle,
\end{equation}
where $a|_{V_i}$ stands for the restriction of $a$ to $V_i$, is a a coherent system. Observe that for $i,j \in \{1,\ldots,k\}$ we have $a|_{V_i} \uparrow a|_{V_j}$, and thus
\[
\{a|_{V_i}\} \mathrel{\theta_i} \{a|_{V_i},a|_{V_j}\} \mathrel{\theta_j} \{a|_{V_j}\}.
\]
Hence \eqref{EQ: sistema} is a system which is clearly coherent.

For the next leg of the proof it is convenient to extend the compatibility relation to $S$. Given $s,t \in S$ and a set of variables $V$ we write $s \uparrow_{V} t$ if $a \uparrow_V b$ for all $a \in s$ and  $b \in t$.

Let us record two properties of the joins of the $\theta_j$'s needed in the sequel. Both facts follows routinely from the definitions, and we leave their proofs to the reader.

\begin{claim} \label{CL: props thetas}
Let $s,t \in S$ and $i,j \in \{1,\ldots,k\}$; the following holds:
\begin{enumerate}
\item[\textup{(a)}] \label{CL: props thetas - inciso singletos supremo} If $s \cap A_i = \emptyset$, $s \ne t$ and $s \mathrel{(\theta_i \vee \theta_j)} t$, then $s \cap A_j \ne \emptyset$.

\item[\textup{(b)}] If $s \mathrel{(\theta_i \vee \theta_j)} t$, then $s \uparrow_{V_i \cap V_j} t$.
\end{enumerate}
\end{claim}

Next, we establish the key property of incoherent systems.

\begin{claim} \label{CL: incoh syst are solvable}
Incoherent systems are solvable.
\end{claim}

Suppose $\langle \theta_1,\ldots,\theta_k,s_1,\ldots,s_k\rangle$ is an incoherent system, that is, there is $i \in \{1,\ldots,k\}$ such that $s_i \cap A_i = \emptyset$. We may assume without loss that $s_1 \cap A_1 = \emptyset$. Note that Claim \ref{CL: props thetas}.(a) implies that $s_1 \cap A_\ell \ne \emptyset$ for any $\ell \in \{1,\ldots,k\}$ such that $s_\ell \ne s_1$. Thus,
\begin{equation} \label{EQ: a lo sumo dos}
\text{the set $I := \{\ell \in \{1,\ldots,k\}: s_\ell \ne s_1\}$ has at most two elements. }
\end{equation}

We prove that $s_1$ is a solution to the system. Clearly it suffices to show that $s_1 \mathrel{\theta_i} s_i$ for $i \in I$. For clarity of exposition we provide the proof for the concrete case $I = \{2,3\}$. However, it should be clear that the arguments below apply seamlessly to the general case. 

We start by observing that there are $a_2 \in A_2$ and $a_3 \in A_3$ such that
\begin{equation} \label{EQ: s1 = a2,a3}
s_1 = \{a_2,a_3\}.
\end{equation}
Also, in view of \eqref{EQ: a lo sumo dos}, the system is as follows:
\begin{equation} \label{EQ: sistema incoherente}
\langle \theta_1,\theta_2,\theta_3,\theta_4,\ldots,\theta_k,s_1,s_2,s_3,s_1,\ldots,s_1\rangle.
\end{equation}
So, to prove that $s_1$ is a solution, it suffices to establish that $s_1 \mathrel{\theta_2} s_2$ and $s_1 \mathrel{\theta_3} s_3$.  We first show that $s_2 \cap A_2 \ne \emptyset$. Aiming for a contradiction suppose this does not hold. Then, invoking Claim \ref{CL: props thetas}.(a), we have $s_2 \cap A_\ell \ne \emptyset$ for any $\ell$ such that $s_2 \ne s_\ell$. Looking at \eqref{EQ: sistema incoherente} we see that there are at least $k-2$ indices $\ell$ such that $s_2 \ne s_\ell$. But this is not possible because it would entail that $|s_2| \geq 3$ since $k \geq 5$. Hence, 
\begin{equation} \label{EQ: b2 in s2}
\text{there is $b_2 \in s_2 \cap A_2$.}
\end{equation}
We claim that $b_2 = a_2$. In fact, suppose $p$ is a variable in $V_2$. By the assumption (C3), there is $V_j \in \varsets(\varphi)$ with $j \ne 2,3$ such that $p \in V_j$. Now, the system condition says that $s_1 = s_j \mathrel{(\theta_2 \vee \theta_j)} s_2$. Now, Claim \ref{CL: props thetas}.(b) yields $s_1 \uparrow_{V_2 \cap V_j} s_2$, and as $p \in V_2 \cap V_j$ we have $a_2(p) = b_2(p)$. Since $p$ is an arbitrary variable in $V_2$, we conclude that $a_2 = b_2$. This fact in combination with \eqref{EQ: s1 = a2,a3} and \eqref{EQ: b2 in s2} produces $s_1 \cap s_2 = \{a_2\}$, which in turn says that $s_1 \mathrel{\theta_2} s_2$.

Obviously, the same line of reasoning shows that $s_1 \mathrel{\theta_3} s_3$, and hence we conclude that $s_1$ is in fact a solution to the system.

The remaining task to complete the proof of the theorem is to establish the following.

\begin{claim} \label{CL: no sol -> sat}
If there is an unsolvable system, then $\varphi$ is satisfiable.
\end{claim}

Suppose $\langle \theta_1,\ldots,\theta_k,s_1,\ldots,s_k\rangle$ has no solution. By Claim \ref{CL: incoh syst are solvable}, the system must be coherent, and thus there is $a_\ell \in s_\ell \cap A_\ell$ for each $\ell \in \{1,\ldots,k\}$. Note that Claim \ref{CL: props thetas}.(b) implies $a_i \uparrow_{V_i \cap V_j} a_j$ for $i,j \in \{1,\ldots,k\}$, which says that the assignment $a\colon \var(\varphi) \to \{0,1\}$ given by
\[
a(p) := \begin{cases} a_1(p) & \text{if } p \in V_1, \\ \quad \vdots \\ a_k(p) & \text{if }p \in V_k, \end{cases}
\]
is well defined. Finally, since for each $\ell \in \{1,\ldots,k\}$ we have $a_\ell \in A_\ell$, the partial assignment $a_\ell$ makes each clause of $\varphi$ with variables in $V_\ell$ true. Hence, the assignment $a$ makes $\varphi$ true.
\end{proof}

Recall that a {\em semigroup} is an algebra in the language $\{\cdot\}$ satisfying the associativity identity $x \cdot (y \cdot z) \approx (x \cdot y) \cdot z$. Let $\mathcal{SG}$ denote the class of semigroups. An immediate consequence of the previous theorem is the following.

\begin{corollary}
\prob{CRT$|_\mathcal{SG}$} is \textup{coNP}-complete.
\end{corollary}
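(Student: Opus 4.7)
The plan is to piggy-back on the proof of Theorem~\ref{TEO: hardness para sets}. That proof already reduces \prob{3SAT$'$} to the complement of \prob{CRT} by constructing, from a formula $\varphi$, a set $S$ together with equivalence relations $\theta_1,\ldots,\theta_k$ on $S$ such that $\langle \theta_1,\ldots,\theta_k\rangle$ fails to be a CR tuple iff $\varphi$ is satisfiable. The only missing ingredient to land inside $\mathcal{SG}$ is an associative binary operation on $S$ for which each $\theta_i$ is a congruence.

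The cleanest way is to endow $S$ with a constant (null) multiplication. Fix any element $0\in S$ and define $x\cdot y := 0$ for all $x,y\in S$. Associativity is immediate, since $(x\cdot y)\cdot z = 0 = x\cdot(y\cdot z)$, so $\mathbf{S}:=(S,\cdot)$ is a semigroup. Moreover, because the operation is constant, any equivalence relation on $S$ is automatically compatible with it: if $\langle a,a'\rangle,\langle b,b'\rangle\in\theta$ then $\langle a\cdot b,a'\cdot b'\rangle = \langle 0,0\rangle\in\theta$. In particular each $\theta_i$ is a congruence of $\mathbf{S}$, and the joins $\theta_i\vee\theta_j$ computed in $\Conn\mathbf{S}$ coincide with the joins computed as equivalence relations on $S$ (both are just transitive closures of $\theta_i\cup\theta_j$). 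Consequently, a tuple $\langle s_1,\ldots,s_k\rangle$ witnesses a system for $\langle \theta_1,\ldots,\theta_k\rangle$ in $\mathbf{S}$ iff it witnesses one in the set-theoretic sense used in Theorem~\ref{TEO: hardness para sets}, and solvability is the very same condition.

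Thus the composition of the reduction from Theorem~\ref{TEO: hardness para sets} with the above construction is a many-one poly-time reduction from \prob{3SAT$'$} to the complement of \prob{CRT$|_\mathcal{SG}$}, establishing \textup{coNP}-hardness. Membership in \textup{coNP} is inherited from the general case: \prob{CRT$|_\mathcal{SG}$} is a restriction of \prob{CRT}, which lies in \textup{coNP} by Lemma~\ref{LEMA: CRT <-> CRT diagonal}'s preceding lemma (the observation that non-CR is witnessed in polynomial size).

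There is essentially no obstacle: the constant multiplication imposes no constraint on which equivalences are congruences, and because it produces the same element regardless of input, it cannot merge or separate any congruence classes beyond what the $\theta_i$ already do. The only point requiring a line of verification is that the join operation in $\Conn\mathbf{S}$ agrees with the set-theoretic join of equivalences, which is clear since $\Conn\mathbf{S}$ consists of \emph{all} equivalence relations on $S$.
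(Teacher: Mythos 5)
Your proof is correct and takes essentially the same approach as the paper: both endow the set $S$ from the reduction of Theorem~\ref{TEO: hardness para sets} with a semigroup operation for which every equivalence relation is automatically a congruence. The paper uses the left-projection operation $x\cdot y := x$ rather than your constant operation $x\cdot y := 0$, but this is a cosmetic difference and either choice works.
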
 

\begin{proof}
Observe that any set can be turned into a semigroup by adding the trivial product $x \cdot y := x$. This product is preserved by all equivalence relations.
\end{proof}

\section{Some tractable cases}
\label{SEC: Some tractable cases}

In this section we show that the restriction of \prob{CRT} to each of the following classes is in \textup{P}:
\begin{itemize}
\item vector spaces,
\item distributive nearlattices,
\item the class of algebras belonging to some dual discriminator variety.
\end{itemize}
It is worth mentioning that the provided polynomial algorithms are strongly dependent on the algebraic features of each of these classes, and thus each of them turns out to have a unique character.

\subsection{Vector spaces}
\label{SUBSEC: Vector spaces}

We present next a poly-time algorithm solving the restriction of \prob{CRT} to vector spaces over finite fields. Our algorithm rides piggy-back on Gaussian elimination and takes advantage of the fact that vector spaces have a dimension.

Given a finite field $\mathbf{F}$, we consider $\mathbf{F}$-vector spaces as algebras in the language $\{+, -, 0\} \cup \{\lambda_r: r \in F\}$, where each $\lambda_r$ is a unary symbol for scalar multiplication by $r$. Let $\mathcal{VS}$ denote the class of all finite vector spaces over finite fields.

\begin{theorem} \label{TEO: vector spaces is in P}
\prob{CRT$|_\mathcal{VS}$} is in \textup{P}.
\end{theorem}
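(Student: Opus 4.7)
The plan is to exploit the standard correspondence between congruences of an $\mathbf{F}$-vector space $\mathbf{V}$ and subspaces of $V$: every congruence has the form $\theta_W := \{(x,y) \in V^2 : x - y \in W\}$ for a unique subspace $W \leq V$, recovered as $W = \{v : (v,0) \in \theta\}$; and since vector spaces are congruence-permutable, joins correspond to sums, i.e., $\theta_W \vee \theta_{W'} = \theta_{W + W'}$. The first step of the algorithm is therefore to extract subspaces $W_1, \ldots, W_k$ from the input congruences $\theta_1, \ldots, \theta_k$ by straightforward Gaussian elimination.

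Next I would observe that the tuples providing the data of a congruence system form the subspace
\[
C := \{(a_1, \ldots, a_k) \in V^k : a_i - a_j \in W_i + W_j \text{ for all } i,j\},
\]
while the set of solvable tuples is the subspace
\[
S := \{(a + w_1, \ldots, a + w_k) : a \in V,\ w_i \in W_i\} = \Delta_V + (W_1 \times \cdots \times W_k),
\]
where $\Delta_V$ is the diagonal of $V^k$. Clearly $S \subseteq C$, and $\langle \theta_1, \ldots, \theta_k\rangle$ is a CR tuple of $\mathbf{V}$ if and only if $S = C$.

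Since $S$ and $C$ are both finite-dimensional $\mathbf{F}$-subspaces, $S = C$ is equivalent to $\dim S = \dim C$, and both dimensions are computable in polynomial time by standard linear algebra. Specifically, $\dim S$ is the rank of the linear map $\phi\colon V \times W_1 \times \cdots \times W_k \to V^k$ given by $\phi(a, w_1, \ldots, w_k) := (a + w_1, \ldots, a + w_k)$; and $\dim C = k \cdot \dim V - \mathrm{rank}(\psi)$ for the map $\psi\colon V^k \to \bigoplus_{i < j} V/(W_i + W_j)$ sending $(a_1, \ldots, a_k)$ to $\bigl(a_i - a_j + (W_i + W_j)\bigr)_{i<j}$. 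The algorithm then accepts iff the two dimensions agree.

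I do not anticipate a real obstacle: the one thing to verify carefully is that the matrix representations of $\phi$ and $\psi$ have size polynomial in the input, but since $\dim V$ is bounded by $\log_2 |V|$ and there are at most $\binom{k}{2}$ constraint blocks, everything fits comfortably within polynomial bounds. A conceivable alternative would be to first invoke Lemma \ref{LEMA: CRT <-> CRT diagonal} to reduce to the case $W_1 \cap \cdots \cap W_k = \{0\}$, in which $\mathbf{V}$ embeds into $\prod_i V/W_i$, and argue there; but the direct dimension comparison is already both polynomial and conceptually clean.
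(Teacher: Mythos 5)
Your proposal is correct and takes essentially the same approach as the paper: identify congruences with subspaces $W_i$, form the subspace $S$ of solvable tuples and the subspace $C$ (called $T$ in the paper) of system tuples, note that being a CR tuple is equivalent to $S=C$, and check this by comparing dimensions computed via Gaussian elimination. The only cosmetic difference is that the paper computes $\dim S$ from an explicit inclusion-exclusion-style formula rather than as the rank of the map $\phi$, but the two computations are interchangeable.
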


\begin{proof}
Given $\mathbf{V} \in \mathcal{VS}$ and $\theta_1,\ldots,\theta_k$ congruences of $\mathbf{V}$, we can explicitly compute in polynomial time a finite field $\mathbf{F}$, a positive integer $n$, an isomorphim $\gamma\colon \mathbf{V} \to \mathbf{F}^n$ and subsets $W_1,\ldots,W_k \subseteq F^n$ such that for $i \in \{1,\ldots,k\}$ we have that $\langle v_1,v_2\rangle \in \theta_i$ iff $\gamma(v_1 - v_2) \in W_i$. Thus, $\langle \theta_1,\dots,\theta_k\rangle$ is a CR tuple of $\mathbf{V}$ if and only if 
\begin{itemize}
\item[$(*)$] for all $v_1,\dots,v_k \in F^n$ such that $v_i - v_j \in W_i + W_j := \{w + w': w \in W_i, w' \in W_j\}$ for all $i,j \in \{1,\ldots,k\}$, there is $z \in F^n$ such that $z-v_i \in W_i$ for $i \in \{1,\dots,k\}$.
\end{itemize}
We prove the theorem by showing that $(*)$ is decidable in polynomial time from the input $\langle \mathbf{F}^n, W_1,\ldots,W_k\rangle$. 

Define

$S := \{\langle z+w_1,\ldots,z+w_k\rangle: z \in F^n, w_1 \in W_1, \ldots, w_k \in W_k\}$,

$T := \{\langle v_1,\ldots,v_k\rangle \in (F^n)^k: v_i - v_j \in W_i + W_j \text{ for } i,j \in \{1,\ldots,k\}\}$.
Note that $S \subseteq T$ and both are subuniverses of $(\mathbf{F}^n)^k$. Further note that $S = T$ if and only if $(*)$ holds. Thus, since $\mathbf{S}$ and $\mathbf{T}$ are vector spaces over $\mathbf{F}$, it suffices to check whether they have the same dimension.

Since $\dim \mathbf{W} = |W|/|F|$ for any finite $\mathbf{F}$-vector space $\mathbf{W}$, the dimension of $\mathbf{S}$ is poly-time computable by observing that
\[
\dim \mathbf{S} = n + \dim \mathbf{W}_1 + \ldots + \dim \mathbf{W}_k - \dim (\mathbf{W}_1 \cap \ldots \cap \mathbf{W}_k).
\]

Next, we show that $\dim \mathbf{T}$ is also poly-time computable. It is a routine exercise to see that for each pair $i,j \in \{1,\ldots,k\}$, $i < j$,  we can construct in polynomial time an $(n-\dim(\mathbf{W}_i+\mathbf{W}_j)) \times n$ matrix $A_{ij}$ with entries in $F$ such that $A_{ij}v = 0$ iff $v \in W_i + W_j$. 
Next, using the $A_{ij}$'s as blocks, we can construct a matrix $A$ such that for $\langle v_1,\ldots,v_k\rangle \in (F^n)^k$ we have
\[
\begin{bmatrix}
 • & • & • & • & • \\ 
 • & • & • & • & • \\ 
 • & • & A & • & • \\ 
 • & • & • & • & • \\ 
 • & • & • & • & •
 \end{bmatrix}  \begin{bmatrix}
v_1 \\ 
v_2 \\
\vdots \\
v_k
\end{bmatrix} = \begin{bmatrix}
A_{12}(v_1 - v_2) \\ 
A_{13}(v_1 - v_3) \\ 
\vdots \\ 
A_{1k}(v_1 - v_k) \\ 
A_{23}(v_2 - v_3) \\ 
\vdots \\ 
A_{2k}(v_2 - v_k) \\ 
\vdots \\ 
A_{(k-1)k}(v_{k-1}-v_k)
\end{bmatrix}.
\]
Since each of the $A_{ij}$'s is poly-time computable, it is clear we can construct $A$ in polynomial time. Finally, observe that $T$ agrees with the kernel of $A$, and thus we can compute the dimension of $\mathbf{T}$ in polynomial time by applying Gaussian elimination to $A$.
\end{proof}

\subsection{Distributive nearlattices}
\label{SUBSEC: Distributive nearlattices}

We deal now with distributive nearlattices, which constitute a common generalization of distributive lattices and Tarski algebras (see \cite{Abbott67-ImplicationAlgebras} for a definition). Since distributive nearlattices are unlikely to be familiar to the reader, we recall next their definition and some basic facts.

Let $\mathbf{2}_\mathcal{N} := \langle \{0,1\},\mathsf{n}\rangle$ where $\mathsf{n}$ is the ternary operation defined by $\mathsf{n}(x,y,z) := (x \wedge y) \vee z$ (of course, $\wedge$ and $\vee$ are the natural meet and join operation on the set $\{0,1\}$). Let $\mathcal{N}$ be the variety generated by $\mathbf{2}_\mathcal{N}$.  The algebras in $\mathcal{N}$ are called {\em distributive nearlattices}. 
 It is known that this variety is congruence-distributive and $\mathbf{2}_\mathcal{N}$ is, up to isomorphisms, the only subdirectly irreducible member of $\mathcal{N}$ (\cite{Halas06-SInearlattices}).
 
Let $\mathbf{A} \in \mathcal{N}$. Note that $\mathbf{A}$ is a join-semilattice with the operation $x \vee y := \mathsf{n}\langle x,x,y\rangle$ satisfying that for all $a \in A$ the up-set $[a)$ is a distributive lattice under the order induced by $\vee$. In the sequel we freely refer to this order and we denote it by $\leq$. Also, given $a,b \in A$ with a common lower bound we have that the meet $a \wedge b$ exists.  An element $p \in A$ is {\em meet-irreducible} provided that $p$ is not the top element of $\mathbf{A}$ and whenever $p = a \wedge b$ with $a,b \in A$,  we have $p = a$ or $p = b$. Replacing $=$ by $\geq$ in the previous sentence we get the definition of a {\em meet-prime} element. As for distributive lattices, an element is meet-irreducible if and only if it is meet-prime. 

Until further notice we assume $\mathbf{A}$ is finite. Denote by $P_\mathbf{A}$ the set of meet-irreducible elements of $\mathbf{A}$, and let $\mathbf{P}_\mathbf{A}$ be the poset obtained by ordering $P_\mathbf{A}$ by $\leq$. Let $D_\mathbf{A}$ be the set of down-sets of $\mathbf{P}_\mathbf{A}$. Of course, $D_\mathbf{A}$ has a natural distributive lattice structure under the inclusion ordering. Hence, the operation $\mathsf{n}(x,y,z) := (x \cap y) \cup z$ makes it into a distributive nearlattice, which we denote by $\mathbf{D}_\mathbf{A}$.

Given $a \in A$ define
\[
\sigma(a) := \{p \in P_\mathbf{A}: p \not\geq a\}.
\]
It is not hard to show that the map $a \mapsto \sigma(a)$ is a nearlattice embedding from $\mathbf{A}$ into $\mathbf{D}_\mathbf{A}$. We write $A^\sigma$ for the range of $\sigma$ and $\mathbf{A}^\sigma$ for the corresponding substructure of $\mathbf{D}_\mathbf{A}$. We shall need the following two facts, whose proofs we leave to the reader:
\begin{itemize}
\item $\mathbf{A}^\sigma$ is an up-set of $\mathbf{D}_\mathbf{A}$;
\item $\bigcap A^\sigma = \emptyset$.
\end{itemize}

We next address the congruences of $\mathbf{A}$. Given $p \in P_\mathbf{A}$ let 
\[
\theta_p := \{\langle a,b\rangle \in A^2: a \leq p \Leftrightarrow b \leq p\}.
\]
A routine argument shows that $\theta_p$ is a meet-irreducible congruence of $\mathbf{A}$, and that, furthermore, every meet-irreducible congruence of $\mathbf{A}$ is of this form. Now, since $\mathbf{A}$ is semisimple and $\Con \mathbf{A}$ is distributive, we have a correspondence between congruences of $\mathbf{A}$ and sets of meet-irreducible elements of $\mathbf{A}$. Thus, the map
\[
\theta \mapsto F_\theta := \{p \in P_\mathbf{A}: \theta \subseteq \theta_p\}.
\]
is an order-reversing bijection from the congruence lattice of $\mathbf{A}$
onto the powerset of $P$ ordered by inclusion. The inverse of this map is given by
\[
F \mapsto \theta_F := \{\langle a,b\rangle\in A^{2}: \sigma(a) \cap F = \sigma(b) \cap F\}.
\]
In particular we have $\theta_{F}\vee\theta_{F'}=\theta_{F\cap F'}$
and $\theta_{F}\cap\theta_{F'}=\theta_{F\cup F'}$, which entails
the following easy observation: $\langle \theta_{1},\ldots,\theta_{k},a_{1},\ldots,a_{k}\rangle$
is a system on $\mathbf{A}$ if and only if 
\begin{equation}
\sigma(a_i) \cap F_{i} \cap F_{j} = \sigma(a_{j}) \cap F_{i} \cap F_{j}\text{ for } i,j \in \{1,\ldots,k\},\label{eq:sistema de Fs}
\end{equation}
where $F_{i}:=F_{\theta_{i}}$. In addition, an element $a$ is a
solution for the system iff 
\begin{equation}
\sigma(a) \cap F_{i} = \sigma(a_{i}) \cap F_{i} \text{ for } i \in \{1,\ldots,k\}.\label{eq:sol a siste de Fs}
\end{equation}

The next lemma characterizes the solvable systems.

\begin{lemma}
\label{LEMA: s en L <-> sol} Let $\mathbf{A}$ be a finite algebra in $\mathcal{N}$, and let $\langle \theta_1, \ldots, \theta_k,a_1,\ldots,a_k\rangle$ be a system on $\mathbf{A}$ such that $\bigcap_{i=1}^{k}\theta_{i} = \Delta$. Then, the set
\[
s:=\bigcup_{i=1}^{k} \sigma(a_{i}) \cap F_{\theta_i}
\]
is the unique subset $x$ of $P_\mathbf{A}$ satisfying $x\cap F_{\theta_i} =\sigma(a_{i}) \cap F_{\theta_i}$
for all $i\in\{1,\ldots,k\}$. Thus, the system is solvable if and
only if $s\in A^\sigma$.
\end{lemma}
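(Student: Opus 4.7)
The plan is to establish uniqueness first, then verify that $s$ itself has the stated property, and finally read the solvability characterization off \eqref{eq:sol a siste de Fs}. The single substantive preliminary step is to translate the hypothesis $\bigcap_{i=1}^k \theta_i = \Delta$ into a statement about the sets $F_i := F_{\theta_i}$. Since $\theta_F \cap \theta_{F'} = \theta_{F \cup F'}$, iterating gives $\bigcap_{i=1}^k \theta_i = \theta_{F_1 \cup \cdots \cup F_k}$; and since the bijection $\theta \mapsto F_\theta$ is order-reversing it sends $\Delta$ to the top element $P_\mathbf{A}$. So the hypothesis is equivalent to $\bigcup_{i=1}^k F_i = P_\mathbf{A}$.

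With that in hand, uniqueness is immediate. If $x \subseteq P_\mathbf{A}$ satisfies $x \cap F_i = \sigma(a_i) \cap F_i$ for all $i$, then for each $p \in P_\mathbf{A}$ I can pick some index $i$ with $p \in F_i$ and deduce $p \in x$ iff $p \in \sigma(a_i) \cap F_i$. This says that $p \in x$ iff $p \in \sigma(a_i)$ for some (equivalently, for every) $i$ with $p \in F_i$, which forces $x = \bigcup_{i=1}^k \sigma(a_i) \cap F_i = s$.

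To see that $s$ actually has the property, I would compute
\[
s \cap F_i \;=\; \bigcup_{j=1}^{k} \sigma(a_j) \cap F_j \cap F_i
\]
and note that the $j=i$ summand is already $\sigma(a_i) \cap F_i$, while for each $j \neq i$ the system condition \eqref{eq:sistema de Fs} rewrites the summand as $\sigma(a_i) \cap F_i \cap F_j$, which sits inside $\sigma(a_i) \cap F_i$. The union therefore collapses to $\sigma(a_i) \cap F_i$, as required.

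The final "thus" clause then falls out. By \eqref{eq:sol a siste de Fs}, an element $a \in A$ solves the system iff $\sigma(a) \cap F_i = \sigma(a_i) \cap F_i$ for every $i$; by the uniqueness just proved this is equivalent to $\sigma(a) = s$, and since $\sigma$ is injective such an $a$ exists iff $s \in A^\sigma$. I do not anticipate any real obstacle here — the proof is essentially an exercise in unwinding the dictionary between congruences and subsets of $P_\mathbf{A}$, with the only step requiring a moment's thought being the translation of $\bigcap_i \theta_i = \Delta$ into $\bigcup_i F_i = P_\mathbf{A}$.
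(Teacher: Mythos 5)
Your proof is correct and follows essentially the same path as the paper's: translate $\bigcap_i \theta_i = \Delta$ into $\bigcup_i F_{\theta_i} = P_\mathbf{A}$, verify that $s$ has the stated intersection property using the system condition \eqref{eq:sistema de Fs}, establish uniqueness from $\bigcup_i F_{\theta_i} = P_\mathbf{A}$, and read off solvability from \eqref{eq:sol a siste de Fs}. The only differences are cosmetic (you do uniqueness before verification, and you isolate the $j=i$ summand rather than rewriting the whole union); also, the parenthetical appeal to injectivity of $\sigma$ at the end is unnecessary, since $s \in A^\sigma$ just means $s$ is in the range of $\sigma$.
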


\begin{proof}
Put $F_i := F_{\theta_i}$ for $i \in \{1,\ldots,k\}$. Observe that $\bigcup_{i=1}^k F_i = P_\mathbf{A}$ since $\bigcap_{i=1}^k \theta_i = \Delta$. We check first that $s$ satisfies $s\cap F_{i}=\sigma(a_{i}) \cap F_{i}$
for all $i\in\{1,\ldots,k\}$. Fix $j\in\{1,\ldots,k\}$ and note
that:
\begin{align*}
s\cap F_{j} & =\bigcup_{i=1}^{k}\sigma(a_{i})\cap F_{i}\cap F_{j}\\
 & =\bigcup_{i=1}^{k}\sigma(a_{j})\cap F_{i}\cap F_{j}\\
 & =\sigma(a_{j})\cap F_{j}.
\end{align*}

Next, suppose $x\subseteq P_\mathbf{A}$ satisfies $x\cap F_{i}=\sigma(a_{i})\cap F_{i}$
for all $i\in\{1,\ldots,k\}$. Then, $\bigcup_{i=1}^{k}x\cap F_{i}=\bigcup_{i=1}^{k}\sigma(a_{i}) \cap F_{i}$,
which produces $x=s$.
\end{proof}

Next we single out some special elements in $D_\mathbf{A}$ (recall that $D_\mathbf{A}$
stands for the set of down-sets of $\mathbf{P}_\mathbf{A}$). Let $b \in D_\mathbf{A}$ and
$F \subseteq P_\mathbf{A}$, we say that:
\begin{itemize}
\item $b$ is a \emph{fringe} element of $\mathbf{A}$ if $b$ is maximal (with respect to inclusion) in $D_\mathbf{A} \setminus A^\sigma$;
\item $b$ is $F$\emph{-interpolable} by $\mathbf{A}$ if there is $a \in A$ such that $b\cap F = \sigma(a) \cap F$.
\end{itemize}
It follows from our definitions and the fact that $A^\sigma$ is an up-set of $\mathbf{D}_\mathbf{A}$ that if $b$ is $F$-interpolable
by $\mathbf{A}$ and $b'\in D_\mathbf{A}$ is such that $b\subseteq b'$, then $b'$ is $F$-interpolable by $\mathbf{A}$.

We are now in a position to characterize CR tuples.
\begin{theorem}
\label{TEO: CRT-tuple for dist near  lat - equivalence} Let $\mathbf{A}$ be a finite algebra in $\mathcal{N}$, and let $\theta_{1},\ldots,\theta_{k}$ be congruences on $\mathbf{A}$ such that $\bigcap_{i=1}^{k} \theta_{i} = \Delta$. The following are equivalent:
\begin{enumerate}
\item $\langle \theta_{1},\ldots,\theta_{k}\rangle$ is a CR tuple of $\mathbf{A}$.
\item The following holds:
\begin{enumerate}
\item for all $p,q\in P_\mathbf{A}$, such that $p$ covers $q$, there is $\ell\in\{1,\ldots,k\}$ with $p,q\in F_{\theta_\ell}$;
\item for every fringe element $b$ of $\mathbf{A}$ there is $\ell\in\{1,\ldots,k\}$ such that $b$ is not $F_{\theta_\ell}$-interpolable by $\mathbf{A}$.
\end{enumerate}
\end{enumerate}
\end{theorem}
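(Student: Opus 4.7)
The plan is to apply Lemma \ref{LEMA: s en L <-> sol}: writing $F_i := F_{\theta_i}$, solvability of a system reduces to whether the candidate $s := \bigcup_i \sigma(a_i) \cap F_i$ belongs to $A^\sigma$. A preliminary observation I would use throughout is that a subset $s \subseteq P_\mathbf{A}$ arises as the candidate of some system if and only if $s \cap F_i$ is $F_i$-interpolable by $\mathbf{A}$ for every $i$. Indeed, interpolants $a_i$ automatically satisfy $\sigma(a_i) \cap F_i \cap F_j = s \cap F_i \cap F_j = \sigma(a_j) \cap F_i \cap F_j$, and $\bigcup_i s \cap F_i = s$ because $\bigcup_i F_i = P_\mathbf{A}$, a consequence of $\bigcap_i \theta_i = \Delta$.

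For $(2) \Rightarrow (1)$, I would assume (2) and let $s$ be the candidate of a system. First, using (2a) and induction along cover-chains in $\mathbf{P}_\mathbf{A}$, I would show that $s$ is a down-set: if $p \in s$ covers $q$ in $\mathbf{P}_\mathbf{A}$, pick $\ell$ with $p, q \in F_\ell$; the interpolant $a_\ell$ satisfies $a_\ell \not\leq p$, hence $a_\ell \not\leq q$, placing $q$ in $\sigma(a_\ell) \cap F_\ell \subseteq s$. Second, if $s \in D_\mathbf{A} \setminus A^\sigma$, a fringe element $b$ containing $s$ exists because $A^\sigma$ is an up-set of $\mathbf{D}_\mathbf{A}$; by the upward stability of $F$-interpolability noted in the excerpt, $b$ inherits interpolability from $s$ at every $F_\ell$, contradicting (2b).

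For $(1) \Rightarrow (2)$, I would argue contrapositively. Failure of (2b) is immediate: a fringe $b$ that is $F_\ell$-interpolable for every $\ell$ yields (via the preliminary observation) a system whose candidate is $b \cap \bigcup_\ell F_\ell = b \notin A^\sigma$. The main obstacle is failure of (2a). Given $p$ covering $q$ with $\{p, q\} \not\subseteq F_\ell$ for every $\ell$, I would set $s := \sigma(q) \cup \{p\}$; this is manifestly not a down-set of $\mathbf{P}_\mathbf{A}$, hence not in $A^\sigma$. It remains to verify that $s \cap F_\ell$ is interpolable for each $\ell$. If $p \notin F_\ell$, then $s \cap F_\ell = \sigma(q) \cap F_\ell$ and $q$ interpolates. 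If $p \in F_\ell$ (which forces $q \notin F_\ell$), let $R := \{r \in F_\ell : r \geq q \text{ and } r \neq p\}$ and set $m := \bigwedge R$ in $\mathbf{A}$ when $R \neq \emptyset$, or the top of $\mathbf{A}$ otherwise; the meet exists in the nearlattice because $q$ is a common lower bound for $R$.

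The crux is $m \not\leq p$, and here meet-primeness of $p$ combines with the covering relation: if $m \leq p$, then some $r \in R$ satisfies $r \leq p$, so $q < r \leq p$, contradicting that $p$ covers $q$. The remaining verifications that $\sigma(m) \cap F_\ell = s \cap F_\ell$, namely $m \leq r$ for $r \in R$ by definition of the meet and $m \not\leq r$ whenever $r \in F_\ell$ satisfies $r \not\geq q$ (since $m \geq q$), are routine; the edge case $R = \emptyset$ reduces to $s \cap F_\ell = F_\ell$, which is trivially interpolated by the top. Thus $s$ is a system candidate outside $A^\sigma$, and the corresponding system is unsolvable, completing the argument.
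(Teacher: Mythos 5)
Your proof is correct and, for most of the theorem, follows the same path as the paper: both directions hinge on Lemma~\ref{LEMA: s en L <-> sol}, the $(2)\Rightarrow(1)$ argument (down-set via covers, then fringe elements) is essentially identical, and the treatment of a failed (2b) is the same. What differs is the construction of an unsolvable system when (2a) fails. The paper partitions the indices into $I:=\{i:p\in F_i\}$ and $I'$, takes a single element $a:=\bigwedge\bigl([p)\cap E\cap E'\bigr)$ with $E,E'$ global unions, assigns $a_i:=p$ on $I$ and $a_i:=a$ on $I'$, and checks the system condition pairwise. You instead fix the \emph{target} candidate set $s:=\sigma(q)\cup\{p\}$ up front (manifestly not a down-set) and interpolate each $s\cap F_\ell$ independently, choosing $q$ when $p\notin F_\ell$ and a per-index meet $m_\ell:=\bigwedge\{r\in F_\ell: r\geq q,\,r\neq p\}$ when $p\in F_\ell$. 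This is made possible by the observation you state at the outset, that a subset $s\subseteq P_\mathbf{A}$ is the candidate of \emph{some} system iff $s\cap F_i$ is $F_i$-interpolable for every $i$; that observation is implicit in the paper (via \eqref{eq:sistema de Fs} and Lemma~\ref{LEMA: s en L <-> sol}) but making it explicit lets you handle the two failure cases of (2) uniformly and keeps the per-index verifications local. Your route is arguably more modular; the paper's is more economical in that it produces one witness $a$ for all of $I'$. One small point worth tightening: in your meet-primeness step, from $m\leq p$ you get some $r\in R$ with $r\leq p$, but the jump to $q<r$ needs the explicit remark that $r\in F_\ell$ while $q\notin F_\ell$ forces $r\neq q$; only then does the covering relation $q\prec p$ yield a contradiction.
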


\begin{proof}
As in the proof of Lemma \ref{LEMA: s en L <-> sol} put $F_i := F_{\theta_i}$ for $i \in \{1,\ldots,k\}$ and note that $\bigcup_{i=1}^k F_i = P_\mathbf{A}$.

2 $\Rightarrow$ 1. Suppose $\langle \theta_{1},\ldots,\theta_{k},a_{1},\ldots,a_{k}\rangle$
is a system. By Lemma \ref{LEMA: s en L <-> sol}, we have to show that
$s:=\bigcup_{i=1}^{k} \sigma(a_i) \cap F_{i}$ is in $A^\sigma$.
We establish first that $s$ is an down-set of $\mathbf{P}_\mathbf{A}$. Fix $p\in s$ and $q\in P_\mathbf{A}$ with $q \leq p$. Let $j\in\{1,\ldots,k\}$ such that
$p\in \sigma(a_j) \cap F_{j}$. We first consider the case in which $p$
covers $q$. By (2).(a), there is $\ell\in\{1,\ldots,k\}$ such that
$p,q\in F_{\ell}$. Since $p \in \sigma(a_j) \cap F_{\ell} \cap F_{j} = \sigma(a_\ell) \cap F_{\ell}\cap F_{j}$,
we have $p\in \sigma(a_{\ell})$. So, as $\sigma(a_\ell)$ is a down-set, we have $q \in \sigma(a_\ell)$, and hence $q\in \sigma(a_\ell) \cap F_{\ell} \subseteq s$.
Now an obvious inductive argument takes care of the general case.

By Lemma \ref{LEMA: s en L <-> sol} we know that $s$ is $F_{i}$-interpolable by $\mathbf{A}$ for $i\in\{1,\ldots,k\}$, and thus every element of $D_\mathbf{A}$ that contains $s$ has the same property. So, it must be that $s\in A^\sigma$, since otherwise there would be a fringe element $b \supseteq s$ refuting
(2).(b).

1 $\Rightarrow$ 2. We show that if either (2).(a) or (2).(b) fails,
then there is a system with no solution. Suppose (2).(a) does not
hold, that is, there are $p,q \in P_\mathbf{A}$ such that $q$ covers $p$, but there is no $\ell\in\{1,\ldots,k\}$
with $p,q\in F_{\ell}$. Define $I:=\{i\in\{1,\ldots,k\}:p\in F_{i}\}$
and $I':=\{1,\ldots,k\}\setminus I$; note that both $I$ and $I'$
are nonempty. Next define:
\begin{itemize}
\item $E:=\bigcup\{F_{i}:i\in I\}$ and $E':=\bigcup\{F_{i}:i\in I'\}$;
\item $a := \bigwedge ([p) \cap E\cap E')$.
\end{itemize}
Notice that, since the meet defining $a$ is computed in $[p)$, it is in fact an element of $A$. Also, note that $\sigma(a) = P_\mathbf{A} \setminus [[p) \cap E \cap E')$.  Now, let $a_{i}:= p$ for $i\in I$ and $a_{i}:= a$ for $i\in I'$. 

We prove via \eqref{eq:sistema de Fs} that $\langle \theta_{1},\ldots,\theta_{k},a_{1},\ldots,a_{k}\rangle$
is a system; this amounts to show that given $i \in I$ and $j \in I'$ we have $\sigma(p) \cap F_i \cap F_j = \sigma(a) \cap F_i \cap F_j$. The inclusion from left to right is obvious since $p \leq a$. To see the remaining inclusion let $r \in \sigma(a) \cap F_i \cap F_j$, so $r \notin [p) \cap E \cap E'$. Since $r \in F_i \cap F_j \subseteq E \cap E'$, it must be that $r \notin [p)$, that is, $r \in \sigma(p)$. 

We claim that the system $\langle \theta_{1},\ldots,\theta_{k},a_{1},\ldots,a_{k}\rangle$ has no solution. In fact, we show that
$q \in s := \bigcup_{i=1}^{k}\sigma(a_i) \cap F_{i}$ but $p \notin s$, which entails that $s$ is not a down-set, and thus the system is not solvable by Lemma
\ref{LEMA: s en L <-> sol}. First, since $p \notin \sigma(p)$ and $p \notin F_i$ for $i \in I'$, we have that $p \notin s$. We show that $q \in s$ by way of contradiction. Suppose $q \notin s$.
By our initial assumption, we know there is $j \in I'$ such that $q \in F_j$. Since $q \notin \sigma(a) \cap F_j$, we have that $q \notin \sigma(a)$, hence $q \in [[p) \cap E \cap E')$. Let $r \in E\cap E'$ be such that $p\leq r\leq q$. This
is not possible, since $p,q\notin E\cap E'$ and $q$ covers $p$.

Suppose next that (2).(b) does not hold. Then there are a fringe element $b$ and $a_{1},\ldots,a_{k}\in A$ such that $b \cap F_{i}= \sigma(a_{i}) \cap F_{i}$ for $i\in\{1,\ldots,k\}$. Clearly $\langle \theta_{1},\ldots,\theta_{k},a_{1},\ldots,a_{k}\rangle$
is a system, and $\bigcup_{i=1}^{k} \sigma(a_i) \cap F_{i} = b \notin A^\sigma$.
So, by Lemma \ref{LEMA: s en L <-> sol}, the system has no solution.
\end{proof}

\begin{theorem} \label{TEO: CRT.N is in P}
\prob{CRT$|_\mathcal{N}$} is in \textup{P}.
\end{theorem}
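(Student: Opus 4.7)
The plan is to turn Theorem \ref{TEO: CRT-tuple for dist near  lat - equivalence} into a polynomial-time decision procedure. Given an instance consisting of $\mathbf{A} \in \mathcal{N}$ and congruences $\theta_1,\ldots,\theta_k$, the first step is to invoke Lemma \ref{LEMA: CRT <-> CRT diagonal} and pass to the quotient by $\delta := \bigcap_{i=1}^k \theta_i$, so that henceforth we may assume $\bigcap_{i=1}^k \theta_i = \Delta$, as required by the characterization theorem. This reduction is polynomial-time: compute $\delta$ from the inputs, form $\mathbf{A}/\delta \in \mathcal{N}$ (varieties are closed under quotients), and replace each $\theta_i$ by $\theta_i/\delta$.

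Next, perform polynomial-time preprocessing: identify $P_\mathbf{A}$ by scanning $A$ and testing meet-irreducibility, record the induced poset order on $\mathbf{P}_\mathbf{A}$, compute $\sigma(a)$ for every $a \in A$, and compute each $F_i := F_{\theta_i} = \{p \in P_\mathbf{A} : \theta_i \subseteq \theta_p\}$. Condition (2).(a) of the theorem is then checked directly: enumerate the $O(|P_\mathbf{A}|^2)$ covering pairs $(p,q)$ in $\mathbf{P}_\mathbf{A}$ and, for each, test whether some $F_\ell$ contains both endpoints.

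The delicate step, where the main obstacle lies, is checking (2).(b), because a priori $|D_\mathbf{A}|$ can be exponential in $|A|$. The key observation is that fringe elements admit a polynomial-size enumeration. Indeed, since $A^\sigma$ is an up-set of $\mathbf{D}_\mathbf{A}$, any fringe element $b \neq P_\mathbf{A}$ has at least one upward cover in $\mathbf{D}_\mathbf{A}$, necessarily of the form $b \cup \{p\}$, and maximality of $b$ in $D_\mathbf{A} \setminus A^\sigma$ forces $b \cup \{p\} \in A^\sigma$, so $b \cup \{p\} = \sigma(a)$ for some $a \in A$. A short verification shows that $p$ is then a maximal element of $\sigma(a)$ (any strictly greater element would already belong to $b$, and $b$ being a down-set would pull $p$ into $b$, a contradiction). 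Hence every fringe element lies in
\[
\mathcal{F} := \{\sigma(a) \setminus \{p\} : a \in A,\ p \text{ maximal in } \sigma(a)\} \cup \{P_\mathbf{A}\},
\]
a family of size at most $|A| \cdot |P_\mathbf{A}| + 1$. Enumerate $\mathcal{F}$, filter down to those members that lie in $D_\mathbf{A} \setminus A^\sigma$ and are maximal there (both tests reduce to polynomially many comparisons against the precomputed list $A^\sigma$), and for each surviving $b$ verify (2).(b): search for some $\ell$ such that no $a \in A$ satisfies $b \cap F_\ell = \sigma(a) \cap F_\ell$. Output \emph{yes} iff both (2).(a) and (2).(b) hold. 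Correctness is immediate from Theorem \ref{TEO: CRT-tuple for dist near  lat - equivalence}, and every substep runs in polynomial time in the size of the input.
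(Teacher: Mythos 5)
Your proof is correct and follows essentially the same strategy as the paper's: reduce to the case $\bigcap\theta_i = \Delta$ via Lemma~\ref{LEMA: CRT <-> CRT diagonal}, apply Theorem~\ref{TEO: CRT-tuple for dist near  lat - equivalence}, check condition (2).(a) directly, and handle (2).(b) by showing that all fringe elements live in a polynomial-size candidate family of the form $\sigma(a)\setminus\{p\}$ which can then be filtered. Your candidate set is a mild refinement of the paper's $\tilde{B}=\{\sigma(a)\setminus\{p\}: a\in A,\ p\in P_{\mathbf{A}}\}$ (you restrict $p$ to be maximal in $\sigma(a)$, and you add the superfluous $P_{\mathbf{A}}$, which is never a fringe element since $A^\sigma$ is a nonempty up-set), and you spell out the cover argument that the paper leaves implicit, but the idea and the complexity bound are the same.
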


\begin{proof}
Suppose $\mathbf{A}, \theta_1, \ldots, \theta_k$ is an instance of \prob{CRT$|_\mathcal{N}$}. By Lemma \ref{LEMA: CRT <-> CRT diagonal} we may assume without loss that $\bigcap_{i=1}^k \theta_i = \Delta$, and hence apply Theorem \ref{TEO: CRT-tuple for dist near  lat - equivalence}. It should be clear that checking condition 2 of that theorem can be done in polynomial time once the set $B$ of fringe elements is known. Given any $s \subseteq P_\mathbf{A}$ we can decide in poly-time whether $s \in B$ by checking if $s$ is a down-set and for every $p \in P_\mathbf{A}$ minimal in $P_\mathbf{A} \setminus s$ the down-set $s \cup \{p\}$ belongs to $A^\sigma$. Finally, since $B \subseteq \tilde{B} := \{\sigma(a) \setminus \{p\}: a \in A, p \in P_\mathbf{A}\}$, we can compute $B$ by performing the above described test to every element in $\tilde{B}$.
\end{proof}

Theorem \ref{TEO: CRT-tuple for dist near  lat - equivalence} easily produces two remarkable results for distributive lattices and Tarski algebras. We omit the details, but let us notice that if $\mathbf{L}$ is a distributive lattice, then $\sigma$ is onto $D_\mathbf{L}$ and thus, condition 2.(b) in Theorem \ref{TEO: CRT-tuple for dist near  lat - equivalence} trivially holds. Similarly, if $\mathbf{A}$ is a Tarski algebra, then $\mathbf{P}_\mathbf{A}$ is totally order-disconnected and hence condition 2.(a) in Theorem \ref{TEO: CRT-tuple for dist near  lat - equivalence} is vacuous. 

\begin{corollary} \label{CORO: caract CR tuple for DIST LAT}
Let $\mathbf{L}$ be a finite distributive lattice, $\theta_{1},\ldots,\theta_{k} \in \Con \mathbf{L}$, and let $Q := \bigcup_{i=1}^k F_{\theta_i}$. The following are equivalent:
\begin{enumerate}
\item $\langle \theta_{1},\ldots,\theta_{k}\rangle$ is a CR tuple of $\mathbf{A}$.
\item For all $p,q \in Q$, such that $p$ covers $q$ in $\langle Q,\leq\rangle$, there is $\ell\in\{1,\ldots,k\}$ with $p,q\in F_{\theta_\ell}$.
\end{enumerate}
\end{corollary}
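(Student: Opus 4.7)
The plan is to derive the corollary directly from Theorem \ref{TEO: CRT-tuple for dist near  lat - equivalence} after a preliminary reduction to the trivial-meet case via Lemma \ref{LEMA: CRT <-> CRT diagonal}.

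First, I would set $\delta := \bigcap_{i=1}^k \theta_i$ and invoke Lemma \ref{LEMA: CRT <-> CRT diagonal} to replace the instance by $\langle \theta_1/\delta, \ldots, \theta_k/\delta\rangle$ on $\mathbf{L}/\delta$. Since $\mathbf{L}/\delta$ is again a finite distributive lattice, and in particular a finite algebra in $\mathcal{N}$ satisfying $\bigcap_i (\theta_i/\delta) = \Delta$, the hypotheses of Theorem \ref{TEO: CRT-tuple for dist near  lat - equivalence} are met.

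Next, I would dispose of condition (2).(b) of that theorem. In the distributive lattice setting Birkhoff's representation gives that $\sigma$ is onto, so $(\mathbf{L}/\delta)^\sigma = D_{\mathbf{L}/\delta}$; consequently there are no fringe elements and (2).(b) is vacuous. Therefore only the covering condition (2).(a) remains to be analyzed.

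Finally, I would translate (2).(a) back to the data attached to $\mathbf{L}$. Under the bijection $\theta \mapsto F_\theta$, the meet-irreducibles of $\mathbf{L}/\delta$ correspond to $F_\delta$, and the identity $F_{\theta \cap \theta'} = F_\theta \cup F_{\theta'}$ recorded in the excerpt yields $F_\delta = \bigcup_i F_{\theta_i} = Q$. Under this correspondence the order on $\mathbf{P}_{\mathbf{L}/\delta}$ is exactly the restriction of $\leq$ to $Q$, and $F_{\theta_i/\delta}$ is identified with $F_{\theta_i}$. Substituting these identifications into condition (2).(a) of the theorem produces verbatim condition (2) of the corollary, completing the equivalence.

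The only real subtlety—and the step I would be most careful about—is that the cover relation appearing in the corollary is taken inside $\langle Q, \leq\rangle$ rather than inside $\mathbf{P}_\mathbf{L}$. The two notions differ exactly when there is a meet-irreducible of $\mathbf{L}$ lying strictly between $q$ and $p$ but outside $Q$, and it is the passage to $\mathbf{L}/\delta$ that collapses such intermediate elements and justifies the weaker hypothesis. Once that correspondence is made precise, the remainder of the argument is routine bookkeeping on top of Theorem \ref{TEO: CRT-tuple for dist near  lat - equivalence}.
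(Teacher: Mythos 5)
Your proposal is correct and takes essentially the same approach as the paper, which derives the corollary from Theorem~\ref{TEO: CRT-tuple for dist near  lat - equivalence} by observing that $\sigma$ is onto $D_\mathbf{L}$ for a distributive lattice (so condition 2.(b) is vacuous), with the passage to $\mathbf{L}/\delta$ via Lemma~\ref{LEMA: CRT <-> CRT diagonal} accounting for the covering condition being stated in $\langle Q,\leq\rangle$. You correctly identify and handle the one subtle point the paper leaves implicit, namely that $\mathbf{P}_{\mathbf{L}/\delta}$ is order-isomorphic to $Q$ with the induced order.
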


Let $\mathbf{A}$ be a finite Tarski algebra, and let $Q \subseteq P_\mathbf{A}$. An element $b \in D_\mathbf{A}$ is a {\em $Q$-fringe} element of $\mathbf{A}$ if $b$ is maximal with the property of not belonging to $A^\sigma$ and containing $P_\mathbf{A} \setminus Q$.

\begin{corollary} \label{CORO: caract CR tuple for TARSKI}
Let $\mathbf{A}$ be a finite Tarski algebra, $\theta_{1},\ldots,\theta_{k} \in \Con \mathbf{A}$, and let $Q := \bigcup_{i=1}^k F_{\theta_i}$.  The following are equivalent:
\begin{enumerate}
\item $\langle \theta_{1},\ldots,\theta_{k}\rangle$ is a CR tuple of $\mathbf{A}$.
\item For every $Q$-fringe element $b$ of $\mathbf{A}$ there is $\ell\in\{1,\ldots,k\}$ such that $b$ is not $F_{\theta_\ell}$-interpolable by $\mathbf{A}$.
\end{enumerate}
\end{corollary}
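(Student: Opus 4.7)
The plan is to derive the corollary from Theorem \ref{TEO: CRT-tuple for dist near  lat - equivalence} by reducing to the quotient $\mathbf{A}/\delta$, where $\delta := \bigcap_{i=1}^k \theta_i$. By Lemma \ref{LEMA: CRT <-> CRT diagonal}, $\langle \theta_1,\ldots,\theta_k\rangle$ is a CR tuple of $\mathbf{A}$ if and only if $\langle \theta_1/\delta,\ldots,\theta_k/\delta\rangle$ is a CR tuple of $\mathbf{A}/\delta$, and these quotient congruences intersect to $\Delta$. Since $\mathbf{A}/\delta$ is again a finite Tarski algebra, $\mathbf{P}_{\mathbf{A}/\delta}$ is an antichain, so condition (2).(a) of Theorem \ref{TEO: CRT-tuple for dist near  lat - equivalence} holds vacuously. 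It therefore suffices to show that condition (2).(b) applied to $\mathbf{A}/\delta$ is precisely condition (2) of the corollary applied to $\mathbf{A}$.

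The natural identification to exploit is $P_{\mathbf{A}/\delta} \cong Q$, obtained from the correspondence between meet-irreducible congruences of $\mathbf{A}/\delta$ and those $\theta_p$ with $\delta \subseteq \theta_p$, i.e., those with $p \in F_\delta = Q$. Under this identification one verifies that $\sigma_{\mathbf{A}/\delta}(a/\delta) = \sigma(a) \cap Q$ and $F_{\theta_i/\delta} = F_{\theta_i}$. I would then show that $b \mapsto b \cap Q$ is a bijection between the $Q$-fringe elements of $\mathbf{A}$ and the fringe elements of $\mathbf{A}/\delta$, with inverse $b' \mapsto b' \cup (P_\mathbf{A} \setminus Q)$, and that this bijection preserves $F_{\theta_\ell}$-interpolability. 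The preservation of interpolability is painless: since $F_{\theta_\ell} \subseteq Q$, both notions reduce to the existence of some $a \in A$ with $\sigma(a) \cap F_{\theta_\ell} = b \cap F_{\theta_\ell}$, a condition that depends only on $b \cap Q$.

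The main obstacle will be to prove that, for any $b \subseteq P_\mathbf{A}$ with $P_\mathbf{A} \setminus Q \subseteq b$, one has $b \in A^\sigma$ if and only if $b \cap Q \in (A/\delta)^\sigma$. The forward direction is trivial (just intersect with $Q$). For the converse I would appeal to the fact that $\mathbf{A}^\sigma$ is an up-set of $\mathbf{D}_\mathbf{A}$, recorded as one of the two basic facts preceding Lemma \ref{LEMA: s en L <-> sol}: if $\sigma(a) \cap Q = b \cap Q$, then $\sigma(a) \setminus Q \subseteq P_\mathbf{A} \setminus Q \subseteq b$ yields $\sigma(a) \subseteq b$, and hence $b \in A^\sigma$ by the up-set property. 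This equivalence then automatically transfers maximality, because every proper enlargement of $b$ within $\{c : P_\mathbf{A} \setminus Q \subseteq c\}$ corresponds to a proper enlargement of $b \cap Q$ inside $\mathcal{P}(Q)$, so $b$ is $Q$-fringe in $\mathbf{A}$ iff $b \cap Q$ is fringe in $\mathbf{A}/\delta$. Stringing the quotient reduction together with this translation, Theorem \ref{TEO: CRT-tuple for dist near  lat - equivalence} applied to $\mathbf{A}/\delta$ yields the stated equivalence.
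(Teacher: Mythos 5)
Your proposal is correct and carries out exactly the reduction the paper gestures at but omits: pass to $\mathbf{A}/\delta$ via Lemma \ref{LEMA: CRT <-> CRT diagonal}, invoke Theorem \ref{TEO: CRT-tuple for dist near  lat - equivalence} there, observe that condition (2).(a) is vacuous because $\mathbf{P}_{\mathbf{A}/\delta}$ is an antichain for a Tarski algebra, and translate (2).(b) back to $\mathbf{A}$ through the identification $P_{\mathbf{A}/\delta}\cong Q$. The details you supply --- the bijection $b\mapsto b\cap Q$ between $Q$-fringe elements of $\mathbf{A}$ and fringe elements of $\mathbf{A}/\delta$, and the preservation of $F_{\theta_\ell}$-interpolability using $F_{\theta_\ell}\subseteq Q$ together with the up-set property of $A^\sigma$ --- are precisely what is needed, and they are correct (note that in the Tarski case $D_\mathbf{A}=\mathcal{P}(P_\mathbf{A})$, so the down-set condition that the up-set argument nominally requires is automatic).
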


\subsection{Dual discriminator varieties}
\label{SUBSEC: Dual discriminator varieties}

The class we consider next, albeit natural, takes some work to define. It is the aggregate of all dual discriminator varieties. 

Recall that the {\em dual discriminator} on a set $A$ is the function $q_A\colon A^3 \to A$ given by $$q_A(a,b,c) := \begin{cases} a & \text{ if } a = b, \\ c & \text{ if } a \ne b. \end{cases}$$

A ternary term is a {\em dual discriminator term} for an algebra $\mathbf{A}$ provided that it interprets as the dual discriminator function on $\mathbf{A}$. A variety $\mathcal V$ is a {\em dual discriminator variety} if it is generated as a variety by a subclass $\mathcal K\subseteq \mathcal V$ with a common dual discriminator term. If $\mathcal{V}$ is a  dual discriminator variety, then $\mathcal{V}$ is semisimple and its class of simple members is the largest subclass of $\mathcal{V}$ possessing a common dual discriminator term (see \cite{FriedPixley-DualDiscrimnator}).

A standard example of a dual discriminator
variety is the variety of distributive lattices. In this case we can take $\mathcal{K}$ to be the class of two-element lattices, and the term
$(x \wedge y) \vee (x \wedge z) \vee (y \wedge z)$ interprets as the dual discriminator on $\mathcal K$.

Let $\mathcal{D}$ be the class of finite algebras belonging to some dual discriminator variety. In this section we prove that \prob{CRT$|_\mathcal{D}$} is in \textup{P} (see Theorem \ref{TEO: CRT.D is in P}). This is accomplished through a deep analysis of congruence systems for algebras in $\mathcal{D}$, which produces a nontrivial characterization of CR tuples for algebras in $\mathcal{D}$ (see Theorem \ref{TEO: CRT tuple for dual disc - equivalences}).

\subsubsection{Crosses in products}

One of the key features of dual discriminator varieties is that the subalgebras of twofold products of simple algebras come in three flavors: graphs of homomorphisms, subproducts and crosses (see Lemma \ref{LEMA: subalg cuadrado - disc dual}). Our study of congruence systems in $\mathcal{D}$ is carried out by a purely combinatorial analysis of the interaction of crosses. This analysis is presented in this section and applied to congruence systems in the next section.

Let $A_1, \ldots, A_n$ be sets. A subset $C \subseteq A_1 \times \ldots \times A_n$ is a {\em cross} ({\em of} $A_1 \times \ldots \times A_n$) provided that there are $i,j \in \{1,\ldots,n\}$ with $i \ne j$, $a \in A_i$ and $b \in A_j$ such that
\[
C = \{\bar{a} \in A_1 \times \ldots \times A_n: a_i = a \text{ or } a_j = b\}.
\]
We denote this cross by $\cross{i}{a}{j}{b}$; $i$ and $j$ are the {\em indices} of $C$.

Two crosses $C$ and $D$ are {\em compatible} if they share exactly one index, say $k$, and the elements corresponding to $k$ in $C$ and $D$ are different. For example, the crosses $\cross{i}{a}{k}{c}$ and $\cross{k}{c'}{j}{b}$ are compatible if and only if $c \ne c'$ and $i \ne j$.

We define the operation of {\em composition} between compatible crosses by
\[
\cross{i}{a}{k}{c} \circ \cross{k}{c'}{j}{b} :=
\cross{i}{a}{j}{b}.
\]
Given a set of crosses $\mathcal{C}$, let $\mathcal{C}^\circ$ be
\[
\mathcal{C} \cup \{C \circ D: \text{ compatible } C,D \in \mathcal{C}\},
\]
and define
\[
\bar{\mathcal{C}} := \mathcal{C} \cup \mathcal{C}^\circ \cup \mathcal{C}^{\circ\circ} \cup \ldots
\]

Until further notice we fix sets $A_1, \ldots, A_n$ with at least two elements and put $P := A_1 \times \ldots \times A_n$.

\begin{lemma} \label{LEMA: cruces - direccion facil}
Given a set of crosses $\mathcal{C}$ of $P$ and $C \in \bar{\mathcal{C}}$ we have that $\bigcap \mathcal{C} \subseteq C$. In particular, $\bigcap \bar{\mathcal{C}} = \bigcap \mathcal{C}$.
\end{lemma}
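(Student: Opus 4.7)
The plan is to prove the inclusion $\bigcap \mathcal{C} \subseteq C$ by induction on the least $m \geq 0$ such that $C$ belongs to $\mathcal{C}^{\circ \cdots \circ}$ with $m$ applications of the $\circ$-operator. Once this is established, the ``in particular'' clause is immediate: $\mathcal{C} \subseteq \bar{\mathcal{C}}$ gives $\bigcap \bar{\mathcal{C}} \subseteq \bigcap \mathcal{C}$, while the main claim gives the reverse inclusion.

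For the base case $m = 0$ we have $C \in \mathcal{C}$, so $\bigcap \mathcal{C} \subseteq C$ trivially. For the inductive step, assume the result for every cross reachable in fewer than $m$ steps, and let $C$ be obtained at step $m$ as $C = D \circ E$ for some compatible $D, E$ already in $\bar{\mathcal{C}}$ at an earlier stage. By the definition of compatibility and composition, we may write $D = \cross{i}{a}{k}{c}$ and $E = \cross{k}{c'}{j}{b}$ with $c \ne c'$ and $i \ne j$, so that $C = \cross{i}{a}{j}{b}$. The inductive hypothesis gives $\bigcap \mathcal{C} \subseteq D$ and $\bigcap \mathcal{C} \subseteq E$.

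Now fix $\bar{x} \in \bigcap \mathcal{C}$. From $\bar{x} \in D$ we get $x_i = a$ or $x_k = c$, and from $\bar{x} \in E$ we get $x_k = c'$ or $x_j = b$. Because $c \ne c'$, the conditions $x_k = c$ and $x_k = c'$ cannot hold simultaneously. A short case analysis then yields $x_i = a$ or $x_j = b$, which is exactly the statement $\bar{x} \in C$. This completes the induction.

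The argument is essentially a single case split; there is no real obstacle to overcome, and the only slightly delicate point is noting that the incompatibility of $x_k = c$ with $x_k = c'$---guaranteed precisely by the compatibility hypothesis $c \ne c'$ between $D$ and $E$---is what forces one of the surviving alternatives $x_i = a$ or $x_j = b$ to hold.
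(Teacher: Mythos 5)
Your proof is correct and is exactly the easy induction on composition depth that the paper has in mind; the paper simply says ``This is an easy induction'' and omits the details, which you have correctly supplied.
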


\begin{proof}
This is an easy induction.
\end{proof}

If $S$ is a subset of $P$ and $i,j \in \{1,\ldots,n\}$ with $i \ne j$, we write $S_{ij}$ for the twofold projection of $S$ into $A_i \times A_j$, that is, 
\[
S_{ij} := \{\langle a,b\rangle \in A_i \times A_j: \text{ there is } \bar{a} \in S \text{ such that } a_i = a \text{ and } a_j = b\}.
\]

\begin{lemma} \label{LEMA: C bueno no repite cruces}
Let $\mathcal{C}$ be a set of crosses of $P$, and put $S := \bigcap \mathcal{C}$. Suppose that for each $i,j \in \{1,\ldots,n\}$ with $i \ne j$ we have that either $S_{ij} = A_i \times A_j$ or $S_{ij}$ is a cross of $A_i \times A_j$. If $S \subseteq \cross{i}{a}{j}{b} \cap \cross{i}{a'}{j}{b'}$, then $a = a'$ and $b = b'$.
\end{lemma}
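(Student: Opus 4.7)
The plan is to use the projection $S_{ij}$ to reduce everything to a two-dimensional problem about crosses in $A_i\times A_j$, and then to show that a cross in a twofold product has a canonical (unique) pair of parameters.

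First I would note that, by the definition of the cross $\cross{i}{a}{j}{b}$, the inclusion $S \subseteq \cross{i}{a}{j}{b}$ is equivalent to $S_{ij} \subseteq \{\langle x,y\rangle \in A_i\times A_j : x = a \text{ or } y = b\}$, and similarly for the $(a',b')$ cross. So it is enough to work entirely with the projection $S_{ij}$, which by hypothesis is either $A_i\times A_j$ or a cross of $A_i\times A_j$.

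Next I would rule out the first option: if $S_{ij} = A_i \times A_j$, then every pair $\langle x,y\rangle$ would satisfy $x=a$ or $y=b$; but since $|A_i|,|A_j|\ge 2$, we can pick $x\ne a$ and $y\ne b$, giving a contradiction. Hence $S_{ij}$ is a cross of $A_i\times A_j$; being a cross in a twofold product, it must have the form $\{\langle x,y\rangle : x = c \text{ or } y = d\}$ for some $c\in A_i$, $d\in A_j$.

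The key step is then the rigidity lemma: if $\{x=c \text{ or } y=d\} \subseteq \{x=a \text{ or } y=b\}$ in $A_i\times A_j$, then $c=a$ and $d=b$. For this I would use that the left-hand cross contains the ``horizontal spine'' $\{\langle c, y\rangle : y\in A_j\}$ and the ``vertical spine'' $\{\langle x, d\rangle : x\in A_i\}$. Since $|A_j|\ge 2$, pick $y\in A_j$ with $y\ne b$; then $\langle c,y\rangle$ lies in the right-hand cross, so $c=a$. Symmetrically, picking $x\ne a$ in $A_i$ forces $d=b$. Applying this rigidity to both containments $S_{ij}\subseteq \cross{i}{a}{j}{b}$ and $S_{ij}\subseteq \cross{i}{a'}{j}{b'}$ yields $a=c=a'$ and $b=d=b'$, as required.

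The argument is entirely combinatorial and poses no real obstacle; the only substantive point to keep in mind is the standing assumption $|A_i|\ge 2$, which is what makes the two ``contradiction by choosing an off-coordinate'' arguments go through and hence what makes crosses in twofold products genuinely distinct from the whole product.
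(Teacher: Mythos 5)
Your proposal is correct and follows essentially the same strategy as the paper: reduce to the twofold projection $S_{ij}$, observe it must be a cross rather than the full product, and then force the parameters. The paper finishes by case-splitting on which of the equalities $a=a'$, $b=b'$ fail and showing the intersection of the two crosses is too small to be a cross or the whole product; your ``rigidity'' formulation (a cross contained in a cross shares its parameters, proved by picking an off-coordinate on each spine) is a clean repackaging of the same two-dimensional combinatorics, applied twice instead of to the intersection once.
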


\begin{proof}
Suppose first that neither of the equalities $a = a'$, $b = b'$ holds. Then, it is easy to see that $S_{ij} \subseteq S'_{ij} = \{\langle a',b\rangle,\langle a,b'\rangle\}$, which implies that $S_{ij}$ is neither $A_i \times A_j$ nor a cross of $A_i \times A_j$, contradicting our hypotheses. Suppose now that exactly one of the equalities holds, say, $a = a'$. Then, $S_{ij} \subseteq S_{ij}' = \{a\} \times A_j$, which again is inconsistent with our hypotheses.
\end{proof}

Let $\mathcal{C}$ be a set of crosses of $P$. A cross $C \in \mathcal{C}$ is {\em $\mathcal{C}$-reducible} provided there are compatible $D,E \in \bar{\mathcal{C}}$ satisfying $C = D \circ E$. We say that $C$ is {\em $\mathcal{C}$-irreducible} if it is not $\mathcal{C}$-reducible.

Here is the main result of this subsection.

\begin{theorem} \label{TEO: descomposicion en cruces irreducibles}
Let $\mathcal{C}$ be a set of crosses of $P$, and put $S := \bigcap \mathcal{C}$. Suppose that for each $i,j \in \{1,\ldots,n\}$ with $i \ne j$ we have that either $S_{ij} = A_i \times A_j$ or $S_{ij}$ is a cross of $A_i \times A_j$.
\begin{enumerate}[$1.$]
\item For any cross $C$ of $P$ we have that $S \subseteq C$ iff $C \in \bar{\mathcal{C}}$.
\item Let $\mathcal{I}$ be the set of $\mathcal{C}$-irreducible crosses, and $\mathcal{B} \subseteq \bar{\mathcal{C}}$. Then, $\bar{\mathcal{B}} = \bar{\mathcal{C}}$ iff $\mathcal{I} \subseteq \mathcal{B}$. 
\end{enumerate}
\end{theorem}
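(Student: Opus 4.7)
The plan is to prove Part 1 by induction on $n$, the number of factors of $P$, and then to deduce Part 2 from it.

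For Part 1, the direction $C \in \bar{\mathcal{C}} \Rightarrow S \subseteq C$ is Lemma \ref{LEMA: cruces - direccion facil}. For the converse, given $C = \cross{i}{a}{j}{b}$ with $S \subseteq C$, the base case $n = 2$ is easy: no compositions exist so $\bar{\mathcal{C}} = \mathcal{C}$, and the projection hypothesis together with Lemma \ref{LEMA: C bueno no repite cruces} forces $C$ itself to appear in $\mathcal{C}$. For $n \geq 3$, I would pick any $k \notin \{i,j\}$ and set $\mathcal{C}' := \{D \in \bar{\mathcal{C}} : k \text{ is not an index of } D\}$, viewed as a set of crosses on $P' := \prod_{\ell \neq k} A_\ell$. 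Two sub-claims drive the induction step: (a) $\mathcal{C}'$ is already closed under composition in $P'$, since any such composition is also a composition in $P$ of elements of $\bar{\mathcal{C}}$ through an index $\neq k$ and so stays in $\bar{\mathcal{C}}$ while avoiding $k$; and (b) $\bigcap \mathcal{C}' = \pi_{-k}(S)$, the crucial inclusion being that every $\bar{s}' \in \bigcap \mathcal{C}'$ extends to some $\bar{s} \in S$. For (b), the set of values forced on $s_k$ by crosses in $\bar{\mathcal{C}}$ involving $k$ forms at most a singleton: two distinct forced values $v_1 \neq v_2$ coming from $\cross{p}{u_1}{k}{v_1}, \cross{p'}{u_2}{k}{v_2} \in \bar{\mathcal{C}}$ (with $\bar{s}'_p \neq u_1$, $\bar{s}'_{p'} \neq u_2$) would compose through $k$ to give $\cross{p}{u_1}{p'}{u_2} \in \mathcal{C}'$ refuted by $\bar{s}'$ when $p \neq p'$, and would contradict the projection hypothesis on $S_{pk}$ directly when $p = p'$. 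With (a) and (b) in hand, the projection hypothesis transfers to $\mathcal{C}'$ on $P'$; since $C$ does not involve $k$ and $\pi_{-k}(S) \subseteq C$, the inductive hypothesis yields $C \in \mathcal{C}' \subseteq \bar{\mathcal{C}}$.

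For Part 2, the direction $\Rightarrow$ is a brief contrapositive: if $I \in \mathcal{I} \setminus \mathcal{B}$, then $I \in \bar{\mathcal{B}} \setminus \mathcal{B}$ forces $I$ to be a composition of two elements of $\bar{\mathcal{B}} = \bar{\mathcal{C}}$, i.e., $\mathcal{C}$-reducible, contradicting $I \in \mathcal{I}$. For $\Leftarrow$, $\bar{\mathcal{B}} \subseteq \bar{\mathcal{C}}$ is immediate, and I would obtain the reverse by applying Part 1 to $\mathcal{B}$ after establishing $\bigcap \mathcal{B} = S$; then Part 1 identifies both $\bar{\mathcal{B}}$ and $\bar{\mathcal{C}}$ as the set of crosses of $P$ containing $S$. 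Via $\mathcal{I} \subseteq \mathcal{B} \subseteq \bar{\mathcal{C}}$ and Lemma \ref{LEMA: cruces - direccion facil}, this reduces to the auxiliary equality $\bigcap \mathcal{I} = S$, equivalently $\bar{\mathcal{I}} = \bar{\mathcal{C}}$.

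The main obstacle is precisely this auxiliary equality, which resists naive induction on decomposition depth: a reducible cross in $\mathcal{C} \setminus \mathcal{I}$ may appear in its own iterated decomposition, so no obvious measure strictly decreases. The cleanest route is to verify the projection hypothesis for $\bigcap \mathcal{I}$ directly (exploiting that by Lemma \ref{LEMA: C bueno no repite cruces} each index pair $\{p,q\}$ admits at most one cross of $\bar{\mathcal{C}}$, so the projections of $\bigcap \mathcal{I}$ are pinched between $S_{pq}$ and a superset of the same shape) and then to invoke Part 1 applied to $\mathcal{I}$ to characterize $\bar{\mathcal{I}}$ as the set of crosses of $P$ containing $\bigcap \mathcal{I}$; matching this against Part 1 for $\mathcal{C}$ forces $\bigcap \mathcal{I} = S$ and $\bar{\mathcal{I}} = \bar{\mathcal{C}}$ simultaneously, completing the argument.
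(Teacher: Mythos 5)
Your proof of Part 1 by induction on $n$ is correct and takes a genuinely different route from the paper's. The paper encodes the crosses into an auxiliary directed graph $\mathbf{G}$ (with positive/negative nodes $ia$, $\neg ia$ and edges coming from the crosses) and proves Part 1 through a chain of claims about consistent node-sets and $\mathcal{B}$-paths. Your argument instead peels off one coordinate $k \notin \{i,j\}$ at a time; the two sub-claims you state, that $\mathcal{C}'$ is already closed under composition and that $\bigcap \mathcal{C}' = \pi_{-k}(S)$, both check out (in the latter, the two-case argument that at most one value of $s_k$ is forced, using Lemma \ref{LEMA: C bueno no repite cruces} when $p=p'$ and a composition through $k$ landing in $\mathcal{C}'$ when $p\neq p'$, is exactly right). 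Your approach is arguably more elementary and self-contained; the paper's graph encoding, however, is not wasted effort, because the graph is reused to finish Part 2.

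The direction ``$\Rightarrow$'' of Part 2 is fine: any member of $\bar{\mathcal{B}}\setminus\mathcal{B}$ is, by construction, a composition of compatible members of $\bar{\mathcal{B}} = \bar{\mathcal{C}}$, hence $\mathcal{C}$-reducible.

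The direction ``$\Leftarrow$'' of Part 2 has a genuine gap, which you correctly locate at the claim $\bar{\mathcal{I}} = \bar{\mathcal{C}}$ (equivalently $\bigcap\mathcal{I} = S$), but your proposed ``pinching'' does not close it. Even granting that $\bigcap\mathcal{I}$ satisfies the projection hypothesis, Part 1 applied to $\mathcal{I}$ only yields $\bar{\mathcal{I}} = \{\text{crosses } C : \bigcap\mathcal{I}\subseteq C\}$, and since $\bigcap\mathcal{I}\supseteq S$ this is a subset of $\bar{\mathcal{C}}$ but need not be all of it. The case your sketch does not control is an index pair $\{p,q\}$ for which $S_{pq}$ is a cross but $\mathcal{I}$ contains no cross with indices $p,q$ (because the unique such cross of $\bar{\mathcal{C}}$ was decomposed away through a third coordinate): then $(\bigcap\mathcal{I})_{pq}$ may a priori be all of $A_p\times A_q$, so that cross of $\bar{\mathcal{C}}$ would not contain $\bigcap\mathcal{I}$ and would not lie in $\bar{\mathcal{I}}$. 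Ruling this out is precisely the termination problem you flag (a cross reappearing in its own iterated decomposition), and it needs a real argument. The paper resolves it by proving that the digraph $\mathbf{G}$ is acyclic (a consequence of Lemma \ref{LEMA: C bueno no repite cruces} combined with Part 1), so that for any $\cross{i}{a}{j}{b}\in\bar{\mathcal{C}}$ there is a \emph{maximum-length} path $\neg ia \path jb$, and a maximum-length path can only use edges coming from $\mathcal{C}$-irreducible crosses, since a reducible edge could always be replaced by a longer detour. Because your Part 1 proof bypasses the graph entirely, you would need to rebuild an equivalent acyclicity/well-foundedness tool from scratch to complete Part 2.
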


\begin{proof}
We start by defining the directed graph $\mathbf{G} := \langle G^+ \cup G^-, E \cup F\rangle$ with
\begin{itemize}
\item $G^+ := \{ia : i \in \{1,\ldots,n\}, a \in A_i\}$,
\item $G^- := \{\neg ia : i \in \{1,\ldots,n\}, a \in A_i\}$,
\item $E := \{\langle ia,\neg ia'\rangle: i \in \{1,\ldots,n\}, a,a' \in A_i, a \ne a'\}$,
\item $F := \{\langle \neg ia,jb\rangle: \cross{i}{a}{j}{b} \in \mathcal{C}\}$.
\end{itemize}
We also set $G := G^+ \cup G^-$. Given $g,g' \in G$ we write $g \to g'$ if $\langle g,g'\rangle \in E \cup F$ and $g \path g'$ if $g = g'$ or there is a directed path from $g$ to $g'$ in $\mathbf{G}$. We also put $\neg g := \neg ia$ if $g = ia$ and $\neg g := ia$ if $g = \neg ia$. 

The proof of the theorem follows from a series of claims. We start with an easy observation.

\begin{claim} \label{CL: caminos contrarreciprocos}
If $g \path g'$, then $\neg g' \path \neg g$.
\end{claim}

For a subset $X \subseteq G$ let $X^\path := \{g \in G: \text{ there is } x \in X \text{ with } x \path g\}$. We say that $X$ is {\em inconsistent} if there is $g \in G$ such that $g, \neg g \in X^\path$; $X$ is {\em consistent} if it is not inconsistent.

Next for $\bar{a} \in P$ let $X_{\bar{a}} := \{ia_i: i \in \{1,\ldots,n\}\}$.

\begin{claim} \label{CL: asignaciones en S sii consistencia}
For all $\bar{a} \in P$ we have that $\bar{a} \in S$ if and only if $X_{\bar{a}}$ is consistent.
\end{claim}

For the forward implication let $Y := \{\neg ia': i \in \{1,\ldots,n\}, a' \in A_i \setminus \{a_i\}\}$. Note that $Y$ is the set of nodes that can be reached from $X_{\bar{a}}$ in exactly one step. Thus, $X_{\bar{a}}^\path = X_{\bar{a}} \cup Y^\path$. Note that $Y \subseteq G^-$, so if $\neg ia' \in Y$ and $\neg ia' \to g$, the edge $\langle \neg ia',g\rangle$ must belong to $F$. Hence, there is a cross $\cross{i}{a'}{j}{b} \in \mathcal{C}$ such that $g = jb$. Since $\bar{a} \in S \subseteq  \cross{i}{a'}{j}{b}$ and $a_i \ne a'$, we have that $a_j = b$, that is, $g \in X_{\bar{a}}$. This proves that $X_{\bar{a}}^\path = X_{\bar{a}} \cup Y$, so $X_{\bar{a}}$ is consistent.

For the converse assume $X_{\bar{a}}$ is consistent and let $\cross{i}{a}{j}{b} \in \mathcal{C}$. Suppose $a_i \ne a$. Then, we have a path $ia_i \to \neg ia \to jb$ in $\mathbf{G}$, and thus $a_j = b$ because otherwise $X_{\bar{a}}$ would be inconsistent. This proves that $\bar{a} \in \cross{i}{a}{j}{b}$. 

\begin{claim} \label{CL: los singuletes son consistentes}
For all $i \in \{1,\ldots,n\}$ and $a \in A_i$ we have that $\{ia\}$ is consistent.
\end{claim}

Note that, by hypothesis, the $i$-th canonical projection of $S$ is onto $A_i$. Thus, there is $\bar{a} \in S$ such that $a_i = a$, and by Claim \ref{CL: asignaciones en S sii consistencia}, $X_{\bar{a}}$ is consistent.

\begin{claim} \label{CL: caracterizacion de inconsistencia}
If $X \subseteq G^+$ is inconsistent, there are $x,y \in X$ with $x \ne y$ and $x \path \neg y$.
\end{claim}

Since $X$ is inconsistent, there is $g \in G$ such that $g, \neg g \in X^\path$, that is, there are $x,y \in X$ with $x \path g$ and $y \path \neg g$. By Claim \ref{CL: los singuletes son consistentes}, $x \ne y$, and by Claim \ref{CL: caminos contrarreciprocos}, $g \path \neg y$, and so $x \path \neg y$.

\begin{claim} \label{CL: consistentes se extienden a asignaciones completas de S}
If $X \subseteq G^+$ is consistent, there is $\bar{a} \in S$ such that $X \subseteq X_{\bar{a}}$. 
\end{claim}

Let $X \subseteq G^+$ be consistent, and suppose $Z$ is maximal among the consistent subsets of $G^+$ containing $X$. Observe that $Z = Z^\path \cap G^+$ by maximality. Let $I := \{i \in \{1,\ldots,n\}: ia \in Z \text{ for some } a \in A_i\}$; we claim that $I = \{1,\ldots,n\}$. For the sake of contradiction, suppose there is $j \in \{1,\ldots,n\} \setminus I$. Fix $b \in A_j$ and note that $Z \cup \{jb\}$ is inconsistent. By Claim \ref{CL: caracterizacion de inconsistencia}, there must be $ia \in Z$ such that $ia \path \neg jb$. Since the only edges incoming to $\neg jb$ are of the form $\langle jb',\neg jb\rangle$, there must be $b' \in A_j$ such that $ia \path jb'$. Hence, $jb' \in Z^\path \cap G^+ = Z$, a contradiction.

Finally, note that for each $i \in \{1,\ldots,n\}$ there is exactly one $a_i \in A_i$ such that $ia_i \in Z$, because otherwise $Z$ would be inconsistent. Thus, $Z = X_{\bar{a}}$, and by Claim \ref{CL: asignaciones en S sii consistencia} we have $\bar{a} \in S$. 

\medskip

For $\mathcal{B} \subseteq \bar{\mathcal{C}}$ define $F_\mathcal{B} := \{\langle \neg ia,jb\rangle: \cross{i}{a}{j}{b} \in \mathcal{B}\}$. Given $g,g' \in G$ we say that there is a {\em $\mathcal{B}$-path} from $g$ to $g'$, and write $g \overset{\mathcal{B}}{\path} g'$, if there is a path in $\mathbf{G}$ from $g$ to $g'$ with edges in $E \cup F_\mathcal{B}$.

\begin{claim} \label{CL: camino sii cruz}
Let $\mathcal{B} \subseteq \bar{\mathcal{C}}$,  $i,j \in \{1,\ldots,n\}$, $i \ne j$, and $a \in A_i$, $b \in A_j$. Then, $\neg ia \overset{\mathcal{B}}{\path} jb$ if and only if $\cross{i}{a}{j}{b} \in \bar{\mathcal{B}}$.
\end{claim}

The direction from right to left is a straightforward induction and therefore left to the reader. We prove the remaining implication. Suppose $\neg ia \overset{\mathcal{B}}{\path} jb$. We proceed by induction on the length of the $\mathcal{B}$-path from $\neg ia$ to $jb$. If this length is one, the claim follows from the definition of $\mathcal{B}$-path. So suppose there are $k \in \{1,\ldots,n\}$, with $k \ne i$, and $c,c' \in A_k$, with $c \ne c'$, such that $\neg ia \to kc \to \neg kc' \overset{\mathcal{B}}{\path} jb$ with $\langle \neg ia,kc\rangle \in F_\mathcal{B}$. Clearly, we can assume without loss that $kc \ne jb$. Observe that $k \ne j$. Otherwise, $jc \to \neg j c' \path jb \to \neg jc$, and $\{jc\}$ would be inconsistent in contradiction with Claim \ref{CL: los singuletes son consistentes}. Thus, by the induction hypothesis, the cross $\cross{k}{c'}{j}{b} \in \bar{\mathcal{B}}$. Furthermore, since $\langle \neg ia, kc\rangle \in F_\mathcal{B}$, we have $\cross{i}{a}{k}{c} \in \mathcal{B}$. Finally, as these crosses are compatible, $\cross{i}{a}{j}{b} = \cross{i}{a}{k}{c} \circ \cross{k}{c'}{j}{b} \in \bar{\mathcal{B}}$. 

\begin{claim}
If $C$ is a cross of $P$ such that $S \subseteq C$, then $C \in \bar{\mathcal{C}}$.
\end{claim}

Let $C := \cross{i}{a}{j}{b}$ be a cross of $P$ such that $S \subseteq C$. Take $a' \in A_i \setminus \{a\}$ and $b' \in A_j \setminus \{b\}$. Observe that, by Claim \ref{CL: consistentes se extienden a asignaciones completas de S}, the set $\{ia',jb'\}$ must be inconsistent. Now, by Claim \ref{CL: caracterizacion de inconsistencia}, we have $ia' \path \neg jb'$. Thus, there are $a'' \in A_i$ and $b'' \in A_j$ such that $ia' \to \neg ia'' \path jb'' \to \neg jb'$. Using Claim \ref{CL: camino sii cruz}, the cross $\cross{i}{a''}{j}{b''} \in \bar{\mathcal{C}}$. Hence, by Lemma \ref{LEMA: cruces - direccion facil}, we have that $S \subseteq \cross{i}{a''}{j}{b''}$, and thus $S \subseteq S' := \cross{i}{a}{j}{b} \cap \cross{i}{a''}{j}{b''}$. By Lemma \ref{LEMA: C bueno no repite cruces}, $a = a''$ and $b = b''$.

\medskip

This takes care of 1.\ since the above claim is the direction from left to right and the other direction follows from Lemma \ref{LEMA: cruces - direccion facil}. We complete the proof with the following three claims.

\begin{claim} 
There are no directed cycles in $\mathbf{G}$.
\end{claim}

For the sake of contradiction suppose there is a directed cycle from $ia$ to $ia$ for some $i \in \{1,\ldots,n\}$ and $a \in A_i$ (the case where the cycle starts and ends in $G^-$ is symmetrical in view of Claim \ref{CL: caminos contrarreciprocos}). By the definition of $\mathbf{G}$ there are $j \in \{1,\ldots,n\}$ with $j \ne i$, $a' \in A_i$ and $b,b' \in A_j$ with $a \ne a'$ and $b \ne b'$ such that $ia \to \neg ia' \to jb' \to \neg jb \path ia$. Thus, by Claim \ref{CL: camino sii cruz}, we have $\cross{i}{a}{j}{b}, \cross{i}{a'}{j}{b'} \in \bar{\mathcal{C}}$, and by 1.\, this entails $S \subseteq \cross{i}{a}{j}{b} \cap \cross{i}{a'}{j}{b'}$. This is a contradiction in view of Lemma \ref{LEMA: C bueno no repite cruces}.

\begin{claim} \label{CL: I barra = C barra}
$\bar{\mathcal{I}} = \bar{\mathcal{C}}$.
\end{claim}

The inclusion $\subseteq$ is obvious. Take $\cross{i}{a}{j}{b} \in \bar{\mathcal{C}}$, note that $\neg ia \path jb$. Since $\mathbf{G}$ is acyclic, we can take a path $\pi$ of maximum length from $\neg ia$ to $jb$. Note that $\pi$ must be an $\mathcal{I}$-path because otherwise $\pi$ could be lengthened. Now, by Claim \ref{CL: camino sii cruz}, we have that $\cross{i}{a}{j}{b} \in \bar{\mathcal{I}}$.

\begin{claim} \label{CL: D barra = C barra implica I < D}
If $\mathcal{B} \subseteq \bar{\mathcal{C}}$ is such that $\bar{\mathcal{B}} = \bar{\mathcal{C}}$, then $\mathcal{I} \subseteq \mathcal{B}$. 
\end{claim}

Take $C \in \mathcal{I}$. Note that $C \in \bar{\mathcal{B}}$ and, since $C$ is $\mathcal{C}$-irreducible, it cannot belong to $\bar{\mathcal{B}} \setminus \mathcal{B}$. 

\medskip

Finally, observe that the direction from left to right in 2.\ is Claim \ref{CL: D barra = C barra implica I < D} and the other direction follows from Claim \ref{CL: I barra = C barra}.
\end{proof}

\subsubsection{Congruence systems in $\mathcal{D}$}

Next we apply the combinatorial analysis in the previous section to investigate congruence systems in $\mathcal{D}$.

Let $\mathbf{A}$ be an algebra and let $M(x,y,z)$ be a ternary term in the language of $\mathbf{A}$. Recall that $M$ is a {\em majority term} for $\mathbf{A}$ provided that $\mathbf{A} \vDash M(x,x,y) = M(x,y,x) = M(y,x,x) = x$. Note also that any dual discriminator term for $\mathbf{A}$ is a majority term.

\begin{lemma}[\cite{BakerPixley-PolyInterpCRT}] \label{LEMA: m -> subalg determinadas por ij-projecciones}
Let $\mathbf{A}_1, \ldots, \mathbf{A}_n$, with $n \geq 2$, be algebras with a common majority term, and let $\mathbf{A}, \mathbf{B}$ be finite subalgebras of $\mathbf{A}_1 \times \ldots \times \mathbf{A}_n$. Then, if $A_{ij} \subseteq B_{ij}$ for all $i,j \in \{1,\ldots,n\}$ with $i \ne j$, we have $A \subseteq B$.
\end{lemma}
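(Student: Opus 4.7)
The plan is to establish, by induction on $k$ with $2 \leq k \leq n$, the following strengthening of the conclusion: for every $\bar a \in A$ and every $I \subseteq \{1,\ldots,n\}$ with $|I| = k$, there exists $\bar b \in B$ such that $b_i = a_i$ for all $i \in I$. Taking $k = n$ and $I = \{1,\ldots,n\}$ will then force $\bar b = \bar a \in B$, so that $A \subseteq B$. The case $n = 2$ is trivial: $A = A_{12} \subseteq B_{12} = B$ by assumption, so throughout the argument I assume $n \geq 3$.

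For the base case $k = 2$, the hypothesis $A_{ij} \subseteq B_{ij}$ is literally the required statement: given $\bar a \in A$ and $I = \{i,j\}$, the pair $\langle a_i, a_j\rangle$ belongs to $A_{ij}$, hence to $B_{ij}$, so some $\bar b \in B$ witnesses it.

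For the inductive step from $k$ to $k+1$, assuming $k+1 \leq n$ and $|I| = k+1 \geq 3$, I would fix three distinct indices $i_1, i_2, i_3 \in I$ and invoke the induction hypothesis on each of the $k$-element subsets $I \setminus \{i_r\}$ for $r = 1,2,3$, obtaining $\bar b^{(r)} \in B$ agreeing with $\bar a$ on $I \setminus \{i_r\}$. Letting $M$ be a common majority term for $\mathbf{A}_1, \ldots, \mathbf{A}_n$, I would then set
\[
\bar b := M(\bar b^{(1)}, \bar b^{(2)}, \bar b^{(3)}),
\]
computed coordinatewise. Since $B$ is a subalgebra, $\bar b \in B$. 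A coordinatewise check then does the rest: at any $i \in I \setminus \{i_1,i_2,i_3\}$, all three $\bar b^{(r)}$ equal $a_i$, so $b_i = M(a_i,a_i,a_i) = a_i$; at $i = i_r$, the two tuples $\bar b^{(s)}$ with $s \neq r$ satisfy $b^{(s)}_i = a_i$, and the majority identity $M(x,y,y) = M(y,x,y) = M(y,y,x) = y$ forces $b_i = a_i$.

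The only real obstacle is that the three-tuple majority trick requires three distinct indices to remove, which is why the induction cannot start below $k = 2$; this matches perfectly with the fact that the pairwise projection hypothesis is exactly the base case. Nothing else in the argument uses the specific structure of dual discriminator varieties — only the existence of a common majority term — which is why this lemma is cited to Baker and Pixley.
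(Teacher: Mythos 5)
The paper cites this lemma to Baker and Pixley and does not supply a proof of its own, so there is no internal argument to compare against. Your proof is correct: the induction on the size $k$ of the index set, using the pairwise-projection hypothesis as the base case $k=2$ and the majority term applied coordinatewise to bump $k$ up to $k+1$, is exactly the standard Baker--Pixley style argument for this fact. The coordinatewise verification via $M(x,x,y)=M(x,y,x)=M(y,x,x)=x$ is complete, and the requirement that one must remove three distinct indices correctly explains why the base case starts at $k=2$. One incidental remark: your argument never uses that $A$ or $B$ is finite --- only the finiteness of $n$ matters --- so the finiteness hypothesis in the statement is not actually needed for this particular conclusion (it plays a role elsewhere in the Baker--Pixley theory, e.g.\ when interpolating by terms, but not here).
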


Let $\mathbf{A}$ be a subalgebra of a product $\mathbf{A}_1 \times \ldots \times \mathbf{A}_n$, and let $\pi_i|_A\colon A \to A_i$ be given by $\pi_i|_A(a_1,\ldots,a_n) := a_i$. The algebra $\mathbf{A}$ is a {\em subdirect subalgebra} (of $\mathbf{A}_1 \times \ldots \times \mathbf{A}_n$) provided that each $\pi_i|_A$ is onto. In this context, for $i \in \{1,\ldots,n\}$ we write $\rho_i$ to denote the kernel of $\pi_i|_A$, that is,
\[
\rho_i := \{\langle\bar{a},\bar{b}\rangle \in A^2: a_i = b_i\}.
\]
A subdirect algebra $\mathbf{A}$ of $\mathbf{A}_1 \times \ldots \times \mathbf{A}_n$ is {\em irredundant}, in symbols $\mathbf{A} \leq_{isd} \mathbf{A}_1 \times \ldots \times \mathbf{A}_n$, if for all $i,j \in \{1,\ldots,n\}$ we have that $\rho_i \subseteq \rho_j$ implies $i = j$.

\begin{lemma} \label{LEMA: subalg cuadrado - disc dual}
Let $\mathbf{A}$ and $\mathbf{B}$ be algebras with a common dual discriminator term. Let $\mathbf{S}\leq_{isd} \mathbf{A}\times\mathbf{B}$. One of the following holds:
\begin{enumerate}[$1.$]
\item $S = A \times B$.
\item \label{item:3} $S = (\{a\}\times B)\cup(A \times \{b\})$ for some $a\in A$ and $b\in B$.
\end{enumerate}
\end{lemma}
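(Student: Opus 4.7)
The plan is to show that, assuming $S \ne A \times B$, the set $S$ must have the cross form in case 2. The device is a slice analysis: for each $a \in A$ define the fiber $S_a := \{b \in B : \langle a, b\rangle \in S\}$, and analogously $S^b \subseteq A$ for $b \in B$. By subdirectness each $S_a$ is nonempty and $\bigcup_{a \in A} S_a = B$, with the symmetric facts for $\{S^b\}_{b \in B}$.

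The main technical step is a monotonicity lemma: \emph{if $|S_a| \geq 2$ and $a' \ne a$, then $S_{a'} \subseteq S_a$}. Indeed, pick distinct $b_1, b_3 \in S_a$ and any $b' \in S_{a'}$; since the common dual discriminator $q$ satisfies $q(a,a,a') = a$ and $q(b_1,b_3,b') = b'$ (as $b_1 \ne b_3$), applying $q$ coordinatewise to $\langle a, b_1\rangle, \langle a, b_3\rangle, \langle a', b'\rangle \in S$ yields $\langle a, b'\rangle \in S$, i.e., $b' \in S_a$. Combined with $\bigcup_{a'} S_{a'} = B$ this upgrades to: \emph{if $|S_a| \geq 2$ then $S_a = B$}. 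The symmetric statements hold for the $S^b$.

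With this the case split is short. If $|S_a| \geq 2$ for every $a$ then each $S_a = B$ and we are in case 1. Otherwise, let $A_2 := \{a : |S_a| \geq 2\}$; this cannot be empty, because if every $S_a$ were a singleton then $S$ would be the graph of a surjective map $A \to B$, giving $\rho_A = \Delta \subseteq \rho_B$ and contradicting irredundancy. The dual analysis yields a nonempty $B_2 := \{b : |S^b| \geq 2\}$, and $B \setminus B_2 \ne \emptyset$ since we are not in case 1. Since $A_2 \times B \subseteq S$, every $b \in B \setminus B_2$ satisfies $A_2 \subseteq S^b$ with $|S^b| = 1$, forcing $|A_2| = 1$; symmetrically $|B_2| = 1$. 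Writing $A_2 = \{a_0\}$ and $B_2 = \{b_0\}$, for $a \ne a_0$ the singleton fiber $S_a$ must contain $b_0$ (as $S^{b_0} = A$ gives $\langle a, b_0\rangle \in S$), whence $S_a = \{b_0\}$. This produces exactly $S = (\{a_0\} \times B) \cup (A \times \{b_0\})$.

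The main delicate point I anticipate is the invocation of irredundancy: it is precisely what excludes the genuinely distinct ``third flavor'' of subalgebra, namely a graph of an isomorphism $\mathbf{A} \to \mathbf{B}$, which would otherwise be a perfectly good subalgebra of $\mathbf{A} \times \mathbf{B}$. Beyond pinning that down, the proof is clean combinatorial bookkeeping on the two families of fibers.
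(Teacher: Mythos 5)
Your proof is correct, and it takes a genuinely different route from the paper's. The paper disposes of this lemma in one line: it cites Theorem~2.4 of Fried and Pixley, which classifies subalgebras of $\mathbf{A}\times\mathbf{B}$ (for $\mathbf{A},\mathbf{B}$ sharing a dual discriminator term) into full products, crosses, and graphs of homomorphisms, and then simply observes that the graph case is incompatible with irredundancy. You instead re-derive this trichotomy from scratch: the fiber-absorption lemma (if $|S_a|\geq 2$ then $S_a = B$, obtained by applying $q$ coordinatewise to $\langle a,b_1\rangle,\langle a,b_3\rangle,\langle a',b'\rangle$) is precisely the engine behind the Fried--Pixley result in the two-factor case, and your disposal of the all-singleton-fiber possibility via $\rho_A = \Delta \subseteq \rho_B$ coincides exactly with the paper's use of irredundancy. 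Your approach costs about a page where the paper spends a sentence, but it is self-contained and makes visible the role of the dual discriminator; it also correctly pins down, via the counting argument $|A_2| = |B_2| = 1$, that the non-full non-graph case is exactly a cross. One small point that you leave implicit but that is needed for the ``we are not in case~1'' steps: irredundancy forces both $|A|\geq 2$ and $|B|\geq 2$ (if, say, $|B|=1$ then $\rho_B=\nabla_S\supseteq\rho_A$), which is what guarantees that $B_2=B$ really does put you back in case~1.
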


\begin{proof}
The result follows from \cite[Theorem 2.4]{FriedPixley-DualDiscrimnator} by observing that $S$ cannot be the graph of a homomorphism since $\mathbf{S}$ is irredundant.
\end{proof}

The combination of Lemmas \ref{LEMA: m -> subalg determinadas por ij-projecciones} and \ref{LEMA: subalg cuadrado - disc dual} provides a powerful tool to handle subproducts in dual discriminator varieties, which, together with Theorem \ref{TEO: descomposicion en cruces irreducibles}, allows us to fully unravel the relationship between subproducts and sets of crosses.

\begin{lemma} \label{LEMA: coro cruces}
Let $\mathbf{A} \leq_{isd} \mathbf{A}_1 \times \ldots \times \mathbf{A}_n$, and suppose there is a common dual discriminator term for $\mathbf{A}_1, \ldots, \mathbf{A}_n$. Let $\mathcal{C}_A$ be the set of crosses of $A_1 \times \ldots \times A_n$ containing $A$. The following holds:
\begin{enumerate}[$1.$]
\item $\bigcap \mathcal{C}_A = A$.

\item $\bar{\mathcal{C}}_A = \mathcal{C}_A$.

\item For $\mathcal{B} \subseteq \mathcal{C}_A$ the following are equivalent:
\begin{enumerate}[\rm (a)]
\item $\bigcap \mathcal{B} = A$.
\item $\bar{\mathcal{B}} = \mathcal{C}_A$.
\item Each $\mathcal{C}_A$-irreducible cross is in $\mathcal{B}$.
\end{enumerate}
\end{enumerate}
\end{lemma}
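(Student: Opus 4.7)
The plan hinges on the following preliminary observation: under our hypotheses, for every $i\ne j$ the twofold projection $A_{ij}$ of $A$ into $A_i\times A_j$ is either all of $A_i\times A_j$ or a cross of $A_i\times A_j$. Indeed, $A_{ij}$ is a subdirect subalgebra of $\mathbf{A}_i\times\mathbf{A}_j$, and by Lemma~\ref{LEMA: subalg cuadrado - disc dual} it must be either the graph of a homomorphism, the full product, or a cross; the graph case is excluded because it would yield $\rho_i\subseteq\rho_j$ on $A$, contradicting the irredundancy $\mathbf{A}\leq_{isd}\mathbf{A}_1\times\cdots\times\mathbf{A}_n$. It is also worth noting at the outset that crosses are in fact subalgebras of the product (a straightforward case analysis shows that crosses of $A_i\times A_j$ are closed under the coordinate-wise dual discriminator, and the crosses of $A_1\times\cdots\times A_n$ then arise as preimages).

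For Part~1, the inclusion $A\subseteq\bigcap\mathcal{C}_A$ is immediate. For the converse, since $\bigcap\mathcal{C}_A$ is an intersection of subalgebras and hence a subalgebra, I would invoke the majority term lemma (Lemma~\ref{LEMA: m -> subalg determinadas por ij-projecciones}) to reduce the problem to verifying $(\bigcap\mathcal{C}_A)_{ij}\subseteq A_{ij}$ for all $i\ne j$. This is trivial when $A_{ij}=A_i\times A_j$; when $A_{ij}=\{a_0\}\times A_j\cup A_i\times\{b_0\}$, we have $A\subseteq\cross{i}{a_0}{j}{b_0}$, hence $\cross{i}{a_0}{j}{b_0}\in\mathcal{C}_A$ contains $\bigcap\mathcal{C}_A$, and every element of $\bigcap\mathcal{C}_A$ has its $(i,j)$-projection in $A_{ij}$.

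Parts~2 and~3 then fall out quickly. Part~2 is immediate from Part~1 and Lemma~\ref{LEMA: cruces - direccion facil}: any $C\in\bar{\mathcal{C}}_A$ contains $\bigcap\mathcal{C}_A=A$, and so lies in $\mathcal{C}_A$. For Part~3, Part~1 combined with the preliminary observation provides exactly the projection hypothesis needed to apply Theorem~\ref{TEO: descomposicion en cruces irreducibles} with $\mathcal{C}=\mathcal{C}_A$; the equivalence (b)$\Leftrightarrow$(c) is then Part~2 of that theorem combined with $\bar{\mathcal{C}}_A=\mathcal{C}_A$ from our Part~2. For (a)$\Rightarrow$(b), if $\bigcap\mathcal{B}=A$ the same projection hypothesis holds for $\mathcal{B}$, and Part~1 of Theorem~\ref{TEO: descomposicion en cruces irreducibles} identifies $\bar{\mathcal{B}}$ with the set of crosses containing $A$, namely $\mathcal{C}_A$; conversely, $\bar{\mathcal{B}}=\mathcal{C}_A$ gives $\bigcap\mathcal{B}=\bigcap\bar{\mathcal{B}}=\bigcap\mathcal{C}_A=A$ by Lemma~\ref{LEMA: cruces - direccion facil} and Part~1. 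I expect the main obstacle to be pinning down the preliminary observation---specifically, recognizing that irredundancy is precisely what is needed to rule out the graph-of-homomorphism case that Theorem~\ref{TEO: descomposicion en cruces irreducibles} cannot handle; once this is in place the rest is routine bookkeeping over the tools already developed.
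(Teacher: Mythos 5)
Your proposal is correct and follows essentially the same route as the paper: reduce Part~1 to the projection identity $S_{ij}=A_{ij}$ via Lemma~\ref{LEMA: m -> subalg determinadas por ij-projecciones} and Lemma~\ref{LEMA: subalg cuadrado - disc dual}, deduce Part~2 from Part~1 and Lemma~\ref{LEMA: cruces - direccion facil}, and derive Part~3 by checking the projection hypothesis and invoking Theorem~\ref{TEO: descomposicion en cruces irreducibles} for the two nontrivial implications. The only addition you make beyond the paper is spelling out that crosses are subalgebras (so that $\bigcap\mathcal{C}_A$ is a subalgebra and the majority-term lemma applies) and that irredundancy passes to the twofold projections to rule out the graph case; the paper treats both points as understood.
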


\begin{proof}
1. Put $S := \bigcap \mathcal{C}_A$; by Lemma \ref{LEMA: m -> subalg determinadas por ij-projecciones}, it suffices to show that $S_{ij} = A_{ij}$ for all $i,j$. Fix $i,j \in \{1,\ldots,n\}$ with $i \ne j$. If $A_{ij} = A_i \times A_j$, clearly $A_{ij} = S_{ij}$ since $A \subseteq S$. Otherwise, by Lemma \ref{LEMA: subalg cuadrado - disc dual}, there are $a \in A_i$ and $b \in A_j$ such that $A_{ij} = (\{a\} \times A_j) \cup (A_i \times \{b\})$. Thus, $\cross{i}{a}{j}{b} \in \mathcal{C}_A$ and we are done.

2. This follows from 1.\ and Lemma \ref{LEMA: cruces - direccion facil}. 

3. (a)$\Rightarrow$(b) Clearly $\bar{\mathcal{B}} \subseteq \bar{\mathcal{C}}_A = \mathcal{C}_A$. For the other inclusion, take a cross $C \in \mathcal{C}_A$. So $\bigcap \mathcal{B} = A \subseteq C$, and by item 1.\ of Theorem \ref{TEO: descomposicion en cruces irreducibles} with $\mathcal{C} = \mathcal{B}$ we have $C \in \bar{\mathcal{B}}$.

(b)$\Rightarrow$(a) This follows from 1.\ and Lemma \ref{LEMA: cruces - direccion facil}. 

(b)$\Leftrightarrow$(c) Since $\bar{\mathcal{C}}_A = \mathcal{C}_A$, the equivalence follows from item 2.\ of Theorem \ref{TEO: descomposicion en cruces irreducibles} with $\mathcal{C} = \mathcal{C}_A$.
\end{proof}

We now turn to apply the previous results to the study of congruence systems in $\mathcal{D}$. We need the following well-known fact about congruences of semisimple congruence-distributive algebras. We include a sketch of the proof for convenience of the reader.

\begin{lemma} \label{LEMA: congs <-> F's}
Let $\mathbf{A} \leq_{isd} \mathbf{A}_1 \times \ldots \times \mathbf{A}_n$ be a congruence distributive algebra and suppose $\mathbf{A}_1,\ldots,\mathbf{A}_n$ are simple. The map $$\theta \mapsto F_\theta := \{i \in \{1,\ldots,n\}: a_i = b_i \text{ for all } \langle a,b\rangle \in \theta\}$$ is a dual lattice-isomorphism from $\Con \mathbf{A}$ onto the power set of $\{1,\ldots,n\}$ ordered by inclusion, whose inverse is $$F \mapsto \theta_F := \{\langle a,b\rangle \in A^2: a_i = b_i \text{ for all } i \in F\}.$$
\end{lemma}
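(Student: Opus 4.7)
The plan is to identify $\theta_F$ with the congruence $\bigcap_{i \in F} \rho_i$ and then use that each $\rho_i$ is a coatom of $\Conn \mathbf{A}$ (because $\mathbf{A}/\rho_i \cong \mathbf{A}_i$ is simple), combined with the hypothesis that $\Conn \mathbf{A}$ is distributive. With these observations, the bijection and its inverse drop out from two short computations.

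First I would unwind the definitions and note that $\theta_F = \bigcap_{i \in F} \rho_i$, so $\theta_F \in \Con \mathbf{A}$, and that $F \subseteq F'$ iff $\theta_{F'} \subseteq \theta_F$, giving the order-reversal for free. Then I would check $\theta_{F_\theta} = \theta$ as follows. The inclusion $\theta \subseteq \theta_{F_\theta}$ is immediate since $\theta \subseteq \rho_i$ for each $i \in F_\theta$. For the reverse, I would use distributivity of $\Conn \mathbf{A}$ and the fact that $\bigcap_{i=1}^n \rho_i = \Delta$ (by subdirectness, actually irredundance is not needed here) to write
\[
\theta = \theta \vee \Delta = \theta \vee \bigcap_{i=1}^n \rho_i = \bigcap_{i=1}^n (\theta \vee \rho_i).
\]
Since $\rho_i$ is a coatom, $\theta \vee \rho_i$ equals $\rho_i$ exactly when $\theta \subseteq \rho_i$, i.e.\ when $i \in F_\theta$, and equals $\nabla$ otherwise. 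Dropping the $\nabla$-factors then yields $\theta = \bigcap_{i \in F_\theta} \rho_i = \theta_{F_\theta}$.

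Next I would check $F_{\theta_F} = F$. The inclusion $F \subseteq F_{\theta_F}$ is trivial. For the reverse, suppose $j \in F_{\theta_F}$, so $\bigcap_{i \in F} \rho_i = \theta_F \subseteq \rho_j$. Since $\rho_j$ is a coatom of the distributive lattice $\Conn \mathbf{A}$, it is meet-prime; hence $\rho_i \subseteq \rho_j$ for some $i \in F$. Irredundance of the subdirect decomposition now forces $i = j$, so $j \in F$. Combining the two computations shows $\theta \mapsto F_\theta$ is a bijection with inverse $F \mapsto \theta_F$, and together with the order-reversal observed above, it is a dual lattice isomorphism.

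The only step that uses a nontrivial structural fact is the appeal to meet-primeness of the coatoms $\rho_j$ in the distributive lattice $\Conn \mathbf{A}$; everything else is essentially bookkeeping from the definitions of $\theta_F$ and $F_\theta$.
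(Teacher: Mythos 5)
Your proof is correct, and it takes a genuinely different route from the one in the paper. The paper dispatches the lemma by invoking the Foster--Pixley factorization theorem (\cite{FosterPixley64-CDimpliesFactCong}): in a congruence-distributive subdirect product, every congruence is the restriction of a unique product congruence $\theta_1 \times \cdots \times \theta_n$, and since each $\mathbf{A}_j$ is simple the factor $\theta_j$ is either $\Delta_{A_j}$ or $\nabla_{A_j}$, which immediately yields the power-set correspondence. You instead work entirely inside the lattice $\Conn \mathbf{A}$: observing that each $\rho_i$ is a coatom (since $\mathbf{A}/\rho_i \cong \mathbf{A}_i$ is simple) and that $\bigcap_i \rho_i = \Delta$ by subdirectness, you run the standard distributive-lattice computation $\theta = \theta \vee \bigcap_i \rho_i = \bigcap_i (\theta \vee \rho_i)$ to get $\theta_{F_\theta}=\theta$, and then use meet-primeness of coatoms plus irredundance for $F_{\theta_F}=F$. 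This is more self-contained and avoids the heavier factorization machinery, at the cost of a slightly longer argument; it also makes transparent exactly where subdirectness versus irredundance is used (only the second half needs irredundance). One small expository note: your claim that ``$F \subseteq F'$ iff $\theta_{F'} \subseteq \theta_F$'' gives order-reversal ``for free'' is slightly premature --- the forward implication is immediate from $\theta_F = \bigcap_{i\in F}\rho_i$, but the converse uses the bijectivity you establish afterwards (or, equivalently, the easy observation that $\theta\mapsto F_\theta$ is itself order-reversing); the logic is fine, but the order of presentation mildly begs the question.
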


\begin{proof}
From \cite{FosterPixley64-CDimpliesFactCong} we know that, if $\theta$ is a congruence of $\mathbf{A} \leq_{sd} \mathbf{A}_1 \times \ldots \times \mathbf{A}_n$ and $\mathbf{A}$ is congruence distributive, then there are unique congruences $\theta_j$ of $\mathbf{A}_j$ for $j \in \{1,\ldots,n\}$ such that $\theta = \langle \theta_1 \times \ldots \times \theta_n\rangle \cap A^2$. (Here $\theta_1 \times \ldots \times \theta_n := \{\langle \bar{a},\bar{b}\rangle \in (A_1 \times \ldots \times A_n\rangle^2: \langle a_i,b_i\rangle \in \theta_i \text{ for } i \in \{1,\ldots,n\}\}$.) Now, as each $\mathbf{A}_j$ is simple, the only choices for $\theta_j$ are the diagonal and total congruences of $\mathbf{A}_j$. From here the proof of the lemma follows easily.
\end{proof}

The notation $\theta_F, F_\theta$ is used in the sequel whenever the hypotheses of Lemma \ref{LEMA: congs <-> F's} hold.

Given $A \subseteq A_1 \times \ldots \times A_n$ and $F \subseteq \{1,\ldots,n\}$ we define
\[
A_F := \{b \in A_1 \times \ldots \times A_n: \text{ there is } a \in A \text{ with } a|_F = b|_F\}.
\]
Also, for $\mathcal{F} \subseteq \mathcal{P}(\{1,\ldots,n\})$ let
\[
A_\mathcal{F} := \bigcap_{F \in \mathcal{F}} A_F.
\]
Note that $A \subseteq A_F$ for any $F \subseteq \{1,\ldots,n\}$, and, if $\mathbf{A}_1, \ldots, \mathbf{A}_n$ are algebras and $A$ is a subuniverse of $\mathbf{A}_1 \times \ldots \times \mathbf{A}_n$, then $A_F$ is a subuniverse as well.

Next we present a first characterization of CR tuples for algebras in $\mathcal{D}$ in terms of crosses.

\begin{lemma} \label{LEMA: CRT en coordenadas}
Let $\mathbf{A} \leq_{isd} \mathbf{A}_1 \times \ldots \times \mathbf{A}_n$, with $n \geq 2$, where $\mathbf{A}_1, \ldots, \mathbf{A}_n$ have a common dual discriminator term. Let $\mathcal{C}_A$ be the set of crosses of $A_1 \times \ldots \times A_n$ containing $A$. 
\begin{enumerate}
\item For all $F \subseteq \{1,\ldots,n\}$ and $C \in \mathcal{C}_A$ we have
\[
A_F \subseteq C \text{ iff both indices of $C$ belong to $F$.}
\]

\item Let $\theta_1, \ldots, \theta_k$ be congruences of $\mathbf{A}$ such that $\bigcap_{\ell = 1}^k \theta_\ell = \Delta$, and put $\mathcal{F} := \{F_{\theta_1}, \ldots, F_{\theta_k}\}$. The following are equivalent:
\begin{enumerate}[$(a)$]
\item $\langle \theta_1, \ldots, \theta_k\rangle$ is a CR tuple of $\mathbf{A}$,
\item $A_\mathcal{F} = A$.
\item For every $\mathcal{C}_A$-irreducible cross $C$, there is $F \in \mathcal{F}$ such that both indices of $C$ belong to $F$.
\end{enumerate}
\end{enumerate}
\end{lemma}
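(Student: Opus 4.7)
The plan is to handle the two items separately. Item 1 is a direct verification; item 2 is obtained by proving (a)$\Leftrightarrow$(b) (unfolding definitions) and (b)$\Leftrightarrow$(c) (applying the structural results of the previous subsection via a judicious choice of cross set).

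For item 1, the $(\Leftarrow)$ direction is immediate: if both indices $i,j$ of $C=\cross{i}{a}{j}{b}$ lie in $F$ and $c\in A_F$, a witness $a'\in A$ with $a'|_F=c|_F$ already belongs to $C$ (since $A\subseteq C$), and from $a'_i=c_i$, $a'_j=c_j$ one concludes $c\in C$. For $(\Rightarrow)$, suppose $i\notin F$; using subdirectness of $A$ together with $|A_\ell|\geq 2$, choose $a'\in A$ with $a'_j\ne b$ (only needed when $j\in F$) and build $c$ that agrees with $a'$ on $F$, has $c_i\ne a$, and has $c_j\ne b$ (set freely when $j\notin F$). This $c$ lies in $A_F\setminus C$, a contradiction.

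For item 2, write $F_i:=F_{\theta_i}$. By Lemma \ref{LEMA: congs <-> F's} the hypothesis $\bigcap_\ell\theta_\ell=\Delta$ forces $\bigcup\mathcal{F}=\{1,\ldots,n\}$, and the system condition $(a^i,a^j)\in\theta_i\vee\theta_j$ unpacks (via the dual isomorphism, which sends joins to intersections) into $a^i|_{F_i\cap F_j}=a^j|_{F_i\cap F_j}$; a solution is an $a\in A$ with $a|_{F_i}=a^i|_{F_i}$ for each $i$. A compatible system therefore determines a unique candidate $b\in A_1\times\ldots\times A_n$ via $b_\ell:=a^i_\ell$ for any $i$ with $\ell\in F_i$, and this $b$ manifestly lies in $A_\mathcal{F}$; conversely every $b\in A_\mathcal{F}$ arises this way from witnesses $a^i$ with $a^i|_{F_i}=b|_{F_i}$. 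A solution then exists iff this unique candidate already lies in $A$, so (a) $\Leftrightarrow$ $A_\mathcal{F}\subseteq A$ $\Leftrightarrow$ (b) (the other inclusion being automatic).

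For (b)$\Leftrightarrow$(c), set $\mathcal{B}:=\{C\in\mathcal{C}_A:\text{both indices of }C\text{ lie in some }F\in\mathcal{F}\}$. Lemma \ref{LEMA: coro cruces}(3) applied to $\mathcal{B}\subseteq\mathcal{C}_A$ says $\bigcap\mathcal{B}=A$ iff every $\mathcal{C}_A$-irreducible cross lies in $\mathcal{B}$, which is exactly (c). So it suffices to prove $A_\mathcal{F}=\bigcap\mathcal{B}$. The inclusion $A_\mathcal{F}\subseteq\bigcap\mathcal{B}$ is immediate from item 1. For the reverse, the key intermediate step is the identity $A_F=\bigcap\{C\in\mathcal{C}_A:\text{both indices in }F\}$ for each $F$, which intersected over $\mathcal{F}$ yields $A_\mathcal{F}=\bigcap\mathcal{B}$. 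When $|F|\leq 1$ this is trivial since both sides equal $A_1\times\ldots\times A_n$; when $|F|\geq 2$, I would apply Lemma \ref{LEMA: coro cruces}(1) to $\pi_F(A)\leq_{isd}\prod_{i\in F}A_i$ and transport the resulting cross decomposition through the evident bijection between crosses of $\prod_{i\in F}A_i$ containing $\pi_F(A)$ and members of $\mathcal{C}_A$ whose two indices both lie in $F$. The main obstacle I anticipate is verifying that $\pi_F(A)$ is itself irredundant in $\prod_{i\in F}A_i$ so that Lemma \ref{LEMA: coro cruces} applies; this can be handled by the contrapositive, since any inclusion $\rho_i\subseteq\rho_j$ of projection kernels inside $\pi_F(A)$ (with $i,j\in F$) lifts point-by-point to the same inclusion inside $A$, contradicting the hypothesis $A\leq_{isd}A_1\times\ldots\times A_n$.
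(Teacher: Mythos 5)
Your proof is correct and follows the paper's argument step for step: the same construction for item 1, the same explicit unfolding of systems and solutions for (a)$\Leftrightarrow$(b), and the same reduction of (b)$\Leftrightarrow$(c) to the identity $A_\mathcal{F}=\bigcap\mathcal{B}$ followed by Lemma~\ref{LEMA: coro cruces}(3). The one spot where you deviate is in proving $A_F=\bigcap\mathcal{B}_F$: the paper applies Lemma~\ref{LEMA: coro cruces}(1) directly to $\mathbf{A}_F$ viewed inside the full product $\mathbf{A}_1\times\cdots\times\mathbf{A}_n$, observing that any extension of an irredundant subdirect subalgebra is again irredundant subdirect, whereas you project to $\prod_{i\in F}\mathbf{A}_i$, re-verify irredundancy there, and transport the cross decomposition back through a bijection (with a case split on $|F|\leq 1$); both are valid, but the paper's route needs neither the bijection nor the case split.
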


\begin{proof}
1. Fix $F \subseteq \{1,\ldots,n\}$ and $C := \cross{i}{a}{j}{b} \in \mathcal{C}_A$. The direction from right to left is immediate. For the other direction, we show that if $i \notin F$, a contradiction arises. Note that, since $n \geq 2$ and $\mathbf{A}$ is irredundant, each of $\mathbf{A}_1, \ldots, \mathbf{A}_n$ is nontrivial. In particular, since $\mathbf{A}$ is subdirect, there is $\bar{a} \in A$ with $a_j \ne b$. Let $a' \in A_i \setminus \{a\}$, and let $\bar{c} \in A_1 \times \ldots \times A_n$ be such that $c_i = a'$ and $c_t = a_t$ for $t \in \{1,\ldots,n\} \setminus \{i\}$. Note that $\bar{c} \in A_F \setminus C$, a contradiction.

2. To improve readability we write $F_\ell$ instead of $F_{\theta_\ell}$ for $\ell \in \{1,\ldots,n\}$.  Note that Lemma \ref{LEMA: congs <-> F's} says that $\bigcup \mathcal{F} = \{1,\ldots,n\}$. By the same lemma, for $\langle a_1,\ldots,a_n\rangle \in A^k$ we have that
\begin{align}
\langle \theta_1, & \ldots,\theta_k, a_1,\ldots,a_k\rangle \text{ is a system } \nonumber \\
& \text{ iff }  a_\ell|_{F_\ell \cap F_m} = a_m|_{F_\ell \cap F_m} \text{ for } \ell,m \in \{1,\ldots,k\}. \label{EQ: sistemas como conjuntos}
\end{align}

(a)$\Rightarrow$(b) Take $b \in A_\mathcal{F}$, and observe that, by definition of $A_\mathcal{F}$, there are $a_1, \ldots, a_k \in A$ such that $b|_{F_\ell} = a_\ell|_{F_\ell}$ for $\ell \in \{1,\ldots,k\}$. So $a_\ell|_{F_\ell \cap F_m} = b|_{F_\ell \cap F_m} = a_m|_{F_\ell \cap F_m}$ for $\ell, m \in \{1,\ldots,k\}$, and thus, $\langle \theta_1,\ldots,\theta_k,a_1,\ldots,a_k\rangle$ is a system on $\mathbf{A}$ by \eqref{EQ: sistemas como conjuntos}. By hypothesis, there is a solution $a \in A$ to this system, which, by Lemma \ref{LEMA: congs <-> F's}, implies that $a|_{F_\ell} = a_i|_{F_\ell}$ for $\ell \in \{1,\ldots,k\}$. Since $\bigcup \mathcal{F} = \{1,\ldots,n\}$, this says that $b = a \in A$.

(b)$\Rightarrow$(a) Suppose $\langle \theta_1,\ldots,\theta_k,a_1,\ldots,a_k\rangle$ is a system on $\mathbf{A}$. By \eqref{EQ: sistemas como conjuntos} there is (a unique) $b \in A_1 \times \ldots \times A_n$ such that $b|_{F_\ell} = a_\ell|_{F_\ell}$ for $\ell \in \{1,\ldots,k\}$. Hence, $b \in A_\mathcal{F} \subseteq A$. Using \eqref{EQ: sistemas como conjuntos} again we see that $b$ is a solution to $\langle \theta_1,\ldots,\theta_k,a_1,\ldots,a_k\rangle$.

To prove the equivalence between (b) and (c) we first establish the following.

\begin{claim} \label{CL: int B = A_F}
Let
\[
\mathcal{B} := \{C \in \mathcal{C}_A: \text{there is $F \in \mathcal{F}$ such that both indices of $C$ belong to $F$}\}.
\]
We have that $\bigcap \mathcal{B} = A_\mathcal{F}$.
\end{claim}

Fix $F \in \mathcal{F}$, and let $\mathcal{B}_F := \{C \in \mathcal{C}_A: \text{both indices of $C$ belong to $F$}\}$. Item 1.\ says that $\mathcal{B}_F$ is the set of crosses of $A_1 \times \ldots \times A_n$ that contain $A_F$. Since $\mathbf{A}$ is an irredundant subdirect product of the $\mathbf{A}_i$'s, so is any extension of $\mathbf{A}$ in $\mathbf{A}_1 \times \ldots \times \mathbf{A}_n$, in particular $\mathbf{A}_F$. Thus, by Lemma \ref{LEMA: coro cruces} applied to $\mathbf{A}_F$, we have $\bigcap \mathcal{B}_F = A_F$. Hence, since $\bigcup_{F \in \mathcal{F}} \mathcal{B}_F = \mathcal{B}$, we have $\bigcap \mathcal{B} = A_\mathcal{F}$.

Finally, in view of Claim \ref{CL: int B = A_F}, the equivalence (b)$\Leftrightarrow$(c) follows at once from the equivalence (a)$\Leftrightarrow$(c) of Lemma \ref{LEMA: coro cruces}.(3).
\end{proof}

The final stretch to Theorem \ref{TEO: CRT tuple for dual disc - equivalences} is achieved by translating condition $(c)$ in Lemma \ref{LEMA: CRT en coordenadas} into a purely congruential property.

\begin{lemma} \label{LEMA: bridge lemma}
Let $\mathbf{A} \leq_{isd} \mathbf{A}_1 \times \ldots \times \mathbf{A}_n$, and suppose there is a common dual discriminator term for $\mathbf{A}_1, \ldots, \mathbf{A}_n$. Let $\mathcal{C}_A$ be the set of crosses of $A_1 \times \ldots \times A_n$ containing $A$. The following holds for $i,j \in \{1,\ldots,n\}$ with $i \ne j$:
\begin{enumerate}[$1.$]
\item There is a cross in $\mathcal{C}_A$ with indices $i,j$ iff $\rho_i$ and $\rho_j$ do not permute.
\item Suppose $C \in \mathcal{C}_A$ is a cross with indices $i,j$, then $C$ is $\mathcal{C}_A$-reducible iff there is $k \in \{1,\ldots,n\} \setminus \{i,j\}$ such that $\rho_k \subseteq \rho_i \circ \rho_j$.
\end{enumerate}
\end{lemma}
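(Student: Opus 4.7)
The plan is to lift the two-coordinate structural dichotomy of Lemma \ref{LEMA: subalg cuadrado - disc dual} to three coordinates in part 2, using the hypothesis $\rho_k \subseteq \rho_i \circ \rho_j$ as the bridge. For part 1, I would first argue that the twofold projection $\mathbf{A}_{ij} \leq \mathbf{A}_i \times \mathbf{A}_j$ inherits irredundant subdirectness from $\mathbf{A}$ (if $\rho_i \subseteq \rho_j$ held in $\mathbf{A}_{ij}$ it would propagate back to $\mathbf{A}$, contradicting irredundancy). Lemma \ref{LEMA: subalg cuadrado - disc dual} then says $A_{ij}$ is either $A_i \times A_j$ or a proper cross $(\{a\}\times A_j)\cup(A_i\times\{b\})$, and this second alternative is exactly what $\cross{i}{a}{j}{b} \in \mathcal{C}_A$ asserts. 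On the congruence side, Lemma \ref{LEMA: congs <-> F's} combined with irredundancy gives $F_{\rho_i} = \{i\}$, hence $\rho_i \vee \rho_j = A^2$; so ``$\rho_i,\rho_j$ permute'' is equivalent to $\rho_i \circ \rho_j = A^2$, and unpacking this condition in coordinates shows it is equivalent to $A_{ij} = A_i \times A_j$. The biconditional of part 1 is the contrapositive.

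For the forward direction of part 2, I would write $C = D \circ E$ with compatible $D = \cross{i}{a}{k}{c}$, $E = \cross{k}{c'}{j}{b}$ in $\bar{\mathcal{C}}_A = \mathcal{C}_A$ (by Lemma \ref{LEMA: coro cruces}.2), then take any $\bar{x},\bar{y} \in A$ with $x_k = y_k$. The containments $A \subseteq D$ and $A \subseteq E$ yield ($x_i=a$ or $x_k=c$) and ($y_k=c'$ or $y_j=b$); since $c \neq c'$, the combination $x_k = c$ with $y_k = c'$ is incompatible with $x_k = y_k$, and each of the remaining three cases produces either $x_i = a$ or $y_j = b$. By part 1 together with Lemma \ref{LEMA: C bueno no repite cruces} (which pins down $A_{ij} = (\{a\}\times A_j)\cup(A_i\times\{b\})$), this says $\langle x_i, y_j\rangle \in A_{ij}$, i.e., $\bar{x}(\rho_i \circ \rho_j)\bar{y}$.

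For the backward direction, assume $\rho_k \subseteq \rho_i \circ \rho_j$. The strategy is to nail down $A_{ik}$ and $A_{jk}$ explicitly. I claim $A_{ik}$ must be a proper cross of the specific form $(\{a\}\times A_k)\cup(A_i\times\{c^*\})$: if instead $A_{ik}$ were the full product, or a cross whose ``first-coordinate parameter'' differed from $a$, then for every $c \in A_k$ one can produce $\bar{x} \in A$ with $x_i \neq a$ and $x_k = c$; the hypothesis $\rho_k \subseteq \rho_i \circ \rho_j$ then forces every $\bar{y} \in A$ with $y_k = c$ to satisfy $y_j = b$, and letting $c$ vary over $A_k$ (subdirectness of $\mathbf{A}_k$) would collapse $\pi_j(A)$ to $\{b\}$, contradicting subdirectness of $\mathbf{A}_j$ (nontrivial because $n \geq 2$ together with irredundancy forces all $A_\ell$ to have at least two elements). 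This produces $\cross{i}{a}{k}{c^*} \in \mathcal{C}_A$, and a symmetric argument yields $\cross{k}{c^{**}}{j}{b} \in \mathcal{C}_A$.

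The main obstacle is then to verify $c^* \neq c^{**}$. Assuming $c^* = c^{**} =: c$, I would extract $\bar{x} \in A$ with $x_k = c$, $x_i \neq a$ and $\bar{y} \in A$ with $y_k = c$, $y_j \neq b$ (if either such element failed to exist, then combining with the respective cross constraint just derived would force every element of $A$ to be constant in coordinate $i$ or $j$, contradicting subdirectness). But such $\bar{x}$ and $\bar{y}$ satisfy $x_k = y_k$ while $x_i \neq a$ and $y_j \neq b$, directly violating $\rho_k \subseteq \rho_i \circ \rho_j$. Hence $c^* \neq c^{**}$, so $\cross{i}{a}{k}{c^*}$ and $\cross{k}{c^{**}}{j}{b}$ are compatible, and their composition is $\cross{i}{a}{j}{b} = C$, exhibiting $C$ as $\mathcal{C}_A$-reducible.
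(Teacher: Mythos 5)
Your proof is correct. Part 1 and the forward direction of Part 2 follow the same lines as the paper (the paper dismisses the forward direction as an easy exercise; your filling-in is valid). For the backward direction of Part 2, however, you take a genuinely different route at one central step. To obtain the intermediate crosses $\cross{i}{a}{k}{c^*}$ and $\cross{k}{c^{**}}{j}{b}$, the paper argues purely at the congruence level: it observes $\rho_i \circ \rho_k \subseteq \rho_i \circ \rho_j \subsetneq \nabla$, so $\rho_i$ and $\rho_k$ cannot permute (two distinct maximal congruences permute exactly when their composition is $\nabla$), and then invokes Part 1 to produce a cross with indices $i,k$ whose parameters are a priori arbitrary; the identification of those parameters with the parameters of the given $C$ is made only at the very end, via Lemma \ref{LEMA: C bueno no repite cruces}. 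You instead run a direct coordinate-wise case analysis on $A_{ik}$ (full product or cross, by Lemma \ref{LEMA: subalg cuadrado - disc dual}), using $\rho_k \subseteq \rho_i \circ \rho_j$ together with subdirectness to rule out everything except a cross with $i$-parameter equal to $a$, and symmetrically for $A_{kj}$. Your version is heavier on coordinate manipulation but nails down the cross parameters from the start, which makes the closing appeal to Lemma \ref{LEMA: C bueno no repite cruces} unnecessary; the paper's version is shorter because it leans on Part 1 and the maximal-congruence observation. Both proofs then show that the middle parameters differ ($c \ne c'$ in the paper, $c^* \ne c^{**}$ in yours) by essentially the same contradiction: the paper phrases it as ``otherwise $A_{ij}$ would be the full product,'' while you exhibit two specific elements of $A$ that would violate $\rho_k \subseteq \rho_i \circ \rho_j$.
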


\begin{proof}
1. Note that, since $\rho_i$ and $\rho_j$ are maximal congruences of $\mathbf{A}$, we have that $\rho_i \circ \rho_j = \rho_j \circ \rho_i$ iff $\rho_i \circ \rho_j = \nabla$. Now, it is easy to see that $\rho_i \circ \rho_j = \nabla$ iff $A_{ij} = A_i \times A_j$. So, by Lemma \ref{LEMA: subalg cuadrado - disc dual}, we are done.

2. The implication from left to right is an easy exercise; we prove the remaining implication. Suppose there is $k \in \{1,\ldots,n\} \setminus \{i,j\}$ such that $\rho_k \subseteq \rho_i \circ \rho_j$. Note that $\rho_i \circ \rho_k \subseteq \rho_i \circ \rho_j$; thus $\rho_i$ cannot permute with $\rho_k$ because otherwise $\rho_i \circ \rho_k = \rho_i \circ \rho_j = \nabla$, and this is contradicts the fact that $A$ is contained in $C$. Analogously, $\rho_k$ and $\rho_j$ do not permute, and by 1.\ there are crosses $\cross{i}{a}{k}{c}$ and $\cross{k}{c'}{j}{b}$ in $\mathcal{C}_A$ for some $a \in A_i$, $b \in A_j$, $c,c' \in A_k$. We claim that if $c = c'$, then $A_{ij} = A_i \times A_j$. In fact, suppose $c = c'$ and take $x \in A_i$, $y \in A_j$; so $\langle x,c\rangle \in A_{ik}$ and $\langle c,y\rangle \in A_{kj}$. Now, from the fact that $\rho_k \subseteq \rho_i \circ \rho_j$ it follow that $\langle x,y\rangle \in A_{ij}$, finishing the proof of the claim. Since by assumption $A_{ij}$ is a cross, we must have $c \ne c'$. Hence, $\cross{i}{a}{k}{c}$ and $\cross{k}{c'}{j}{b}$ are compatible, and their composition  $\cross{i}{a}{j}{b}$ is in $\bar{\mathcal{C}}_A$ and thus in $\mathcal{C}_A$ by 2.\ from Lemma \ref{LEMA: coro cruces}. Finally, by Lemma \ref{LEMA: C bueno no repite cruces}, we have that $C = \cross{i}{a}{j}{b}$ is $\mathcal{C}_A$-reducible.
\end{proof}

We are now in a position to present the main result of this section. 

\begin{theorem} \label{TEO: CRT tuple for dual disc - equivalences}
Let $\mathbf{A}$ be an algebra in $\mathcal{D}$, and let $\theta_1,\ldots, \theta_k$ be congruences of $\mathbf{A}$. Let $\Sigma$ be the set of meet irreducible congruences of $\mathbf{A}$ that contain $\bigcap _{j=1}^k \theta_j$. The following are equivalent:
\begin{enumerate}

\item $\langle \theta_1, \ldots, \theta_k\rangle$ is a CR tuple of $\mathbf{A}$,

\item for all $\lambda,\mu \in \Sigma$ one of the following conditions holds:
\begin{itemize}
\item $\lambda \circ \mu = \mu \circ \lambda$,
\item there is $\sigma \in \Sigma \setminus \{\lambda ,\mu\}$ with $\sigma \subseteq \lambda \circ \mu$,
\item there is $j \in \{1,\ldots,k\}$ such that $\theta_j \subseteq \lambda \cap \mu$.
\end{itemize}
\end{enumerate}
\end{theorem}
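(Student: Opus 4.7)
The plan is to reduce to an irredundant subdirect representation and then read off the theorem from Lemmas \ref{LEMA: CRT en coordenadas} and \ref{LEMA: bridge lemma}. First, using Lemma \ref{LEMA: CRT <-> CRT diagonal}, pass to $\mathbf{A}/\delta$ with $\delta := \bigcap_{j=1}^k \theta_j$. The correspondence theorem gives a lattice isomorphism between $\Con \mathbf{A}/\delta$ and the interval $[\delta,\nabla]$ of $\Con \mathbf{A}$ that preserves meet-irreducibility, inclusion, and the relational composition of reflexive congruences, so both conditions in the statement translate faithfully. Thus we may assume $\bigcap_{j=1}^k \theta_j = \Delta$, in which case $\Sigma$ is precisely the set of all meet-irreducible congruences of $\mathbf{A}$. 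Since $\mathbf{A}$ lies in a dual discriminator variety, it is semisimple, so writing $\Sigma = \{\rho_1,\ldots,\rho_n\}$ every $\rho_i$ is maximal and each $\mathbf{A}/\rho_i$ is a simple member of the variety, hence admits the common dual discriminator term. Finiteness and $\bigcap \rho_i = \Delta$ yield $\mathbf{A} \leq_{sd} \prod_{i=1}^n \mathbf{A}/\rho_i$; irredundancy follows because $\bigcap_{j\ne i} \rho_j \subseteq \rho_i$ would, by meet-irreducibility of $\rho_i$, force some $\rho_j \subseteq \rho_i$ with $j \ne i$, and then maximality of $\rho_j$ would give $\rho_j = \rho_i$. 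The degenerate cases $n \leq 1$ (where $\mathbf{A}$ is trivial or simple) are verified directly; in the sequel assume $n \geq 2$.

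Second, apply Lemma \ref{LEMA: CRT en coordenadas}(2): $\langle\theta_1,\ldots,\theta_k\rangle$ is a CR tuple iff for every $\mathcal{C}_A$-irreducible cross $C$ there is $\ell \in \{1,\ldots,k\}$ with both indices of $C$ lying in $F_{\theta_\ell}$. By Lemma \ref{LEMA: congs <-> F's}, an index $i$ belongs to $F_{\theta_\ell}$ iff $\theta_\ell \subseteq \rho_i$, so the latter condition is exactly $\theta_\ell \subseteq \rho_i \cap \rho_j$ where $i,j$ are the indices of $C$. Lemma \ref{LEMA: bridge lemma} translates the cross side entirely into congruence language: a $\mathcal{C}_A$-irreducible cross with indices $i \ne j$ exists iff $\rho_i$ and $\rho_j$ fail to permute and no $k \ne i,j$ satisfies $\rho_k \subseteq \rho_i \circ \rho_j$. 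Taking the contrapositive of ``every $\mathcal{C}_A$-irreducible cross is covered by some $F_{\theta_\ell}$'' transforms the characterization into: for all $i,j$, either $\rho_i,\rho_j$ permute, or some $\rho_k$ ($k \ne i,j$) satisfies $\rho_k \subseteq \rho_i \circ \rho_j$, or some $\theta_\ell \subseteq \rho_i \cap \rho_j$. This is exactly condition 2 once we observe that the instance $\lambda = \mu$ is automatic since $\lambda \circ \lambda = \lambda$.

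The main obstacle I anticipate is the bookkeeping in the reduction to $\bigcap \theta_j = \Delta$: one must carefully check that each of the three alternatives in condition 2 (permutability of $\lambda,\mu$, containment $\sigma \subseteq \lambda \circ \mu$ for some third meet-irreducible $\sigma$, and containment $\theta_j \subseteq \lambda \cap \mu$) translates unchanged under $\theta \mapsto \theta/\delta$. Once this is in place, the remaining work consists in assembling Lemma \ref{LEMA: CRT en coordenadas}(2) with Lemma \ref{LEMA: bridge lemma} and Lemma \ref{LEMA: congs <-> F's}, which is essentially mechanical.
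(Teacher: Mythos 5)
Your proposal is correct and follows essentially the same route as the paper: reduce to $\bigcap\theta_j=\Delta$ via Lemma \ref{LEMA: CRT <-> CRT diagonal} and the Correspondence Theorem, realize $\mathbf{A}\leq_{isd}\mathbf{A}_1\times\cdots\times\mathbf{A}_n$ with $\Sigma=\{\rho_1,\ldots,\rho_n\}$, invoke Lemma \ref{LEMA: CRT en coordenadas}(2), and then translate the cross-theoretic condition back into congruence language via Lemma \ref{LEMA: bridge lemma}. The extra care you take over the degenerate cases $n\leq 1$ and over verifying that permutability and the various containments commute with the quotient $\theta\mapsto\theta/\delta$ is sound but not a departure from the paper's argument.
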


\begin{proof}
By Lemma \ref{LEMA: CRT <-> CRT diagonal} and the Correspondence Theorem it suffices to show that the theorem holds under the additional assumption $\bigcap_{\ell = 1}^k \theta_\ell = \Delta$. By Birkhoff's subdirect decomposition theorem we may assume without loss that $\mathbf{A} \leq_{isd} \mathbf{A}_1 \times \ldots \times \mathbf{A}_n$ with $\Sigma = \{\rho_1,\ldots,\rho_n\}$. (The irredundancy follows from the fact that each $\rho_i$ is a maximal congruence of $\mathbf{A}$.) Let $\mathcal{F} := \{F_{\theta_1}, \ldots, F_{\theta_k}\}$, and let $\mathcal{C}_A$ be the set of crosses of $A_1 \times \ldots \times A_n$ containing $A$. Then, by Lemma \ref{LEMA: CRT en coordenadas}.(2), item (1) is equivalent to
\begin{enumerate}
\item[$(2')$] For every $\mathcal{C}_A$-irreducible cross $C$, there is $F \in \mathcal{F}$ such that both indices of $C$ belong to $F$.
\end{enumerate}
Finally, $(2')$ easily translates to (2) using Lemma \ref{LEMA: bridge lemma}.
\end{proof}

Since finite distributive lattice are included in $\mathcal{D}$, Theorem \ref{TEO: CRT tuple for dual disc - equivalences} immediately yields a characterization of CR tuples for these structures. An entertaining exercise is to directly derive Corollary \ref{CORO: caract CR tuple for DIST LAT} from Theorem \ref{TEO: CRT tuple for dual disc - equivalences}.

\subsubsection{\prob{\sf CRT$|_\mathcal{D}$} is poly-time computable}

We now employ Theorem \ref{TEO: CRT tuple for dual disc - equivalences} to characterize the complexity of \prob{CRT$|_\mathcal{D}$}. First, a useful lemma.

\begin{lemma} \label{LEMA: MI cong in poly-time}
There is a poly-time algorithm that, given a finite congruence-distributive algebra $\mathbf{A}$ as input, returns the set of meet-irreducible congruences of $\mathbf{A}$.
\end{lemma}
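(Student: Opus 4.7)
The plan is to leverage the classical bijection between join-irreducibles and meet-irreducibles in a finite distributive lattice, applied to $\Conn \mathbf{A}$. The key observation is that every join-irreducible of $\Conn \mathbf{A}$ is a principal congruence $\Theta(a,b)$: since each congruence equals the join of its principal subcongruences, join-irreducibility forces one of those principals to equal the whole thing. Principal congruences themselves are computable in polynomial time by the standard saturation, starting from $\{\langle a,b\rangle\}$ and iteratively closing under symmetry, transitivity, and the fundamental operations of $\mathbf{A}$.

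The algorithm I propose first computes the set of all principal congruences $\Theta(a,b)$ and then, for each pair $(a,b) \in A^2$ with $a \ne b$, the congruence
\[
\mu_{a,b} := \bigvee \{\Theta(c,d) : \langle a,b\rangle \notin \Theta(c,d)\},
\]
realized as the transitive closure of a union of at most $|A|^2$ principal congruences. It emits $\mu_{a,b}$ precisely when $\langle a,b\rangle \notin \mu_{a,b}$, and returns the set of emitted congruences. Polynomial running time is then routine: $O(|A|^2)$ outer pairs, each requiring $O(|A|^2)$ joins of relations of size $O(|A|^2)$.

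Correctness splits into two halves. For soundness, when $\langle a,b\rangle \notin \mu_{a,b}$ the congruence $\mu_{a,b}$ is maximal among congruences avoiding $\langle a,b\rangle$, and a short lattice-theoretic argument shows that any such maximal congruence is meet-irreducible. For completeness, let $\mu$ be an arbitrary meet-irreducible with unique cover $\mu^+$; the duality for finite distributive lattices supplies a join-irreducible $j \subseteq \mu^+$ with $j \not\subseteq \mu$. By the key observation $j = \Theta(a,b)$ for some $\langle a,b\rangle \notin \mu$, and join-primeness of $j$—which is precisely where distributivity is invoked—yields $\langle a,b\rangle \notin \mu_{a,b}$; a brief check then gives $\mu = \mu_{a,b}$, so $\mu$ is emitted. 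The main obstacle is exactly this appeal to distributivity: without congruence-distributivity the maximal congruence avoiding a given pair need not be unique, and the algorithm could miss meet-irreducibles whose associated join-irreducible fails to be join-prime.
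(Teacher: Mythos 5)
Your proposal is correct and follows essentially the same strategy as the paper: compute all principal congruences (each in polynomial time), then exploit the Birkhoff duality between join- and meet-irreducibles in the finite distributive lattice $\Conn \mathbf{A}$. The only cosmetic difference is that the paper first isolates the set $J$ of join-irreducible congruences and then applies the duality map $\theta \mapsto \bigvee\{\delta \in J : \delta \nleq \theta\}$, whereas you iterate over all pairs $(a,b)$ and fold the join-irreducibility test into the equivalent condition $\langle a,b\rangle \notin \mu_{a,b}$.
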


\begin{proof}
We start by observing that the set $J$ of join-irreducible congruences
of $\mathbf{A}$ can be computed in poly-time. Recall that given $a,b\in A$ the {\em principal congruence} $\mathbf{\theta}^{\mathbf{A}}(a,b)$ is the last congruence of $\mathbf{A}$ containing $\langle a,b\rangle$. We know from \cite{Bergman02-ComputComplexCongruences} that $\theta^\mathbf{A}(a,b)$ is poly-time
computable from $\mathbf{A}$. Of course, each join-irreducible congruence
of $\mathbf{A}$ is principal, thus to compute $J$, we first compute
the set $P:=\{\theta^{\mathbf{A}}(a,b):a,b\in A\}$, and then check
for each $\theta\in P$ whether it is join-irreducible via the test
$\theta>\bigvee\{\delta\in P:\delta<\theta\}$. Next, given $\theta\in J$
let $\theta':=\bigvee\{\delta\in J:\delta\nleq\theta\}$; clearly
each $\theta'$ is computable in poly-time. Finally, since $\con\mathbf{A}$
is a distributive lattice, we have that $\{\theta':\theta\in J\}$
is exactly the set of meet-irreducible congruences of $\mathbf{A}$.
\end{proof}

\begin{theorem} \label{TEO: CRT.D is in P}
\prob{CRT$|_\mathcal{D}$} is in \textup{P}.
\end{theorem}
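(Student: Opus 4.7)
The plan is to run the algorithm suggested by Theorem \ref{TEO: CRT tuple for dual disc - equivalences}: given input $\mathbf{A} \in \mathcal{D}$ and congruences $\theta_1,\ldots,\theta_k$, decide whether condition (2) of that theorem holds. First I would invoke Lemma \ref{LEMA: CRT <-> CRT diagonal}, passing to the quotient $\mathbf{A}/\delta$ for $\delta := \theta_1 \cap \ldots \cap \theta_k$; this quotient and the images $\theta_i/\delta$ are clearly poly-time computable from the input, and after this reduction we may assume $\bigcap_{i=1}^{k}\theta_i = \Delta$, so that the set $\Sigma$ described in Theorem \ref{TEO: CRT tuple for dual disc - equivalences} is exactly the set of meet-irreducible congruences of $\mathbf{A}$.

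Next, since $\mathbf{A}$ lies in a dual discriminator variety and any dual discriminator term is a majority term, $\mathbf{A}$ is congruence-distributive. Hence Lemma \ref{LEMA: MI cong in poly-time} applies and $\Sigma$ can be computed in polynomial time. It remains to check condition (2) for every pair $\lambda,\mu\in\Sigma$. For a fixed such pair:
\begin{enumerate}
\item Compute $\lambda\circ\mu$ and $\mu\circ\lambda$ as subsets of $A^2$ (each is of size at most $|A|^2$ and each composition is computable in poly-time by an obvious nested loop). Test $\lambda\circ\mu = \mu\circ\lambda$.
\item If not, iterate over $\sigma\in\Sigma\setminus\{\lambda,\mu\}$ and test $\sigma\subseteq\lambda\circ\mu$.
\item If still not, iterate over $j\in\{1,\ldots,k\}$ and test $\theta_j\subseteq\lambda\cap\mu$.
\end{enumerate}
Each of these tests is a containment or equality between binary relations on $A$, so each is poly-time. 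Since $|\Sigma|$ is polynomial in $|A|$ (it is bounded by the number of principal congruences), the total running time of the whole check is polynomial in the input size.

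There is no real obstacle here beyond carefully assembling the ingredients: the algebraic content lives entirely in Theorem \ref{TEO: CRT tuple for dual disc - equivalences}, and Lemma \ref{LEMA: MI cong in poly-time} takes care of the only nontrivial algorithmic subroutine. The mild point worth flagging in the write-up is the justification that $\mathbf{A}$ is congruence-distributive (so that Lemma \ref{LEMA: MI cong in poly-time} is applicable), which is automatic since membership of $\mathbf{A}$ in some dual discriminator variety supplies a majority term for $\mathbf{A}$.
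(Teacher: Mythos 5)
Your proposal is correct and follows essentially the same route as the paper: invoke Theorem \ref{TEO: CRT tuple for dual disc - equivalences} for the characterization, use Lemma \ref{LEMA: MI cong in poly-time} to compute $\Sigma$ in polynomial time, and then check condition (2) by brute force over pairs from $\Sigma$. Your extra steps (explicitly passing to the quotient $\mathbf{A}/\delta$ via Lemma \ref{LEMA: CRT <-> CRT diagonal}, and spelling out that a dual discriminator term is a majority term so that $\mathbf{A}$ is congruence-distributive) are minor presentational variants of what the paper leaves implicit.
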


\begin{proof}
Let $\mathbf{A}$, $\theta_1, \ldots, \theta_k$ be an instance of \prob{CRT$|_\mathcal{D}$}. By Lemma \ref{LEMA: MI cong in poly-time} the set $\Sigma$ of meet irreducible congruences of $\mathbf{A}$ that contain $\bigcap _{j=1}^k \theta_j$ is computable in poly-time.  Finally, observe that the conditions stated in 2.\ of Theorem \ref{TEO: CRT tuple for dual disc - equivalences} are easily poly-time computable once $\Sigma$ is known.
\end{proof}

\section{(Almost) Dichotomy for Varieties Generated by Two-Element Algebras}
\label{SEC: Almost Dichotomy}

In this final section we study the complexity of \prob{CRT$|_\mathcal{V}$} for  an arbitrary variety $\mathcal{V}$ generated by a two-element algebra. This is possible in part due to a result by Post \cite{Post41-PostLattice}, that shows that (up to a term-equivalence) there are countable many of these varieties and provides concrete descriptions for each of them. We fall slightly short of establishing a dichotomy, as we are able to show that each of the cases is either tractable or \textup{coNP}-complete, with the exception of the case of semilattices, which is left as an open problem. As we shall see, the heavy lifting to obtain the results below is already done, since they follow with little extra effort from the content of the previous sections leveraged by Post's classification and the concept of interpretation between varieties.

Let $\mathcal{V}, \mathcal{V}'$ be varieties in languages $L, L'$, respectively. An {\em interpretation} of $\mathcal{V}$ in $\mathcal{V}'$ is a mapping $T$ from $L$ into $L'$-terms satisfying:
\begin{itemize}
\item if $c$ in $L$ is a constant symbol, then $T(c)$ is a unary $L'$-term such that $\mathcal{V}' \vDash T(c)(x) \approx T(c)(y)$;
\item if $f$ in $L$ is a function symbol of arity $n > 0$, then $T(f)$ is an $n$-ary $L'$-term;
\item for every $\mathbf{A} \in \mathcal{V}'$ the algebra $\mathbf{A}^T := \langle A, T(s)^\mathbf{A}\rangle_{s \in L}$ is in $\mathcal{V}$. (For constant $c \in L$ the operation $T(c)^\mathbf{A}$ is the nullary operation on $A$ corresponding to the value of $T(c)$ in $\mathbf{A}$).
\end{itemize}
The varieties $\mathcal{V}$ and $\mathcal{V}'$ are said to be {\em term-equivalent} if there are interpretations $T$ of $\mathcal{V}$ in $\mathcal{V}'$ and $S$ of $\mathcal{V}'$ in $\mathcal{V}$ such that $(\mathbf{A}^T)^S = \mathbf{A}$ for all $\mathbf{A} \in \mathcal{V}'$ and $(\mathbf{A}^S)^T = \mathbf{A}$ for all $\mathbf{A} \in \mathcal{V}$.

If $\mathcal{V}$ is a variety in the language $L$ and $c$ is a constant symbol not in $L$, we write $\mathcal{V}(c)$ to denote the variety in the language $L \cup \{c\}$ where no restrictions are imposed on the new constant.

Given computational problems \prob{X} and \prob{Y} we write \prob{X} $\leq_P$ \prob{Y} whenever there is a poly-time many-one reduction from \prob{X} to \prob{Y}.

The relevance of interpretations for our task lies in the fact that they instantly produce polynomial reductions.

\begin{lemma} \label{LEM: interpret -> reduccion}
Let $\mathcal{V}$ and $\mathcal{V}'$ be varieties.
\begin{enumerate}[\rm (a)]
\item If there is an interpretation of $\mathcal{V}$ in $\mathcal{V}'$, then \prob{CRT$|_{\mathcal{V}'}$} $\leq_P$ \prob{CRT$|_{\mathcal{V}}$}.
\item \prob{CRT$|_{\mathcal{V}}$} $\leq_P$ \prob{CRT$|_{\mathcal{V}(c)}$}. 
\end{enumerate}
\end{lemma}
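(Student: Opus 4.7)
The plan is to give explicit polynomial-time reductions in both parts and to check in each case that the CR-tuple property is preserved by the map.

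For (a), given an interpretation $T$ of $\mathcal{V}$ in $\mathcal{V}'$ and an instance $\langle \mathbf{A}, \theta_1, \ldots, \theta_k\rangle$ of \prob{CRT$|_{\mathcal{V}'}$}, I would output $\langle \mathbf{A}^T, \theta_1, \ldots, \theta_k\rangle$. This is a legitimate instance of \prob{CRT$|_{\mathcal{V}}$} because $\mathbf{A}^T \in \mathcal{V}$ by the definition of interpretation, and because each fundamental operation of $\mathbf{A}^T$ is a term operation of $\mathbf{A}$, so every $\theta_i \in \Con \mathbf{A}$ remains compatible with these operations and hence lies in $\Con \mathbf{A}^T$. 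The key observation is that the CR-tuple condition on $\langle \theta_1, \ldots, \theta_k\rangle$ refers only to the underlying set $A$, to the $\theta_i$'s themselves, and to their joins $\theta_i \vee \theta_j$ in the lattice of equivalence relations on $A$; as such a join is just the transitive closure of the set-theoretic union, it does not depend on the algebra structure, so the answer coincides in the input and output instances.

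For (b), the reduction is immediate: given an instance $\langle \mathbf{A}, \theta_1, \ldots, \theta_k\rangle$ of \prob{CRT$|_{\mathcal{V}}$}, pick any $c_0 \in A$ and output $\langle \langle \mathbf{A}, c_0\rangle, \theta_1, \ldots, \theta_k\rangle$, where $\langle \mathbf{A}, c_0\rangle$ denotes the expansion of $\mathbf{A}$ that interprets the new constant $c$ as $c_0$. The expansion lies in $\mathcal{V}(c)$ since no identities are imposed on $c$. A nullary operation places no restriction on an equivalence relation beyond reflexivity (which is automatic), so $\Con \langle \mathbf{A}, c_0\rangle = \Con \mathbf{A}$, and the tuple $\langle \theta_1, \ldots, \theta_k\rangle$ is a CR tuple of $\mathbf{A}$ exactly when it is one of $\langle \mathbf{A}, c_0\rangle$. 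This map is clearly computable in polynomial time.

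The only delicate point is the polynomial-time bound in (a): constructing $\mathbf{A}^T$ requires tabulating $T(s)^\mathbf{A}$ for each symbol $s$ in the (finite) language of $\mathcal{V}$. Since $\mathcal{V}$ and $\mathcal{V}'$ are fixed, each term $T(s)$ has constant size, so every entry of the table is produced by a constant number of lookups in the tables of $\mathbf{A}$, giving a total running time polynomial in $|A|$. This is essentially the main obstacle; once it is settled, the correctness arguments above are quite routine.
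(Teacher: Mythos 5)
Your proposal is correct and uses exactly the same reductions as the paper: $\langle \mathbf{A},\theta_1,\ldots,\theta_k\rangle \mapsto \langle \mathbf{A}^T,\theta_1,\ldots,\theta_k\rangle$ for (a) and expansion by an arbitrarily chosen constant for (b). You merely spell out the correctness details that the paper leaves implicit, namely that term operations of $\mathbf{A}$ preserve $\Con \mathbf{A}$, and that joins of congruences are transitive closures of unions and so are unaffected by the change of signature.
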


\begin{proof}
(a). If $T$ is an interpretation of $\mathcal{V}$ in $\mathcal{V}'$, then the map $\langle \mathbf{A},\theta_1,\ldots,\theta_k\rangle \mapsto \langle \mathbf{A}^T,\theta_1,\ldots,\theta_k\rangle$ is a polynomial reduction from \prob{CRT$|_{\mathcal{V}'}$} to \prob{CRT$|_{\mathcal{V}}$}.

(b). The map that takes $\langle \mathbf{A},\theta_1,\ldots,\theta_k\rangle$ to $\langle \langle \mathbf{A},a\rangle,\theta_1,\ldots,\theta_k\rangle$, where $a$ is arbitrarily chosen in $A$, is a polynomial reduction from \prob{CRT$|_{\mathcal{V}}$} to \prob{CRT$|_{\mathcal{V}(c)}$}.
\end{proof}

Next we introduce the varieties that perform as the main actors in our analysis, by specifying two-element algebras generating them.  Let $2 := \{0,1\}$. We write $\neg, \wedge, \vee, +$ to denote the usual operations on $2$, and define the following ternary operations on 2:
\begin{itemize}
\item $\mathsf{s}(x,y,z) := x + y + z$,
\item $\mathsf{n}(x,y,z) := (x \wedge y) \vee z$,
\item $\mathsf{m}(x,y,z) := (x \wedge y) \vee (x \wedge z) \vee (y \wedge z)$.
\end{itemize}
Given an algebra $\mathbf{A}$, let $V(\mathbf{A})$ denote the variety generated by $\mathbf{A}$. Define:
\begin{itemize}
\item $\mathcal{S} := V(2)$,
\item $\mathcal{U} := V(\langle 2, \neg, 0, 1\rangle)$,
\item $\mathcal{A} := V(\langle 2, \mathsf{s}\rangle)$,
\item $\mathcal{N} := V(\langle 2, \mathsf{n}\rangle)$,
\item $\mathcal{M} := V(\langle 2, \mathsf{m}\rangle)$,
\item $\mathcal{J} := V(\langle 2, \vee\rangle)$, $\mathcal{J}_0 := V(\langle 2, \vee, 0\rangle)$, $\mathcal{J}_1 := V(\langle 2, \vee, 1\rangle)$, $\mathcal{J}_{01} := V(\langle 2, \vee,0,1\rangle)$.
\end{itemize}
Observe that $\mathcal{S}$ is the class of all sets.

The choice of this list is explained by the fact that every variety generated by a two-element algebra is either interpretable in or interprets one of the above defined varieties.

Recall that a term-operation of an algebra $\mathbf{A}$ is just the interpretation in $A$ of a term in the language of $\mathbf{A}$. 

\begin{lemma} \label{LEM: Post}
Let $\mathcal{V}$ be a variety generated by the two-element algebra $\mathbf{A}$. Then, one of the following holds:
\begin{enumerate}[\rm (a)]
\item $\mathsf{s}(x,y,z) := x + y + z$ is a term-operation of $\mathbf{A}$, and thus, $\mathcal{A}$ is interpretable in $\mathcal{V}$;
\item $\mathsf{n}(x,y,z) := (x \wedge y) \vee z$ is a term-operation of $\mathbf{A}$, and thus, $\mathcal{N}$ is interpretable in $\mathcal{V}$;
\item $\mathsf{m}(x,y,z) :=(x \wedge y) \vee (x \wedge z) \vee (y \wedge z)$ is a term-operation of $\mathbf{A}$, and thus, $\mathcal{M}$ is interpretable in $\mathcal{V}$;
\item $\mathcal{V}$ is interpretable in $\mathcal{U}$;
\item $\mathcal{V}$ is term-equivalent to $\mathcal{J}$, $\mathcal{J}_0$, $\mathcal{J}_1$ or $\mathcal{J}_{01}$.
\end{enumerate}
\end{lemma}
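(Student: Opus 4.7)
The plan is to derive the lemma from Post's classification \cite{Post41-PostLattice} of the lattice of clones on a two-element set. Since $V(\mathbf{A})$ is determined up to term-equivalence by the clone $\Clo(\mathbf{A})$ of term operations of $\mathbf{A}$, and Post enumerated explicitly the countably many clones on $\{0,1\}$, it suffices to check the lemma for each clone in Post's lattice.

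I would partition the clones according to which distinguished ternary operation they realize. If $\Clo(\mathbf{A})$ sits in the linear (affine) part of Post's lattice and contains a ternary non-trivial operation, then iterated application of the binary XOR yields $\mathsf{s}(x,y,z)=x+y+z$, placing us in case (a). If $\Clo(\mathbf{A})$ contains both $\wedge$ and $\vee$---in particular, any monotone clone containing a binary operation together with its De Morgan dual, or any clone containing $\neg$ alongside one of $\wedge,\vee$---then both $\mathsf{n}$ and $\mathsf{m}$ are term operations, so cases (b) and (c) hold. Furthermore, any clone with a majority term on $\{0,1\}$ must contain $\mathsf{m}$ itself (the unique monotone majority on two elements), yielding case (c); this handles the self-dual monotone clone and other majority-bearing clones.

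The remaining clones in Post's lattice fail to contain any of $\mathsf{s},\mathsf{n},\mathsf{m}$, and by inspection split into two families. The first consists of clones of essentially unary operations possibly together with the constants $0$ and/or $1$; each such clone is contained in $\Clo(\langle 2,\neg,0,1\rangle)$, so the identity on $\{0,1\}$ realizes every term-operation of $\mathbf{A}$ by a term in $\{\neg,0,1\}$, which yields an interpretation of $V(\mathbf{A})$ in $\mathcal{U}$ and hence case (d). The second consists of clones generated by a single semilattice operation ($\vee$ or $\wedge$), possibly enriched with one or both of $0,1$; using the duality $0\leftrightarrow 1$ to identify $\wedge$-clones with $\vee$-clones, each such clone yields term-equivalence with one of $\mathcal{J},\mathcal{J}_0,\mathcal{J}_1,\mathcal{J}_{01}$, giving case (e).

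The main obstacle is verifying completeness---that every clone on $\{0,1\}$ really does fall into one of the above buckets. This is not deep but requires a systematic inspection of Post's lattice; since we are not required to exhibit a disjoint partition, it suffices to observe that every clone either realizes at least one of the three distinguished ternary operations or lies in the small residual region of essentially unary and semilattice clones.
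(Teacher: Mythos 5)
Your overall strategy is the same as the paper's: both reduce the lemma to an inspection of Post's lattice via the correspondence between two-element algebras up to term-equivalence and clones on $\{0,1\}$. The paper's own proof is equally terse --- after setting up the clone correspondence it simply asserts that ``the lemma follows from \ldots this description.'' So neither of you actually carries out the case check, and you are honest about that. Still, the intermediate structure you supply has a concrete error worth flagging.

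Your sufficient conditions for cases (b) and (c) --- ``$\Clo(\mathbf{A})$ contains both $\wedge$ and $\vee$'' or ``$\Clo(\mathbf{A})$ has a majority term'' --- do not cover all the clones realizing $\mathsf{n}$. For instance $\Clo(\langle 2,\mathsf{n}\rangle)$ itself contains $\vee$ (since $\mathsf{n}(x,x,y)=x\vee y$) but \emph{not} $\wedge$: every term-operation $t$ of $\langle 2,\mathsf{n}\rangle$ satisfies $t\geq x_j$ pointwise for some index $j$ (the ``third argument'' of the outermost $\mathsf{n}$ passes up through the term), and $\wedge$ fails this. The same lower-bound argument shows $\mathsf{m}\notin\Clo(\langle 2,\mathsf{n}\rangle)$, so your majority criterion misses it too. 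The same applies to larger non-monotone clones such as $\Clo(\langle 2,\rightarrow\rangle)$, which contains $\mathsf{n}$ but not $\wedge$. Consequently your statement that ``the remaining clones in Post's lattice fail to contain any of $\mathsf{s},\mathsf{n},\mathsf{m}$'' is false as written: the infinite $S_0/S_1$-chains of Post's lattice (including $\Clo(\mathsf{n})$) fall into your ``remaining'' bucket yet do realize $\mathsf{n}$. Relatedly, since $\mathsf{n}$ is not self-dual, you need to use the freedom to relabel $0\leftrightarrow 1$ not only for the semilattice case (e) but also here: the literal clone $\Clo(\langle 2,(x\vee y)\wedge z\rangle)$ contains neither $\mathsf{s}$, $\mathsf{n}$, nor $\mathsf{m}$, but its dual does contain $\mathsf{n}$. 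None of this is fatal --- the lemma is true and your global plan would yield it after a correct pass through Post's lattice --- but the criterion for (b) should simply be ``$\mathsf{n}$ (or, dually, $(x\vee y)\wedge z$) is a term-operation'' rather than ``both $\wedge$ and $\vee$ are term-operations.''
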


\begin{proof}
Given an algebra $\mathbf{A}$ let $\Clo \mathbf{A}$ denote the set of all term-operations of positive arity of $\mathbf{A}$.
Given algebras $\mathbf{A},\mathbf{B}$ with universe $\{0,1\}$ we write $\mathbf{A} \sqsubseteq \mathbf{B}$ provided that $\Clo \mathbf{A} \subseteq \Clo \mathbf{B}$. The relation $\sqsubseteq$ is a preorder whose associated equivalence relation we denote by $\equiv$. It is not hard to see that $\mathbf{A} \equiv \mathbf{B}$ iff $V(\mathbf{A})$ and $V(\mathbf{B})$ are term-equivalent. The equivalence classes of $\equiv$ ordered by $\sqsubseteq$ form a countable lattice, known as Post's lattice, which is completely described in \cite{Post41-PostLattice}. The lemma follows from the above observations and this description.
\end{proof}

We already know that \prob{CRT$|_\mathcal{N}$} is in \textup{P} (Theorem \ref{TEO: CRT.N is in P}); in the next series of lemmas we consider the remaining varieties in our list.

\begin{lemma} \label{LEM: triple suma en P}
\prob{CRT$|_{\mathcal{A}}$} is in \textup{P}.
\end{lemma}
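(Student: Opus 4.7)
My plan is to reduce $\prob{CRT}|_{\mathcal{A}}$ to the vector space case covered by Theorem \ref{TEO: vector spaces is in P}, exploiting the interpretation machinery of Lemma \ref{LEM: interpret -> reduccion}. The guiding intuition is that algebras in $\mathcal{A}$ are $\mathbb{F}_2$-affine spaces: fixing any element as an origin turns them into $\mathbb{F}_2$-vector spaces. To introduce such an origin I first enrich the signature with a fresh constant.

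Concretely, I would begin by applying Lemma \ref{LEM: interpret -> reduccion}(b) to obtain $\prob{CRT}|_{\mathcal{A}} \leq_P \prob{CRT}|_{\mathcal{A}(c)}$, and then exhibit an interpretation $T$ of $\mathcal{V}_2$, the variety of $\mathbb{F}_2$-vector spaces in the language $L_2 := \{+,-,0,\lambda_0,\lambda_1\}$, inside $\mathcal{A}(c)$. The map $T$ is defined by
\[
T(+)(x,y) := \mathsf{s}(x,c,y), \quad T(-)(x) := x, \quad T(0) := c, \quad T(\lambda_0)(x) := c, \quad T(\lambda_1)(x) := x.
\]
Granting that $T$ is an interpretation, Lemma \ref{LEM: interpret -> reduccion}(a) yields $\prob{CRT}|_{\mathcal{A}(c)} \leq_P \prob{CRT}|_{\mathcal{V}_2}$. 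Since every finite $\mathbb{F}_2$-vector space belongs to $\mathcal{VS}$, the restriction $\prob{CRT}|_{\mathcal{V}_2}$ is a subproblem of $\prob{CRT}|_{\mathcal{VS}}$, which is in $\textup{P}$ by Theorem \ref{TEO: vector spaces is in P}. Chaining the two reductions proves the result.

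The only substantive step is the verification that $T$ really is an interpretation, i.e., that $\mathbf{A}^T$ satisfies the $\mathbb{F}_2$-vector space axioms for every $\mathbf{A} \in \mathcal{A}(c)$. These axioms translate to a handful of identities in $\{\mathsf{s}, c\}$ such as $\mathsf{s}(x,c,y) \approx \mathsf{s}(y,c,x)$, $\mathsf{s}(\mathsf{s}(x,c,y),c,z) \approx \mathsf{s}(x,c,\mathsf{s}(y,c,z))$, $\mathsf{s}(x,c,c) \approx x$, and $\mathsf{s}(x,c,x) \approx c$. Each is a substitution instance of an identity valid in $\langle 2, \mathsf{s}\rangle$ (for example, the last follows from $\mathsf{s}(x,y,x) \approx y$, which holds because $x+y+x = y$ in $\mathbb{F}_2$), and hence holds throughout $\mathcal{A}(c)$. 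I do not anticipate any real obstacle beyond these routine checks.
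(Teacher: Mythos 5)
Your proposal is correct and follows essentially the same route as the paper: pass from $\mathcal{A}$ to $\mathcal{A}(c)$ via Lemma \ref{LEM: interpret -> reduccion}(b), exhibit the affine/Mal'cev interpretation of $\mathbb{F}_2$-vector spaces in $\mathcal{A}(c)$ using $\mathsf{s}$ and the new constant, and then reduce to \prob{CRT$|_\mathcal{VS}$} through Lemma \ref{LEM: interpret -> reduccion}(a) and Theorem \ref{TEO: vector spaces is in P}. The paper phrases the interpretation by constructing the vector space $\mathbf{A}_e$ directly from a chosen element $e$, but this is the same construction as your explicit $T$.
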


\begin{proof}
Using the shorthand $x +_w y := \mathsf{s}(x,y,w)$, note that $\mathcal{A}$ satisfies the identities:
\begin{itemize}
\item $(x +_w y) +_w z \approx x +_w (y +_w z)$,
\item $x +_w y \approx y +_w x$,
\item $x +_w w \approx x$,
\item $x +_w x \approx w$.
\end{itemize}
Hence, given $\mathbf{A} \in \mathcal{A}$ and $e \in A$ the algebra $\mathbf{A}_e := \langle A, +_e, -, e, \lambda_0, \lambda_1\rangle$, where $-x := x$, $\lambda_0(x) := e$ and $\lambda_1(x) := x$, is a $\mathbb{Z}_2$-vector space. It follows that there is an interpretation of the variety of $\mathbb{Z}_2$-vector spaces in $\mathcal{A}(c)$, where is a new constant symbol. So, by Lemma \ref{LEM: interpret -> reduccion} and Theorem \ref{TEO: vector spaces is in P} we are done.
\end{proof}

\begin{lemma} \label{LEM: U01 es hard}
\prob{CRT$|_{\mathcal{U}}$} is \textup{coNP}-complete.
\end{lemma}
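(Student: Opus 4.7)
The plan is to reduce from the complement of \prob{CRT$|_\mathcal{S}$} to \prob{CRT$|_\mathcal{U}$}. The proof of Theorem \ref{TEO: hardness para sets} establishes that the complement of \prob{CRT$|_\mathcal{S}$} is \textup{NP}-hard, since the construction there produces a set $S$ with equivalence relations $\theta_1,\ldots,\theta_k$ (i.e.\ congruences of the pure set $\mathbf{S} \in \mathcal{S}$) and this reduction does not use any algebraic structure on $S$. What remains is to convert, in polynomial time, each such instance into an instance of \prob{CRT$|_\mathcal{U}$} in a way that preserves CR-tuple-ness.

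Given $\langle \mathbf{S},\theta_1,\ldots,\theta_k\rangle$, I would form the $\mathcal{U}$-algebra $\mathbf{A}$ as follows. Take a disjoint copy $\bar{S} := \{\bar{s}: s \in S\}$ of $S$, set $A := S \cup \bar{S}$, and define $\neg^\mathbf{A}$ to swap the two copies: $\neg s := \bar{s}$ and $\neg \bar{s} := s$. Pick an arbitrary $s_0 \in S$ and put $0^\mathbf{A} := s_0$, $1^\mathbf{A} := \bar{s_0}$. It is immediate that $\mathbf{A} \in \mathcal{U}$. For each $i \in \{1,\ldots,k\}$, lift $\theta_i$ to a relation $\theta_i'$ on $A$ by declaring $a \mathrel{\theta_i'} b$ iff $a,b \in S$ and $a \mathrel{\theta_i} b$, or $a,b \in \bar{S}$ and $\neg a \mathrel{\theta_i} \neg b$. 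Because $\neg$ swaps the two copies and each $\theta_i'$ respects the partition, the $\theta_i'$ are congruences of $\mathbf{A}$.

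The crux of the argument is that $\theta_i' \vee \theta_j'$ never relates an element of $S$ to one of $\bar{S}$, since the same already holds for each $\theta_i'$ individually, and transitive closure preserves this property. Hence every system $\langle \theta_1',\ldots,\theta_k',a_1,\ldots,a_k\rangle$ on $\mathbf{A}$ must have all $a_i$ in one fixed copy; by applying $\neg$ if necessary we may assume $a_1,\ldots,a_k \in S$, and then $\langle \theta_1,\ldots,\theta_k,a_1,\ldots,a_k\rangle$ is a system on $\mathbf{S}$, with the same solution set (since a solution $a \in A$ must be $\theta_1'$-related to $a_1 \in S$, forcing $a \in S$). This shows that $\langle \theta_1',\ldots,\theta_k'\rangle$ is a CR tuple of $\mathbf{A}$ iff $\langle \theta_1,\ldots,\theta_k\rangle$ is a CR tuple of $\mathbf{S}$, delivering the reduction.

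The main subtlety, and the place where one has to be careful, is ensuring that the added involution together with its orbit structure does not introduce new ways to join congruences. That is handled precisely by keeping the two copies disjoint under each $\theta_i'$, so that $\neg$'s orbits are never glued together by any of the congruences; this is what guarantees that systems in $\mathbf{A}$ decompose cleanly into systems in $\mathbf{S}$. Membership in \textup{coNP} is inherited from the base lemma on \prob{CRT}, so combined with hardness we obtain \textup{coNP}-completeness.
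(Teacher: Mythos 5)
Your construction is the same one the paper uses: double the base set with a disjoint copy, let $\neg$ swap the copies, set $0$ and $1$ to a fixed swapped pair, and extend each equivalence relation diagonally so that it respects the partition into the two copies. The paper leaves the verification to the reader, while you spell out the key point (joins never cross between $S$ and $\bar S$, so $\mathbf{A}$-systems decompose into $\mathbf{S}$-systems). One small slip in the opening: you say you will ``reduce from the complement of \prob{CRT$|_{\mathcal{S}}$} to \prob{CRT$|_{\mathcal{U}}$},'' but what you actually construct (correctly) is a many-one reduction \prob{CRT$|_{\mathcal{S}}$} $\leq_P$ \prob{CRT$|_{\mathcal{U}}$}, equivalently a reduction between the complements; that is what establishes coNP-hardness.
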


\begin{proof}
By the proof of Theorem \ref{TEO: hardness para sets} it suffices to show that \prob{CRT$|_\mathcal{S}$} $\leq_P$ \prob{CRT$|_{\mathcal{U}}$}. 

Given a set $A$ and any element $c \in A$ define the algebra $\mathbf{A}_c := \langle A \cup A', \neg, c, c' \rangle$ where
\begin{itemize}
\item $A' := \{a': a \in A\}$ is a set disjoint with $A$ such that $|A'| = |A|$;
\item $\neg a := a'$ and $\neg a' := a$ for all $a \in A$.
\end{itemize}
Since $\mathcal{U}$ is axiomatized by the identities $\neg \neg x \approx x$ and $\neg 0 \approx 1$, it is clear that $\mathbf{A}_c \in \mathcal{U}$. Also, given an equivalent relation $\theta$ on $A$ let $\tilde{\theta} := \theta_i \cup \{\langle a',b'\rangle: \langle a,b\rangle \in \theta\}$. We leave it to the reader to check that the transformation taking an instance $\langle A,\theta_1,\ldots,\theta_k\rangle$ of \prob{CRT$|_\mathcal{S}$} and an arbitrary $c \in A$ to $\langle \mathbf{A}_c,\tilde{\theta}_1,\ldots,\tilde{\theta}_k\rangle$ is in fact a polynomial reduction from \prob{CRT$|_{\mathcal{S}}$} to \prob{CRT$|_{\mathcal{U}}$}.
\end{proof}

\begin{proposition}
The restrictions of \prob{CRT} to the varieties $\mathcal{J}$, $\mathcal{J}_0$, $\mathcal{J}_1$, $\mathcal{J}_{01}$ are all polynomially equivalent.
\end{proposition}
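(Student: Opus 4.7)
The plan is to close a cycle of four reductions. Three of them come for free from Lemma \ref{LEM: interpret -> reduccion}(a) via the obvious ``forget the constants'' interpretations; the remaining one is the only piece that requires an explicit construction.

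For the easy direction, the identity map on $\vee$ (and on the shared constants) gives interpretations of $\mathcal{J}$ in $\mathcal{J}_0$ and $\mathcal{J}_1$, of $\mathcal{J}_0$ in $\mathcal{J}_{01}$, and of $\mathcal{J}_1$ in $\mathcal{J}_{01}$: in each case the reduct of any algebra of the richer variety is an algebra of the poorer one. By Lemma \ref{LEM: interpret -> reduccion}(a) this yields
\[
\text{\prob{CRT$|_{\mathcal{J}_{01}}$}} \leq_P \text{\prob{CRT$|_{\mathcal{J}_0}$}} \leq_P \text{\prob{CRT$|_{\mathcal{J}}$}}
\quad\text{and}\quad
\text{\prob{CRT$|_{\mathcal{J}_{01}}$}} \leq_P \text{\prob{CRT$|_{\mathcal{J}_1}$}} \leq_P \text{\prob{CRT$|_{\mathcal{J}}$}}.
\]
Hence it suffices to exhibit a polynomial reduction $\text{\prob{CRT$|_{\mathcal{J}}$}} \leq_P \text{\prob{CRT$|_{\mathcal{J}_{01}}$}}$, which closes the circle and gives the stated polynomial equivalence.

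For this final reduction I would use the freely-adjoined-bounds construction. Given a finite join-semilattice $\mathbf{A}$ with congruences $\theta_1,\ldots,\theta_k$, let $\hat{\mathbf{A}}$ be the bounded semilattice obtained by adjoining two brand-new elements $\bot,\top$ with $\bot \vee x := x$, $x \vee \top := \top$, and $\bot \vee \top := \top$; this lies in $\mathcal{J}_{01}$. Extend each $\theta_i$ to $\hat{\theta}_i := \theta_i \cup \{\langle \bot,\bot\rangle,\langle \top,\top\rangle\}$. A routine check confirms that each $\hat{\theta}_i$ is a congruence of $\hat{\mathbf{A}}$, and that $\hat{\theta}_i \vee \hat{\theta}_j = (\theta_i \vee \theta_j) \cup \{\langle \bot,\bot\rangle,\langle \top,\top\rangle\}$, since $\bot$ and $\top$ are singleton classes of every $\hat{\theta}_i$ and thus cannot be linked to anything else by the transitive closure of $\hat{\theta}_i \cup \hat{\theta}_j$.

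The heart of the verification is that $\langle \theta_1,\ldots,\theta_k\rangle$ is a CR tuple of $\mathbf{A}$ if and only if $\langle \hat{\theta}_1,\ldots,\hat{\theta}_k\rangle$ is a CR tuple of $\hat{\mathbf{A}}$. The key observation is that, because $\bot$ and $\top$ are isolated in each $\hat{\theta}_i \vee \hat{\theta}_j$, the chosen tuple $b_1,\ldots,b_k$ in any system on $\hat{\mathbf{A}}$ must satisfy one of three mutually exclusive alternatives: all $b_i$ equal $\bot$, all $b_i$ equal $\top$, or all $b_i$ lie in $A$. The first two alternatives are trivially solved by $\bot$ and $\top$ respectively, while the third case is literally a system on $\mathbf{A}$ (thanks to the description of the joins above), and any solution in $\hat{\mathbf{A}}$ of such a system must belong to $A$ because it is $\hat{\theta}_i$-related to $b_i \in A$ and the classes of $\bot,\top$ are singletons. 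The construction is plainly poly-time, so this completes the reduction. I do not anticipate any serious obstacle beyond the join-of-congruences bookkeeping sketched above.
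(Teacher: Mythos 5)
Your proof is correct and follows essentially the same route as the paper: first the four interpretation reductions via Lemma \ref{LEM: interpret -> reduccion}.(a), then a single explicit reduction \prob{CRT$|_{\mathcal{J}}$} $\leq_P$ \prob{CRT$|_{\mathcal{J}_{01}}$} by freely adjoining bounds and extending each congruence so that the new elements are singleton classes. The only (harmless) difference is that you adjoin a fresh top $\top$ in addition to the fresh bottom $\bot$, whereas the paper observes that any finite join-semilattice already has a greatest element (the join of all elements) and simply reuses it as the interpretation of $1$, so it only needs to adjoin a new $0$; either way the verification of the CR-tuple equivalence goes through exactly as you sketched.
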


\begin{proof}
Taking into account the obvious interpretaions among these four varieties, by Lemma \ref{LEM: interpret -> reduccion}.(a), we have that \prob{CRT$|_{\mathcal{J}_{01}}$} $\leq_P$ \prob{CRT$|_{\mathcal{J}_0}$} $\leq_P$ \prob{CRT$|_{\mathcal{J}}$} and \prob{CRT$|_{\mathcal{J}_{01}}$} $\leq_P$ \prob{CRT$|_{\mathcal{J}_1}$} $\leq_P$ \prob{CRT$|_{\mathcal{J}}$}. Thus, it suffices to show that \prob{CRT$|_{\mathcal{J}}$} $\leq_P$ \prob{CRT$|_{\mathcal{J}_{01}}$}. Consider the transformation taking an instance $\langle \mathbf{A},\theta_1,\ldots,\theta_k\rangle$ to $\langle \mathbf{A}^0,\theta_1^0,\ldots,\theta_k^0\rangle$, where $\mathbf{A}^0$ is the bounded semilattice obtained from $\mathbf{A}$ by adding a new bottom element $0$ and interpreting $1$ as the top element of $\mathbf{A}$, and $\theta_i^0 := \theta_i \cup \{\langle 0,0\rangle\}$ for $i \in \{1,\ldots,k\}$. We leave it to the reader to check that this is in fact a polynomial reduction from \prob{CRT$|_{\mathcal{J}}$} to \prob{CRT$|_{\mathcal{J}_{01}}$}.
\end{proof}

Sadly, we were not able to characterize the complexity of \prob{CRT$|_\mathcal{J}$}, and thus we propose the following.

\begin{problem*}
What is the complexity of \prob{CRT$|_\mathcal{J}$}?
\end{problem*}

With that out of the way, we present now our (almost) dichotomy theorem.

\begin{theorem}
Let $\mathcal{V}$ be a variety generated by the two-element algebra $\mathbf{A}$. Suppose $\mathcal{V}$ is not term-equivalent to any of $\mathcal{J}$, $\mathcal{J}_0$, $\mathcal{J}_1$, $\mathcal{J}_{01}$. Then, either
\begin{itemize}
\item $\mathsf{s}$, $\mathsf{n}$ or $\mathsf{m}$ is a term-operation of $\mathbf{A}$, and \prob{CRT$|_\mathcal{V}$} is in \textup{P}; or
\item $\mathcal{V}$ is interpretable in $\mathcal{U}$, and \prob{CRT$|_\mathcal{V}$} is \textup{coNP}-complete.
\end{itemize}
\end{theorem}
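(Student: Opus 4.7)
The plan is to apply Lemma \ref{LEM: Post}, which partitions the varieties generated by a two-element algebra into the five cases (a)--(e). The hypothesis of the theorem rules out case (e), so I only need to treat (a)--(d), and in each of these the conclusion follows almost for free from results already established earlier in the paper, combined with the two parts of Lemma \ref{LEM: interpret -> reduccion}.

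For the tractable conclusion, I would handle cases (a), (b), (c) uniformly. In case (a), $\mathcal{A}$ is interpretable in $\mathcal{V}$, so Lemma \ref{LEM: interpret -> reduccion}(a) yields \prob{CRT$|_\mathcal{V}$} $\leq_P$ \prob{CRT$|_\mathcal{A}$}, and the right-hand side is in \textup{P} by Lemma \ref{LEM: triple suma en P}. Case (b) is identical with $\mathcal{N}$ in place of $\mathcal{A}$, invoking Theorem \ref{TEO: CRT.N is in P}. For case (c), the key observation is that the ternary majority $\mathsf{m}$ on the two-element set is literally the dual discriminator there (both functions agree on all $8$ input triples in $\{0,1\}^3$); consequently $\mathcal{M}$ is a dual discriminator variety, every finite member of $\mathcal{M}$ lies in $\mathcal{D}$, and \prob{CRT$|_\mathcal{M}$} is a restriction of \prob{CRT$|_\mathcal{D}$}, which is in \textup{P} by Theorem \ref{TEO: CRT.D is in P}. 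Interpretability of $\mathcal{M}$ in $\mathcal{V}$ and Lemma \ref{LEM: interpret -> reduccion}(a) then give \prob{CRT$|_\mathcal{V}$} in \textup{P}.

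For case (d), $\mathcal{V}$ is interpretable in $\mathcal{U}$, so Lemma \ref{LEM: interpret -> reduccion}(a) — now applied in the opposite direction — gives \prob{CRT$|_\mathcal{U}$} $\leq_P$ \prob{CRT$|_\mathcal{V}$}. Since \prob{CRT$|_\mathcal{U}$} is \textup{coNP}-complete by Lemma \ref{LEM: U01 es hard}, \prob{CRT$|_\mathcal{V}$} is \textup{coNP}-hard, and it lies in \textup{coNP} by the initial observation that \prob{CRT} itself is in \textup{coNP}. Hence \prob{CRT$|_\mathcal{V}$} is \textup{coNP}-complete.

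There is no real obstacle here: the proof is essentially bookkeeping over the cases of Post's lattice. The only points that require mild care are (i) checking that $\mathsf{m}$ actually coincides with the dual discriminator on a two-element set, so that case (c) falls under Theorem \ref{TEO: CRT.D is in P} rather than needing a separate tractability argument, and (ii) tracking the direction of the reduction in Lemma \ref{LEM: interpret -> reduccion}(a) — it converts an interpretation of $\mathcal{V}$ in $\mathcal{V}'$ into a reduction \emph{from} \prob{CRT$|_{\mathcal{V}'}$} \emph{to} \prob{CRT$|_{\mathcal{V}}$} — which is used in one direction for (a)--(c) and in the opposite direction for (d).
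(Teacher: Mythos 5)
Your proof is correct and follows the same route as the paper: case analysis via Lemma \ref{LEM: Post}, interpretations converted to reductions via Lemma \ref{LEM: interpret -> reduccion}(a) in the appropriate direction for each case, and the previously established tractability/hardness results for $\mathcal{A}$, $\mathcal{N}$, $\mathcal{M}$ (via $\mathcal{D}$), and $\mathcal{U}$. The two points you flag — that $\mathsf{m}$ literally is the dual discriminator on a two-element set, and the direction of the reduction in Lemma \ref{LEM: interpret -> reduccion}(a) — are exactly the subtleties the paper's proof also handles (the former parenthetically).
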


\begin{proof}
If $\mathsf{s}$, $\mathsf{n}$ or $\mathsf{m}$ is a term-operation of $\mathbf{A}$, then $\mathcal{V}$ satisfies condition (a), (b) or (c) in Lemma \ref{LEM: Post}. So, it follows from Lemma \ref{LEM: interpret -> reduccion}.(a) together with Lemma \ref{LEM: triple suma en P}, Theorem \ref{TEO: CRT.N is in P}, and Theorem \ref{TEO: CRT.D is in P} that \prob{CRT$|_\mathcal{V}$} is in \textup{P}. (For (c) note that $\mathsf{m}$ is the dual discriminator function on 2.)

Otherwise, by Lemma \ref{LEM: Post} and our assumption on $\mathcal{V}$, we have that $\mathcal{V}$ is intepretable in $\mathcal{U}$, so Lemmas \ref{LEM: interpret -> reduccion}.(a) and \ref{LEM: U01 es hard} say that \prob{CRT$|_\mathcal{V}$} is \textup{coNP}-complete.
\end{proof}


\begin{thebibliography}{10}

\bibitem{Abbott67-ImplicationAlgebras}
J.~C. Abbott.
\newblock Implicational algebras.
\newblock {\em Bull. Math. Soc. Sci. Math. R. S. Roumanie}, 11(59):3--23
  (1968), 1967.

\bibitem{BakerPixley-PolyInterpCRT}
K.~A. Baker and A.~F. Pixley.
\newblock Polynomial interpolation and the {C}hinese remainder theorem for
  algebraic systems.
\newblock {\em Math. Z.}, 143(2):165--174, 1975.

\bibitem{Bergman02-ComputComplexCongruences}
C.~Bergman and G.~Slutzki.
\newblock Computational complexity of some problems involving congruences on
  algebras.
\newblock {\em Theoret. Comput. Sci.}, 270(1-2):591--608, 2002.

\bibitem{BurSan81-Book-CourseUnivAlg}
S.~Burris and H.~Sankappanavar.
\newblock {\em A course in universal algebra}.
\newblock Graduate texts in mathematics. Springer-Verlag, 1981.

\bibitem{FosterPixley64-CDimpliesFactCong}
A.~L. Foster and A.~F. Pixley.
\newblock Semi-categorical algebras. {II}.
\newblock {\em Math. Z.}, 85:169--184, 1964.

\bibitem{FriedPixley-DualDiscrimnator}
E.~Fried and A.~F. Pixley.
\newblock The dual discriminator function in universal algebra.
\newblock {\em Acta Sci. Math. (Szeged)}, 41(1-2):83--100, 1979.

\bibitem{GalatosEtAl07-Book-RL}
N.~Galatos, P.~Jipsen, T.~Kowalski, and H.~Ono.
\newblock {\em Residuated lattices: an algebraic glimpse at substructural
  logics}, volume 151 of {\em Studies in Logic and the Foundations of
  Mathematics}.
\newblock Elsevier B. V., Amsterdam, 2007.

\bibitem{Goldreich08-BookComplexity}
O.~Goldreich.
\newblock {\em Computational complexity}.
\newblock Cambridge University Press, Cambridge, 2008.
\newblock A conceptual perspective.

\bibitem{Halas06-SInearlattices}
R.~Hala\v{s}.
\newblock Subdirectly irreducible distributive nearlattices.
\newblock {\em Miskolc Math. Notes}, 7(2):141--146 (2007), 2006.

\bibitem{Pixley72-CompletenessArithmeticalAlg}
A.~F. Pixley.
\newblock Completeness in arithmetical algebras.
\newblock {\em Algebra Universalis}, 2:179--196, 1972.

\bibitem{Post41-PostLattice}
E.~L. Post.
\newblock {\em The {T}wo-{V}alued {I}terative {S}ystems of {M}athematical
  {L}ogic}.
\newblock Annals of Mathematics Studies, No. 5. Princeton University Press,
  Princeton, N. J., 1941.

\end{thebibliography}
\bibliographystyle{abbrv}

\end{document}